\newcommand\numberthis{\addtocounter{equation}{1}\tag{\theequation}}
\definecolor{darkblue}{rgb}{0.1,0.1,0.8}
\definecolor{brickred}{rgb}{0.8, 0.25, 0.33}
\definecolor{DarkGreen}{rgb}{0,0.6,0}
\newtheorem{theorem}{Theorem}
\newtheorem{lem}{Lemma}
\newtheorem{proposition}{Proposition}
\theoremstyle{definition}
\newtheorem{definition}{Definition}
\newtheorem{example}{Example}
\newtheorem*{prob*}{Problem}
\newtheorem{remark}{Remark}
\def\old@comma{,}
    \old@comma\discretionary{}{}{}%
\newcommand{\add}[1]{\iftoggle{Blue_revision}{{\color{darkblue}#1}}{#1 }}
\global\long\def\EE{\mathbb{E}}
\global\long\def\PP{\mathbb{P}}
\global\long\def\11{\mathbbm{1}}
\global\long\def\+{\oplus}
\DeclareMathOperator*{\medcap}{\mathbin{\scalebox{1}{\ensuremath{\bigcap}}}}
\DeclareMathOperator*{\medcup}{\mathbin{\scalebox{1}{\ensuremath{\bigcup}}}}
\def\deq{\mathrel{\ensurestackMath{\stackon[1pt]{=}{\scriptstyle\Delta}}}}
\global\long\def\tensor{\otimes}
 \def\tr{\Tr}
 \def\id{\text{id}}
\global\long\def\bigtensor{\bigotimes}
\def\phirho{\Psi_{RAB}^{\rho_{AB}}}
\def\phirhon{\Psi_{R^nA^nB^n}^{\rho_{AB}}}
\def\delequal{\mathrel{\ensurestackMath{\stackon[1pt]{=}{\scriptscriptstyle\Delta}}}}
\renewcommand*\footnoterule{}
\newacro{qc}[q-c]{quantum-to-classical }
\def\PiA{\Pi_{\rho_A}}
\def\PiB{\Pi_{\rho_B}}
\def\PihatA{\hat{\Pi}^A}
\def\PihatB{\hat{\Pi}^B}
\def\PiuA{\Pi_{u^n}^A}
\def\PivB{\Pi_{v^n}^B}
\def\lambdauA{\lambda^A_{u^n}}
\def\lambdavB{\lambda^B_{v^n}}
\def\lambdaAB{\lambda^{AB}_{u^n,v^n}}
\def\rhohatuA{\hat{\rho}^A_{u^n}}
\def\rhohatvB{\hat{\rho}^B_{v^n}}
\def\rhohatuAvB{\hat{\rho}^{AB}_{u^n,v^n}}
\def\LambdauA{\Lambda^A_{u^n}}
\def\LambdavB{\Lambda^B_{v^n}}
\def\TDeltan{\mathcal{T}_{\delta}^{(n)}}
\def\IndiU1{\mathbbm{1}_{\left\{U^{n,(\mu_1)}(l)=u^n\right\}} }
\def\IndiV1{\mathbbm{1}_{\left\{V^{n,(\mu_2)}(k)=v^n\right\}} }
\def\TDeltaN{\mathcal{T}_{\delta}^{(n)}}
\def\footnoterule{\kern-3\p@
  \hrule \@width 6.5in \kern 2.6\p@} 
\begin{document}
	
	
\setlength{\baselineskip}{16pt}

	\title{  \huge Faithful Simulation of Distributed Quantum Measurements with Applications in Distributed Rate-Distortion Theory}
	
	\author{Touheed Anwar Atif \IEEEauthorrefmark{1}, Mohsen Heidari \IEEEauthorrefmark{2} and  S.\ Sandeep Pradhan \IEEEauthorrefmark{1},\\
      {\IEEEauthorrefmark{1} 
EECS Dept., University of  Michigan, Ann Arbor}, MI. \\
 {\IEEEauthorrefmark{2}  CS Dept., Purdue University, West Lafayette, IN. \\
 \IEEEauthorrefmark{1}\tt \{touheed, pradhanv\}@umich.edu}, \IEEEauthorrefmark{2}\tt mheidari@purdue.edu

\thanks{This work was presented in part at IEEE International Symposium on Information Theory (ISIT) 2019.}}

%

		
		\maketitle

		\begin{abstract}

  We consider the task of faithfully simulating a distributed quantum measurement, wherein we provide a protocol for the three parties, Alice, Bob and Eve, to simulate a repeated action of a distributed quantum measurement using a pair of non-product approximating measurements by Alice and Bob, followed by a stochastic mapping at Eve. The objective of the protocol is to utilize minimum resources, in terms of classical bits needed by Alice and Bob to communicate their measurement outcomes to Eve, and the common randomness shared among the three parties, while faithfully simulating independent repeated instances of the original measurement. To achieve this, we develop a mutual covering lemma and a technique for random binning of distributed quantum measurements, and, in turn, characterize a set of sufficient communication and common randomness rates required for 
  asymptotic simulatability in terms of single-letter quantum information quantities. Furthermore, using these results we address a distributed quantum rate-distortion problem, where we characterize the achievable rate distortion region through a single-letter inner bound.
  Finally, via a technique of single-letterization of multi-letter quantum information quantities, we provide an outer bound for the rate-distortion region.

		\end{abstract}
		

	
	
	

{\hypersetup{
colorlinks=true, %
 pdfstartview={FitH},
    linkcolor=black,
    citecolor=blue, 
    urlcolor={blue!80!black}
}
	\tableofcontents}
	\setcounter{tocdepth}{2}
	\newpage 
\section{Introduction}\label{S1}
    \IEEEPARstart{M}easurements interface the intricate quantum world with the perceivable macroscopic classical world by associating a classical attribute to a quantum state. However, quantum phenomena, such as superposition, entanglement and non-commutativity contribute to uncertainty in the measurement outcomes. A key concern, from an information-theoretic standpoint, is to quantify the amount of  ``relevant information" conveyed by a measurement about a quantum state.  
    
Winter's  measurement compression theorem \cite{winter} (also elaborated in \cite{wilde_e}) quantifies the ``relevant information" as the amount of resources needed to simulate the output of a quantum measurement applied on a given state in an asymptotic sense. Imagine that an agent (Alice) performs a measurement $M$ on a quantum state $\rho$, and sends a set of classical bits to a receiver (Bob). Bob intends to \textit{faithfully} recover the outcomes of Alice’s measurements without having access to $\rho$. The measurement compression theorem states that at least  quantum mutual information ($I(X;R)$) amount of classical information and conditional entropy ($S(X|R)$) amount of common shared randomness are needed to obtain a\textit{ faithful simulation}, where $R$ denotes a reference of the quantum state, and $X$ denotes the auxiliary register corresponding to the random measurement outcome.
  Wilde et al. \cite{wilde_e} extended  the measurement compression problem by considering 
additional resources available to each of the participating parties. One such formulation allows Bob to further process the information received from Alice using local private randomness.
In analogy with \cite{bennett2009quantum}, this problem formulation is referred to as non-feedback measurement simulation, while the former is termed as simulation with feedback. This quantified the benefit of private randomness in terms of enhancing the trade-off between classical bits communicated and common random bits consumed.
In particular, the use of private randomness increases the requirement of classical communication bits, while reducing the common randomness constraint.
     
The measurement compression theorem finds applications in several paradigms including local purity distillation \cite{wilde_e} and private classical communication over quantum channels \cite{pcc}. This theorem was later used by Datta, et al. \cite{qtc}   to develop a \ac{qc}
rate-distortion theory. The problem involved lossy compression of a quantum information source into classical bits, with the task of compression performed by applying a measurement on the source. In this problem, the objective is to minimize the storage of the classical outputs resulting from the measurement, while being able to recover the quantum state (from classical bits) within a fixed level of distortion as measured by an observable. To achieve this, the authors in \cite{datta} advocated the use of 
measurement compression protocol, and subsequently characterized the so-called rate-distortion function in terms of single-letter quantum mutual information quantities.  The authors further established that by employing a naive approach of measuring individual output of the quantum source, and then applying Shannon's rate-distortion theory to compress the classical data obtained is insufficient to achieve optimal rates. Further, the problem of measurement compression in the presence of quantum side information was studied in \cite{wilde_e}. The authors here combined the ideas from \cite{winter} and \cite{devetak2003classical} to reduce the classical communication rate and common randomness needed to simulate a measurement in presence of quantum side information. Recently, authors in \cite{anshu2019convex} came up with a completely different technique for analyzing the measurement simulation protocols, while considering the problem of quantum measurement compression with side information. They provide a protocol based on convex-split and position based decoding, and bound rates from above in terms of smooth max and hypothesis testing relative entropies (defined in \cite{anshu2019convex}).

    \begin{figure}[hbt]
	\centering
	{\includegraphics[scale=1.2]{}}
	\label{compressionDiagram}
    \caption{The diagram of a distributed quantum measurement applied to a bipartite quantum system $AB$. A tensor product measurement $M_A\tensor M_B$ is performed on many copies of the observed quantum state. The outcomes of the measurements are given by two classical bits. The receiver functions as a classical-to-quantum channel $\beta$ mapping the classical data to a quantum state.} 
    \label{fig:dist_measurements}
    \end{figure}

    In this work, we consider
     scenarios where the quantum measurements are performed in a distributed fashion on bipartite entangled states, and quantify ``relevant information" for these distributed quantum measurements in an asymptotic sense. As shown in Fig. \ref{fig:dist_measurements}, a composite bipartite quantum system $AB$ is made available to two agents, Alice and Bob, where they have access to the sub-systems $A$ and $B$, respectively. Two separate measurements, one for each sub-system, are performed in a distributed fashion with no communication taking place between Alice and Bob. Imagine that there is a third party, Eve, who is connected to Alice and Bob via two separate classical links. The objective of the three parties is to simulate the action of repeated independent measurements performed on many independent copies of the given composite state.
    To achieve this objective, Alice and Bob send classical bits to Eve at rate $R_1$ and $R_2$, respectively. Further, common randomness at rate $C$ is also shared amidst the three parties. Eve performs classical processing of the received bits and common randomness. We study two settings, based on whether or not Eve has access to private randomness. As an application of this quantification, we consider the quantum-to-classical distributed rate distortion problem where Eve is allowed to use classical-to-quantum channels. In this work, we focus on memoryless  quantum systems in finite dimensional Hilbert spaces. 
    We summarize the contributions of this work in the following: 
    \begin{itemize}
        \item We formulate the problem of faithful simulation of distributed quantum measurements  that can be decomposed as a convex-linear combination (incorporating Eve's stochastic processing) of separable measurements, as stated in Definition \ref{def:Joint Measurements}. The asymptotic  performance limit for this problem is given by the set of all communication rates $(R_1, R_2)$ and all common randomness rates $C$, referred to as the achievable rate region, under which the above-stated measurement is distributively simulated. 
        We devise a distributed simulation protocol for this problem, and provide a quantum-information theoretic inner bound to the achievable rate region in terms of computable single-letter information quantities  (see Theorem \ref{thm:nf_dist POVM appx}). This is the first main result of the paper.
        \item As an immediate application of our results on the simulation of distributed measurements, we 
        develop an approach for a distributed quantum-to-classical rate distortion theory, where the objective is to reconstruct a quantum state at Eve, with the quality of reconstruction measured using an additive distortion observable. The asymptotic performance limit is given by the set of all communication rate pairs $(R_1, R_2)$ at which the distortion $D$ is achieved. For the achievability part, we characterize an inner bound
        in terms of single-letter quantum mutual information quantities (see Theorem \ref{thm:q-c dist scr}). This is the second main result of the paper. The classical version of this result is called the Berger-Tung inner bound \cite{berger}. 
        
        \item We then develop a technique for deriving converse bounds  based on a combination of \textit{tensor-product} and \textit{direct-sum} Hilbert spaces (also referred to as  a multi-particle system). Using this technique, we derive a single-letter outer-bound on the optimal rate distortion region (see Theorem \ref{thm:dist str outer}), by converting a multi-letter expression into a single-letter expression. This is the third main result of the paper.
\end{itemize}

The organization of the paper is as follows. In Section \ref{sec:prelim}, we set the notation and state requisite definitions. 
    Instead of first presenting the above-stated inner bound to the performance limits of  the problem of simulation of distributed measurements in its full generality, for pedagogical reasons, in Section \ref{sec:Appx_POVM}, we first consider a special case, where the processing at Eve is restricted to a deterministic function, and provide a simple proof based on the application of Winter's measurement compression theorem.  In this setup, we compress each individual measurements $M_A$ and $M_B$, comprising the decomposing of $M_{AB}$. As a result, faithful simulation of $M_A$ is possible when at least $nI(U;R_A)$ classical bits of communication and $n S(U|R_A)$ bits of common randomness are available between Alice and Eve. Similarly, a faithful simulation of $M_B$ is possible with $nI(V;R_B)$ classical bits of communication and  $nS(V|R_B)$ bits of common randomness between Eve and Bob, where $R_A$ and $R_B$ are purifications of the sub-systems $A$ and $B$, respectively, and $U$ and $V$ denote the auxiliary registers corresponding to their measurement outcomes. The challenge here is that the direct use of single-POVM compression theorem for each individual POVMs, $M_A$ and $M_B$, does not necessarily ensure a “distributed” faithful simulation of the overall measurement, $M_{AB}$. To accomplish this,  we develop a Mutual Covering Lemma (see Lemma \ref{lem:mutual covering}), which also helps in converting the information quantities in terms of the reference $R$ of the joint state $\rho_{AB}$.
 \par Further, an interesting aspect about the distributed setting is that one can further reduce the amount of classical communication by exploiting the statistical correlations between Alice's and Bob's measurement outcomes. The challenge here is that the classical outputs of the approximating POVMs (operating on $n$ copies of the state $\rho_{AB}$) are not independent identically distributed (IID) sequences --- rather they are codewords generated from random coding. For this we develop a proposition for mutual packing (Proposition \ref{prop:Lemma for S_3}), that characterizes the binning rates in term of single-letter information quantities.
This issue also arises in classical distributed source coding problem which was addressed by Wyner-Ahlswede-K\"orner \cite{berger} by developing Markov lemma and Mutual packing lemma. 
The idea of binning in quantum setting has been explored from a different perspective in \cite{sideInf} and \cite{Quantum_Slepian_Wolf} for quantum data compression involving side information.
Toward the end of the section, we also provide an example to illustrate the inner bound to the achievable rate region. 

In Section \ref{sec:QC_dist scr coding}, we apply this special setting of the distributed measurement simulation with deterministic processing to the q-c distributed rate distortion problem. Since, the proof of the inner bound of this rate distortion problem requires only the special case of distributed measurement simulation, this is another reason for providing the special case in the previous section. 

In Section \ref{sec:nf_faithfulsimulation}, we consider the 
non-feedback measurement compression problem for the point-to-point setting. The authors in \cite{wilde_e} have discussed this formulation and provided a rate region with a proof of achievability and converse. 
However, the assumed equations  (53)  and (54) in proving the direct part (see \cite{wilde_e}) do not appear to be true, to the best to our knowledge, but only in an average sense. A stronger version of this theorem is also developed in \cite{berta2014identifying} using a different technique, wherein the authors have extended the Winter's measurement compression for fixed independent and identically distributed inputs \cite{winter} to arbitrary inputs.  Since the result is crucial for the distributed simulation problem with stochastic processing, to be described in the next section (Section \ref{sec:nf_dist_POVM}), we formally state the problem and provide an alternative proof of the direct part for completeness (see Theorem \ref{thm:nf_faithfulSim}).

Finally, the above proof of non-feedback simulation in the point-to-point setting provides us with necessary tools for the next task, namely, distributed quantum measurement simulation with stochastic processing. 
The objective of incorporating the additional processing at the decoder is to reduce the required shared randomness. Our objective in the distributed problem, considered in Section \ref{sec:Appx_POVM}, was to simulate $M_A\tensor M_B$. We achieve this by proving that a pair of POVMs that can faithfully simulate $M_A$ and $M_B$ individually, can also faithfully simulate $M_A\tensor M_B$ (Lemma \ref{lem:mutual covering}). However, it will be shown that, because of the presence of Eve's stochastic processing, decoupling the current problem into two symmetric point-to-point problems is not feasible. Therefore, we perform a non-symmetric partitioning  while being analytically tractable. Moreover, we provide a single-letter achievable inner bound that is symmetric with respect to Alice and Bob. We conclude the paper with a few remarks in Section \ref{sec:conclusion}.


	\section{Preliminaries } \label{sec:prelim}
We here establish all our notations, briefly state few necessary definitions, and also provide Winter's theorem on measurement compression. \\
\noindent \textbf{Notation:} Given any natural number $M$, let the finite set $\{1, 2, \cdots, M\}$ be denoted by $[1,M]$. Let $\mathcal{B(H)}$ denote the algebra of all bounded linear operators acting on a finite dimensional Hilbert space $\mathcal{H}$. Further, let $ \mathcal{D(H)} $ denote the set of all unit trace positive operators acting on $ \mathcal{H} $. Let $I$ denote the identity operator. The trace distance between two operators $A$ and $B$ is defined as $\|A-B\|_1\deq \tr|A-B|$, where for any operator $\Lambda$ we define $|\Lambda|\deq \sqrt{\Lambda^\dagger \Lambda}$. The von Neumann entropy of a density operator $\rho \in \mathcal{D}(\mathcal{H})$ is denoted by $S(\rho)$.  The quantum mutual information for a bipartite density operator $\rho_{AB} \in \mathcal{D}(\mathcal{H}_A \otimes \mathcal{H}_B)$ is defined as 
\begin{align*}
I(A;B)_{\rho}&\deq S(\rho_{A})+S(\rho_{B})-S(\rho_{AB}).
\end{align*} 
Given any ensemble $\{p_i, \rho_i\}_{i\in [1,m]}$, the Holevo information, as in \cite{holevo}, is defined as 
\begin{align*}
    \chi \big( \{p_i, \rho_i\}\big) \deq S\Big(\sum_i p_i \rho_i\Big) - \sum_i p_i S(\rho_i).
\end{align*}
 A positive-operator valued measure (POVM) acting on a Hilbert space $\mathcal{H}$ is a collection $M\deq \{\Lambda_x\}_{x \in \mathcal{X}}$ of positive operators in $\mathcal{B}(\mathcal{H})$ that form a resolution of the identity:
\begin{align*}
\Lambda_x\geq 0, \forall x \in \mathcal{X}, \qquad \sum_{x \in \mathcal{X}} \Lambda_x=I.
\end{align*}
where $\mathcal{X}$ is a finite set.
If instead of the equality above, the inequality $\sum_x \Lambda_x \leq I$ holds, then the collection is said to be a sub-POVM. A sub-POVM $M$ can be completed to form a POVM, denoted by $[M]$, by adding the operator $\Lambda_{0}\deq (I-\sum_x \Lambda_x )$ to the collection.  Let $\Psi^\rho_{RA}$ denote a purification of a density operator $\rho \in D(\mathcal{H}_A)$. Given a POVM $M\deq \{\Lambda^A_x\}_{x \in \mathcal{X}}$ acting on  $\rho$, the post-measurement state of the reference together with the classical outputs is represented by 
\begin{equation}\label{eq:1}
     (\text{id} \tensor M)(\Psi^\rho_{RA})\deq \sum_{x\in \mathcal{X}} \ketbra{x}\tensor \tr_{A}\{(I^R \tensor \Lambda_x^A ) \Psi^\rho_{RA} \}.
\end{equation} 
Consider two POVMs $M_A=\{\Lambda^A_x\}_{x \in \mathcal{X}}$ and $M_B=\{\Lambda^B_y\}_{y \in \mathcal{Y}}$ acting on $\mathcal{H}_A$ and $\mathcal{H}_B$, respectively. Define $M_A\tensor M_B \deq \{\Lambda^A_x\tensor \Lambda^B_y\}_{x\in \mathcal{X},y\in \mathcal{Y}}$
With this definition, $M_A\tensor M_B$ is a POVM acting on $\mathcal{H}_A\tensor \mathcal{H}_B$. By $M^{\tensor n}$ denote the $n$-fold tensor product of the POVM $M$ with itself.  
\begin{definition}[Joint Measurements]\label{def:Joint Measurements}A POVM $M_{AB} = \{\Lambda^{AB}_z\}_{z \in \mathcal{Z}}$, acting on the joint state $\rho_{AB} \in \mathcal{D}(\mathcal{H}_{A}\tensor \mathcal{H}_B)$, is said to have a separable decomposition with stochastic integration if there exist POVMs $\bar{M}_A=\{\bar{\Lambda}^A_{u}\}_{u \in \mathcal{U}}$ and $\bar{M}_B=\{ \bar{\Lambda}^B_{v}\}_{v \in \mathcal{V}}$ and a stochastic mapping $P_{Z|U,V}: \mathcal{U}\times\mathcal{V}\rightarrow\mathcal{Z}$ such that 
    \begin{equation*}
    \Lambda^{AB}_{z}=\sum_{u,v} P_{Z|U,V}(z|u,v) \bar{\Lambda}^A_{u}\tensor \bar{\Lambda}^B_{v}, \quad  \forall z \in \mathcal{Z},
    \end{equation*} 
where $\mathcal{U}, \mathcal{V}$ and $\mathcal{Z}$ are some finite sets.
Further, if the mapping $P_{Z|U,V}$ is a deterministic function then the POVM is said to have a separable decomposition with deterministic integration. 
\end{definition}

\noindent{ \textbf {Measurement Compression Theorem:}} \label{sec:Winter_compression}
Here, we provide a brief overview of the measurement compression theorem \cite{winter}.
A key concern, from an information-theoretic standpoint, is to quantify the amount of  “relevant information” conveyed by a measurement about a quantum state.  Winter quantified “relevant information” by measuring the minimum amount of classical information bits needed to “simulate” the repeated action of a measurement $M$ on a quantum state $\rho$. In this context, an agent (Alice) performs an approximating measurement \(\tilde{M}^{(n)}\) on a quantum state \(\rho^{\tensor n}\) and sends a set of classical bits to a receiver (Bob). In addition, Alice and Bob share some amount of common randomness. Bob intends to faithfully recover the outcomes of the original  measurement $M$ without having access to the quantum state based on the bits received from Alice and the common randomness. The objective is to minimize the rate of classical bits under the constraint that the recovered and the original outcomes be statistically indistinguishable. This is formally defined in the following.



\begin{definition}[Faithful simulation \cite{wilde_e}]\label{def:faith-sim}
Given a POVM ${M}\deq \{\Lambda_x\}_{x\in \mathcal{X}}$ acting on a Hilbert space $\mathcal{H}_A$ and a density operator $\rho\in \mathcal{D}(\mathcal{H}_A)$, a sub-POVM $\tilde{M}\deq \{\tilde{\Lambda}_{x}\}_{x\in \mathcal{X}}$ acting on $\mathcal{H}_A$ is said to be $\epsilon$-faithful to $M$ with respect to $\rho$, for $\epsilon > 0$, if the following holds: 
\begin{equation}\label{eq:faithful-sim-cond-1_2}
\sum_{x\in \mathcal{X}} \Big\|\sqrt{\rho} (\Lambda_{x}-\tilde{\Lambda}_{x}) \sqrt{\rho}\Big\|_1+\tr\left\{(I-\sum_{x} \tilde{\Lambda}_{x})\rho\right\}   \leq \epsilon.
\end{equation}
\end{definition}
The above trace norm constraint can be equivalently expressed in terms of a purification of state $\rho $ using the following lemma.
\begin{lem}\emph{\cite{wilde_e}}\label{lem:faithful_equivalence} 
For any state $\rho\in \mathcal{D}(\mathcal{H})$ with any purification $\Psi^{\rho}_{RA}$, and any pair of POVMs $M$ and $\tilde{M}$ acting on $\mathcal{H}$, the following identity holds 
\begin{equation}\label{eq:faithful_equivalence}
\|(\emph{\id} \tensor M)(\Psi^{\rho}_{RA}) - (\emph{\id}\tensor \tilde{M})(\Psi^{\rho}_{RA}) \|_1=\sum_{x} \|\sqrt{\rho} (\Lambda_{x}-\tilde{\Lambda}_{x}) \sqrt{\rho}\|_1,
\end{equation}
where $\Lambda_x$ and $\tilde{\Lambda}_x$ are the operators associated with $M$ and $\tilde{M}$, respectively.
\end{lem}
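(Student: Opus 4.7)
The plan is to reduce the displayed identity to a per-outcome statement by exploiting the block-diagonal (classical-quantum) structure of the post-measurement states, and then to evaluate each block using a Schmidt decomposition of the purification. First I would write, via \eqref{eq:1},
\[
(\id \otimes M)(\Psi^\rho_{RA}) - (\id \otimes \tilde M)(\Psi^\rho_{RA}) = \sum_{x\in\mathcal X} \ketbra{x} \otimes \omega_x, \qquad \omega_x \deq \tr_A\!\left\{(I^R \otimes (\Lambda_x-\tilde\Lambda_x)^A)\,\Psi^\rho_{RA}\right\}.
\]
Because the operators $\ketbra{x}$ are mutually orthogonal projectors on the classical register, the argument above is block-diagonal with blocks $\omega_x$. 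A standard fact for block-diagonal operators then gives $\|\sum_x \ketbra{x}\otimes \omega_x\|_1 = \sum_x \|\omega_x\|_1$, and this collapses the LHS to $\sum_x \|\omega_x\|_1$.

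The remaining task is therefore the per-outcome equality $\|\omega_x\|_1 = \|\sqrt{\rho}\,(\Lambda_x-\tilde\Lambda_x)\,\sqrt{\rho}\|_1$ for every $x$. I would fix a Schmidt decomposition $|\Psi^\rho_{RA}\rangle = \sum_i \sqrt{\lambda_i}\,|i\rangle_R|\phi_i\rangle_A$, with $\rho = \sum_i \lambda_i \ketbra{\phi_i}$, and compute, for an arbitrary operator $X$ on $\mathcal H_A$,
\[
\tr_A\!\left\{(I^R\otimes X)\,\Psi^\rho_{RA}\right\} \;=\; \sum_{i,j} \sqrt{\lambda_i\lambda_j}\, \langle \phi_j | X | \phi_i\rangle\, |i\rangle\langle j|_R.
\]
In the basis $\{|i\rangle_R\}$ on the reference and $\{|\phi_i\rangle\}$ on the system, this operator is precisely the transpose of the matrix $\sqrt{\rho}\,X\sqrt{\rho}$ expressed in the basis $\{|\phi_i\rangle\}$, whose $(i,j)$ entry is $\sqrt{\lambda_i\lambda_j}\,\langle \phi_i|X|\phi_j\rangle$. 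Since the trace norm depends only on singular values and is invariant under transposition, $\|\tr_A\{(I\otimes X)\Psi^\rho_{RA}\}\|_1 = \|\sqrt{\rho}\,X\,\sqrt{\rho}\|_1$. Specializing to $X = \Lambda_x - \tilde\Lambda_x$ and summing over $x$ yields the claim.

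The main (minor) obstacle to watch is the transpose that appears in translating partial-trace expressions on the reference into operator expressions on the system — one must be careful not to write $X$ where actually $X^T$ appears, and then invoke transpose-invariance of $\|\cdot\|_1$ to conclude. I would also note independence of the choice of purification: any two purifications are related by an isometry $U_R$ on the reference, and conjugation by $U_R \otimes I_A$ on the cq-difference operator leaves the trace norm invariant, so no generality is lost by passing to the Schmidt form. No new inequalities or smoothing arguments are needed; the proof is essentially an algebraic identity together with two uses of unitary/transpose invariance of $\|\cdot\|_1$.
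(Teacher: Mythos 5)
Your proposal is correct. The paper itself offers no proof of Lemma~\ref{lem:faithful_equivalence} (it is cited directly from \cite{wilde_e}), so there is no internal argument to compare against; nonetheless, your two steps --- first collapsing the 1-norm of the block-diagonal cq-operator $\sum_x \ketbra{x}\otimes\omega_x$ to $\sum_x\|\omega_x\|_1$ via orthogonality of $\{\ketbra{x}\}$, and then identifying $\omega_x = \tr_A\{(I\otimes(\Lambda_x-\tilde\Lambda_x))\Psi^\rho_{RA}\}$ as the transpose of $\sqrt{\rho}(\Lambda_x-\tilde\Lambda_x)\sqrt{\rho}$ in the Schmidt bases, concluding by transpose-invariance of $\|\cdot\|_1$ --- are exactly the standard route, and the same transposition identity is in fact used explicitly by the authors later in \eqref{eq:traceTranspose} when proving Lemma~\ref{lem:Separate}. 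Your remark about purification-independence (two purifications differ by an isometry on the reference, which cannot change the nonzero singular values of $\omega_x$) is also correct and worth keeping, since the Schmidt form is invoked without loss of generality.
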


\begin{theorem}\emph{\cite{winter}}\label{extrinsic}
For any ${\epsilon} >0 $, any density operator $\rho\in \mathcal{D}(\mathcal{H}_A)$ and any POVM $M$ acting on the Hilbert space $\mathcal{H}_A$, there exist a collection of POVMs $\tilde{M}^{(n,\mu)}$  for $\mu \in [1,N]$, each acting on $\mathcal{H}_A^{\otimes n}$, and having at most $2^{nR}$ outcomes such that $\tilde{M}^{(n)} \deq\frac{1}{N}\sum_{\mu}\tilde{M}^{(n,\mu)}$ is  $\epsilon$-faithful to $M^{\tensor n}$ with respect to $\rho^{\tensor n}$ if
\begin{align*}
    & R \geq I(U;R)_\sigma + \delta(\epsilon),~~ ~~ \frac{1}{n}\log_2{N}+R \geq S(U)_\sigma + \delta(\epsilon), 
\end{align*}
where  $\sigma_{RU}\deq (\emph{\id}\tensor M)(\Psi^{\rho}_{RA}) $, and $\delta(\epsilon)\ssearrow 0$ as $\epsilon\ssearrow 0$.
\end{theorem}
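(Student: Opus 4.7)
The plan is to follow the standard measurement compression blueprint originating with Winter: combine typical-subspace methods, random coding, and operator Chernoff concentration to build a random ensemble of POVMs whose average faithfully simulates $M^{\otimes n}$ in the sense of Definition~\ref{def:faith-sim}. Applying $M=\{\Lambda_x\}_{x\in\mathcal{X}}$ to $\rho$ induces a classical distribution $p_U(x)=\tr(\Lambda_x\rho)$ together with a conditional ensemble on the purifying reference $R$; the quantities $I(U;R)_\sigma$ and $S(U)_\sigma$ in the theorem are then the Holevo information and the output entropy of this induced ensemble. I would work throughout on the strongly typical set $\mathcal{T}_\delta^{(n)}$, using the unconditional and conditional typical projectors $\Pi_\rho$ and $\Pi_\rho^{u^n}$ associated with $\rho^{\otimes n}$ as the main book-keeping tools.

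For the construction, I would sample i.i.d.\ codewords $U^n(l,\mu)$ from $p_U^{\otimes n}$ (pruned to $\mathcal{T}_\delta^{(n)}$) for $(l,\mu)\in[1,2^{nR}]\times[1,N]$, and for each value of the shared randomness $\mu$ define a sub-POVM $\tilde M^{(n,\mu)}=\{\tilde\Lambda_l^{(n,\mu)}\}_l$ whose operators are typically-pinched, suitably rescaled product elements of $M^{\otimes n}$, schematically
\begin{equation*}
\tilde\Lambda_l^{(n,\mu)} \;=\; \gamma_n\, \Pi_\rho\, \Pi_\rho^{U^n(l,\mu)}\, \Lambda_{U^n(l,\mu)}\, \Pi_\rho^{U^n(l,\mu)}\, \Pi_\rho,
\end{equation*}
with the scalar $\gamma_n$ of order $2^{-n[S(U)-R]}$ chosen so that the expected sum reproduces a pinched version of the $\rho^{\otimes n}$-typical identity. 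Averaging over $\mu$ yields $\tilde M^{(n)}=\tfrac{1}{N}\sum_\mu \tilde M^{(n,\mu)}$, which I then complete to a POVM per $\mu$ by appending the deficit operator $I-\sum_l \tilde\Lambda_l^{(n,\mu)}$.

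The analysis splits along the two rate constraints. First, with $R \geq I(U;R)_\sigma + \delta(\epsilon)$, conditionally on each $\mu$ the operator Chernoff bound would imply that $\sum_l \tilde\Lambda_l^{(n,\mu)}$ is sandwiched between $(1-\epsilon)\Pi_\rho$ and $(1+\epsilon)\Pi_\rho$ in operator order with probability exponentially close to $1$, so each $\tilde M^{(n,\mu)}$ is a legitimate sub-POVM whose completion residual carries only $o(1)$ weight against $\rho^{\otimes n}$. Second, the sum-rate condition $\tfrac{1}{n}\log_2 N + R \geq S(U)_\sigma + \delta(\epsilon)$ ensures that averaging over the $N$ codebooks suffices to cover the full support of the induced ensemble on the typical set, so that $\mathbb{E}[\tilde M^{(n)}]$ agrees, up to $\epsilon$, with a typicality-pinched version of $M^{\otimes n}$. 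Converting this into the faithfulness condition via Lemma~\ref{lem:faithful_equivalence}, using the gentle operator lemma to undo the typical-projector pinching and then derandomizing over the random code, would produce the desired deterministic family $\{\tilde M^{(n,\mu)}\}_\mu$ satisfying \eqref{eq:faithful-sim-cond-1_2}.

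The hard part, and the source of the correction terms $\delta(\epsilon)$, is the operator Chernoff step: one has to control a sum of many independent positive operators whose individual norms are exponentially small but whose expectation is a near-projector, and then convert this operator-norm concentration into the weighted trace-norm closeness demanded by \eqref{eq:faithful-sim-cond-1_2}. A subsidiary delicacy is that both $\Pi_\rho$ and $\Pi_\rho^{u^n}$ intervene in the construction, so gentle-operator bounds must be combined carefully with the tensor-power structure of $\rho^{\otimes n}$ to confirm that the completion residuals and the pinch-vs-unpinch discrepancies each contribute only $o(1)$ to the faithfulness criterion.
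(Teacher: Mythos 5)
Your outline follows Winter's original blueprint—random codewords drawn from the pruned typical distribution, typical-subspace pinching, operator Chernoff for the sub-POVM property, covering for the sum-rate constraint, and gentle-measurement bookkeeping—which is indeed the proof the paper invokes by citation rather than reproducing. The analysis architecture, the rate constraints, and the conversion to Definition~\ref{def:faith-sim} via Lemma~\ref{lem:faithful_equivalence} are all correct.

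One substantive imprecision worth flagging in the schematic operator formula: you pinch the POVM element $\Lambda_{U^n(l,\mu)}$ directly with $\Pi_\rho$ and $\Pi_\rho^{u^n}$, but Winter's construction pinches the \emph{post-measurement state} $\hat\rho_{u^n}=\tfrac{1}{\lambda_{u^n}}\sqrt{\rho^{\otimes n}}\,\Lambda_{u^n}\sqrt{\rho^{\otimes n}}$ and then conjugates the result by $\sqrt{\rho^{\otimes n}}^{-1}$ on the outside, i.e.\ the POVM element is (up to the random weight $\gamma$)
\[
\sqrt{\rho^{\otimes n}}^{-1}\,\hat\Pi\,\Pi_\rho\,\Pi_{u^n}\,\hat\rho_{u^n}\,\Pi_{u^n}\,\Pi_\rho\,\hat\Pi\,\sqrt{\rho^{\otimes n}}^{-1},
\]
exactly as the paper sets up in \eqref{eq:Lambdau'^n v^n}--\eqref{eq:A_uB_v} for the distributed version. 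Because $\Pi_{u^n}$ and $\sqrt{\rho^{\otimes n}}$ do not commute, your formula is not equivalent to this one, and the distinction is not cosmetic: the operator Chernoff bound is applied to the ensemble of \emph{density operators} $\{\hat\rho_{u^n}\}$, whose pinched versions have uniformly bounded eigenvalues $\lesssim 2^{-n(S(\rho)-\delta)}$ and whose expectation is close to $\rho^{\otimes n}$; only after that concentration does the outer $\sqrt{\rho^{\otimes n}}^{-1}$ conjugation turn the sum into something close to the typical projector, certifying the sub-POVM property. Pinching $\Lambda_{u^n}$ directly does not put you in a position to invoke operator Chernoff, since those operators do not concentrate around a rescaled projector in the requisite way. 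With that correction, the rest of your plan goes through.
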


\begin{remark}
A strong converse of the above result is also provided in \cite{winter}.
\end{remark}

		\section{Simulation of Distributed POVMs with Deterministic Processing}\label{sec:Appx_POVM}
		Now, we develop an extension of Winter's measurement compression \cite{winter} to quantum measurements performed in a distributed fashion with deterministic processing.
    Consider a bipartite composite quantum system $(A,B)$ represented by Hilbert Space $\mathcal{H}_A\tensor \mathcal{H}_B$.  Let $\rho_{AB}$ be a density operator on
    $\mathcal{H}_A\tensor \mathcal{H}_B$. Consider two measurements $M_A$ and $M_B$ on sub-systems $A$ and $B$, respectively.   Imagine that three parties, named Alice, Bob and Eve, are trying to collectively simulate these two measurements, one applied to each sub-system. The three parties share some amount of common randomness.  Alice and Bob perform a measurement $\tilde{M}_A^{(n)}$ and $\tilde{M}_B^{(n)}$ on $n$ copies of sub-systems $A$ and $B$, respectively. 
%
		 The measurements are performed in a distributed fashion with no communication taking place between Alice and Bob. Based on their respective measurements and the common randomness, Alice and Bob send some classical bits to Eve. 
		 Upon receiving these classical bits, Eve applies a processing operation on them and then wishes to produce an $n$-letter classical sequence. The objective is to construct $n$-letter measurements $\tilde{M}_A^{(n)}$ and $\tilde{M}_B^{(n)}$ that minimize the classical communication and common randomness bits while ensuring that the overall measurement induced by the action of the three parties is close to $M^{\tensor n}_A \tensor M^{\tensor n}_B$.
%
%
%
The problem is formally defined in the following.
\begin{definition}
For a given finite set $\mathcal{Z}$, and a Hilbert space $\mathcal{H}_{A}\tensor \mathcal{H}_B$, a distributed protocol with parameters $(n,\Theta_1,\Theta_2,N)$ 
is characterized by 
\add{\\
\textbf{1)} a collections of Alice's sub-POVMs  $\tilde{M}_A^{(\mu)},\mu \in [1, N]$ each acting on $\mathcal{H}_A^{\tensor n}$ and with outcomes in a subset $\mathcal{L}_1$ satisfying $|\mathcal{L}_1|\leq \Theta_1$.\\
\textbf{2)} a collections of Bob's sub-POVMs  $\tilde{M}_B^{(\mu)},\mu \in [1, N]$ each acting on $\mathcal{H}_B^{\tensor n}$ and with outcomes in a subset $\mathcal{L}_2$, satisfying $|\mathcal{L}_2|\leq \Theta_2$.\\
\textbf{3)} a collection of Eve's decoding maps $ f^{(\mu)}:\mathcal{L}_{1}\times \mathcal{L}_{2} \rightarrow \mathcal{Z}^n $ for  $\mu \in [1,N]$. \\
The overall  sub-POVM of this distributed protocol, given by $\tilde{M}_{AB}$, is characterized by the following operators: 
\begin{equation}\label{eq:overall POVM opt}
\tilde{\Lambda}_{z^n}\deq \frac{1}{N}\sum_{\mu, l_1, l_2} \mathbbm{1}_{\{f^{(\mu)}(l_1,l_2)=z^n\}} \Lambda^{A,(\mu)}_{l_1}\tensor \Lambda^{B,(\mu)}_{l_2}, \quad \forall z^n \in \mathcal{Z}^n,
\end{equation}
where $\Lambda^{A,(\mu)}_{l_1}$ and  $\Lambda^{B,(\mu)}_{l_2}$ are the operators corresponding to the sub-POVMs $\tilde{M}_A^{(\mu)}$ and $\tilde{M}_B^{(\mu)}$, respectively.}
%
%
%
%
%
\end{definition}
In the above definition,  $(\Theta_1,\Theta_2)$ determines the amount of classical bits communicated from Alice and Bob to Eve. The amount of common randomness is characterized by  $N$, and $\mu$ can be viewed as the common randomness bits distributed among the parties. The mapping $ f^{(\mu)} $ represents the action of Eve on the received classical bits. 
\begin{definition}
Given a POVM $M_{AB}\delequal \{\Lambda^{AB}_{z}\}_{z\in \mathcal{Z}}$ acting on  $\mathcal{H}_{A}\tensor \mathcal{H}_B$,  and a density operator $\rho_{AB}\in \mathcal{D}(\mathcal{H}_{A}\tensor \mathcal{H}_B)$, a triplet $(R_1,R_2,C)$ is said to be achievable, if for all $\epsilon>0$ and for all sufficiently large $n$, there exists a distributed protocol with parameters  $(n, \Theta_1, \Theta_2, N)$ such that its overall {sub-}POVM $\tilde{M}_{AB}$  is $\epsilon$-faithful to $M_{AB}^{\tensor n}$ with respect to $\rho_{AB}^{\tensor n}$ (see Definition \ref{def:faith-sim}), and
\begin{align*}
    \frac{1}{n}\log_2 N &\leq C+\epsilon,\qquad 
    \frac{1}{n}\log_2 \Theta_i \leq R_i+\epsilon, \quad i=1,2.
\end{align*}
The set of all achievable triples $(R_1, R_2, C)$ is called the achievable rate region. 
\end{definition}
The following theorem provides an inner bound to the achievable rate region, which is proved in Section \ref{appx:proof of thm dist POVM},
\add{\begin{theorem}\label{thm: dist POVM appx}
 Given a density operator $\rho_{AB}\in \mathcal{D}(\mathcal{H}_{A}\tensor \mathcal{H}_B)$ and a POVM $M_{AB} \delequal \{\Lambda^{AB}_{z}\}_{z\in \mathcal{Z}}$ acting on  $\mathcal{H}_{A}\tensor \mathcal{H}_B$ having a separable decomposition with deterministic integration (as in Definition \ref{def:Joint Measurements}), a triple $(R_1,R_2,C)$ is achievable if the following inequalities are satisfied:
\begin{subequations}\label{eq:dist POVm appx rates}
	\begin{align}
	R_1 &\geq I(U;RB)_{\sigma_1}-I(U;V)_{\sigma_3},\label{eq:rate1}\\
	R_2 &\geq I(V;RA)_{\sigma_2}-I(U;V)_{\sigma_3},\label{eq:rate2}\\
	R_1+R_2 &\geq I(U;RB)_{\sigma_1}+I(V;RA)_{\sigma_2}-I(U;V)_{\sigma_3},\label{eq:rate3}\\
	R_1+C &\geq S(U|V)_{\sigma_3},\\
R_2+C &\geq S(V|U)_{\sigma_3},\\
	{R_1+R_2+C} &\geq S(U,V)_{\sigma_3},\label{eq:rate4}  
	\end{align}
\end{subequations}
for some decomposition
with POVMs $\bar{M}_A=\{\bar{\Lambda}^A_{u}\}_{u \in \mathcal{U}}$ and $\bar{M}_B=\{ \bar{\Lambda}^B_{v}\}_{v \in \mathcal{V}}$ and a function $ g: \mathcal{U}\times \mathcal{V} \rightarrow \mathcal{Z}$, where the information quantities are computed for the auxiliary states 
$\sigma_{1}^{RUB} \deq (\emph{\id}_R\tensor \bar{M}_{A} \tensor \emph{\id}_B) ( \Psi^{\rho_{AB}}_{R A B}),~ \sigma_{2}^{RAV} \deq (\emph{\id}_R\tensor \emph{\id}_A \tensor \bar{M}_{B}) ( \Psi^{\rho_{AB}}_{R A B}),$ and $\sigma_{3}^{RUV} \deq (\emph{\id}_R\tensor \bar{M}_{A} \tensor \bar{M}_{B}) ( \Psi^{\rho_{AB}}_{R A B})$, with $\Psi^{\rho_{AB}}_{R A B} $ being a purification of $\rho_{AB}$, and $\mathcal{U}$ and $\mathcal{V}$ are some finite sets.
\end{theorem}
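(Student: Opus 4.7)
The strategy is to combine Winter's measurement compression theorem (Theorem \ref{extrinsic}) applied independently on Alice and Bob with (i) a mutual covering step that lifts marginal faithfulness to joint faithfulness, (ii) random binning that exploits the classical correlation between the two codebooks, and (iii) composition with the deterministic map $g$ at Eve.

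First, since $\Psi^{\rho_{AB}}_{RAB}$ purifies $\rho_{AB}$, the composite system $RB$ serves as a purification of $\rho_A$, so applying Theorem \ref{extrinsic} to $\bar{M}_A$ acting on $\rho_A$ yields a family of sub-POVMs $\tilde{M}_A^{(\mu)}$ of size $2^{n\tilde{R}_1}$, indexed by shared randomness $\mu\in[1,N]$, whose average is $\epsilon$-faithful to $\bar{M}_A^{\tensor n}$ whenever $\tilde{R}_1\geq I(U;RB)_{\sigma_1}+\delta$ and $\tilde{R}_1+\tfrac{1}{n}\log_2 N\geq S(U)_{\sigma_1}+\delta$. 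A symmetric application on Bob's side gives $\tilde{M}_B^{(\mu)}$ with the analogous bounds in terms of $I(V;RA)_{\sigma_2}$ and $S(V)_{\sigma_2}$, sharing the same $\mu$.

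The decisive step, which I expect to be the principal obstacle, is to show that $\tilde{M}_A^{(\mu)}\tensor \tilde{M}_B^{(\mu)}$ (averaged over $\mu$) is $O(\epsilon)$-faithful to $\bar{M}_A^{\tensor n}\tensor\bar{M}_B^{\tensor n}$ not only against the product $\rho_A^{\tensor n}\tensor\rho_B^{\tensor n}$, but against the truly joint state $\rho_{AB}^{\tensor n}$. Since a quantum analogue of ``marginal faithfulness implies joint faithfulness'' is not automatic, this requires a second-moment analysis of the independently generated codebooks on the full purification $\Psi^{\rho_{AB}}_{RAB}$; this is the content of the Mutual Covering Lemma \ref{lem:mutual covering}, which also unifies the two references $RB$ and $RA$ into the common reference $R$ appearing in the single-letter bounds, and in addition delivers the joint covering constraint $\tilde{R}_1+\tilde{R}_2+C\geq S(U,V)_{\sigma_3}$.

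With joint faithfulness at rates $(\tilde{R}_1,\tilde{R}_2)=(I(U;RB)_{\sigma_1},I(V;RA)_{\sigma_2})$ in hand, I would partition each codebook into bins of sizes $2^{nR_1}$ and $2^{nR_2}$ and let Eve decode the unique jointly typical pair $(u^n,v^n)$ consistent with the received bin indices and with $\mu$. The Mutual Packing Proposition \ref{prop:Lemma for S_3} guarantees that this decoding succeeds with vanishing error as long as the three excess rates $\tilde{R}_1-R_1$, $\tilde{R}_2-R_2$ and $(\tilde{R}_1+\tilde{R}_2)-(R_1+R_2)$ stay below $I(U;V)_{\sigma_3}$, which is exactly \eqref{eq:rate1}--\eqref{eq:rate3}. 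Defining the decoder $f^{(\mu)}$ to output $g^{\tensor n}(u^n,v^n)$ and using the separable decomposition $\Lambda^{AB}_{z}=\sum_{u,v}\mathbbm{1}\{g(u,v)=z\}\,\bar{\Lambda}^A_u\tensor\bar{\Lambda}^B_v$ then produces an overall sub-POVM of the form \eqref{eq:overall POVM opt} that is $O(\epsilon)$-faithful to $M_{AB}^{\tensor n}$ with respect to $\rho_{AB}^{\tensor n}$. Combining the covering bounds with the binning savings of $I(U;V)_{\sigma_3}$ yields the common-randomness constraints $R_1+C\geq S(U|V)_{\sigma_3}$, $R_2+C\geq S(V|U)_{\sigma_3}$ and $R_1+R_2+C\geq S(U,V)_{\sigma_3}$, and a standard derandomization over the code ensemble extracts a deterministic protocol achieving any rate triple in the claimed region.
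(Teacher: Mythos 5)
Your overall architecture is the right one --- apply Winter's covering argument to each marginal, use a mutual covering lemma to promote marginal faithfulness to faithfulness against the joint $\rho_{AB}^{\tensor n}$, then bin the two codebooks and invoke a mutual packing lemma --- but there is a structural error in how you allocate the common randomness, and it propagates into an incorrect derivation of the common-randomness constraints.

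The paper partitions the common randomness as $\mu=(\mu_1,\mu_2)$ with $\mu_1\in[1,N_1]$ and $\mu_2\in[1,N_2]$ ranging independently and $\log N_1+\log N_2\le\log N$; Alice's codebooks are indexed by $\mu_1$ alone and Bob's by $\mu_2$ alone. You instead have Alice and Bob ``sharing the same $\mu$,'' so both covering constraints carry the full rate $\tfrac1n\log_2 N=C$: $\tilde R_1+C\ge S(U)_{\sigma_1}$ and $\tilde R_2+C\ge S(V)_{\sigma_2}$. This matters for two reasons. First, Lemma~\ref{lem:mutual covering} is stated for the tensor product of the two \emph{separately} $\mu$-averaged sub-POVMs, i.e.\ for $\bigl(\frac{1}{N_1}\sum_{\mu_1}[M_1^{(n,\mu_1)}]\bigr)\otimes\bigl(\frac{1}{N_2}\sum_{\mu_2}[M_2^{(n,\mu_2)}]\bigr)$, which is the full double sum $\frac{1}{N_1N_2}\sum_{\mu_1,\mu_2}[M_1^{(n,\mu_1)}]\otimes[M_2^{(n,\mu_2)}]$. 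A shared-$\mu$ protocol produces the \emph{diagonal} average $\frac{1}{N}\sum_{\mu}[M_1^{(n,\mu)}]\otimes[M_2^{(n,\mu)}]$, which is not a tensor product of averages, so Lemma~\ref{lem:mutual covering} does not apply to it as stated. With $\mu_1$ and $\mu_2$ ranging independently, the diagonal and the product coincide, and the lemma is directly applicable. Second, if you run Fourier--Motzkin on your shared-$\mu$ constraints $\tilde R_1+C\ge S(U)$, $\tilde R_2+C\ge S(V)$ together with the packing inequality $(\tilde R_1-R_1)+(\tilde R_2-R_2)<I(U;V)_{\sigma_3}$, you only obtain $R_1+R_2+2C>S(U,V)_{\sigma_3}$, not the claimed $R_1+R_2+C\ge S(U,V)_{\sigma_3}$. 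The paper's split form gives $C_1+\tilde R_1\ge S(U)$, $C_2+\tilde R_2\ge S(V)$, $C_1+C_2\le C$, which sums to $C+\tilde R_1+\tilde R_2\ge S(U)+S(V)$; subtracting the binning gain $I(U;V)$ then yields $R_1+R_2+C\ge S(U)+S(V)-I(U;V)=S(U,V)$. The joint-covering constraint you quote, $\tilde R_1+\tilde R_2+C\ge S(U,V)$, is both not what the construction produces (the correct intermediate bound has $S(U)+S(V)$ on the right-hand side) and too weak to close the argument after binning.

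A secondary point: you attribute both the ``unification of references'' and the joint covering rate constraint to the Mutual Covering Lemma. As stated, Lemma~\ref{lem:mutual covering} is a pure trace-distance fact --- it says nothing about rates. The covering-rate inequalities $\tilde R_i\ge I(\cdot;\cdot)$ and $C_i+\tilde R_i\ge S(\cdot)$ come entirely from the Winter-style argument (Lemma~\ref{lem:M_XM_Y POVM} in the paper), applied per party. Also, the single-letter bounds in the theorem are in terms of $I(U;RB)_{\sigma_1}$ and $I(V;RA)_{\sigma_2}$, not $I(U;R)$ and $I(V;R)$, so there is no ``unification to the common reference $R$'': the reference for Alice's covering is all of $RB$, and for Bob's it is $RA$.

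Once you switch to independent $\mu_1,\mu_2$ with $C_1+C_2\le C$, apply Lemma~\ref{lem:M_XM_Y POVM} per party, Lemma~\ref{lem:mutual covering} for the step $S_1\le 2\epsilon$, and Proposition~\ref{prop:Lemma for S_3} for the binning error $S_2$, and then run Fourier--Motzkin, you recover exactly the region~\eqref{eq:dist POVm appx rates}. Your description of the decoder --- unique jointly typical pair in the indicated bin pair, followed by $g^{\otimes n}$ --- and of the derandomization step are consistent with the paper.
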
}

\begin{remark}
An alternative characterization of the above rate region can be obtained in terms of Holevo information.
For this, we use the canonical ensembles $\left\{\lambda^{A}_{u},\hat{\rho}^{A}_{u}\right\}$, $\left\{\lambda^{B}_{v},\hat{\rho}^{B}_{v}\right\}$ and $\left\{\lambda^{AB}_{uv},\hat{\rho}^{AB}_{uv}\right\}$   defined as
\begin{align}\label{eq:dist_canonicalEnsemble}
\lambda^A_u &\deq \tr\{\bar{\Lambda}^A_u \rho_A\}, \hspace{1.3cm} \lambda^B_v \deq \tr\{\bar{\Lambda}^B_v \rho_B\}, \hspace{1.4cm} \lambda^{AB}_{uv} \deq \tr\{(\bar{\Lambda}^A_u \tensor \bar{\Lambda}^B_v) \rho_{AB}\},\nonumber \\
\hat{\rho}^A_u &\deq \frac{1}{\lambda^A_u}\sqrt{\rho_A}  \bar{\Lambda}^A_u \sqrt{\rho_A}, ~\quad \hat{\rho}^B_v \deq \frac{1}{\lambda^Y_v}\sqrt{\rho_B}\bar{\Lambda}^B_v \sqrt{\rho_B}, \quad \hat{\rho}^{AB}_{uv} \deq \frac{1}{\lambda^{AB}_{uv}}\sqrt{\rho_{AB}} (\bar{\Lambda}^A_u \tensor \bar{\Lambda}^B_v) \sqrt{\rho_{AB}}.
\end{align}
Using this, we get $$I(U;RB)_{\sigma_1} = \chi\left(\left\{\lambda_{u}^{A},\hat{\rho}_u^{A}\right\}\right)\quad \mbox{and}\quad I(V;RA)_{\sigma_2} = \chi\left(\left\{\lambda_{v}^{B},\hat{\rho}_v^{B}\right\}\right). $$ Also, $I(U;V)_{\sigma_3}$, and  $S(U,V)_{\sigma_3}$ are equal to the classical mutual information and joint entropy with respect to the joint distribution $\{\lambda^{AB}_{uv}\}$, respectively.
\end{remark}

Before providing a proof in the next section, we briefly discuss two corner points of the rate region with respect to the common randomness available. Firstly, consider the regime where
the sum rate $(R_1+R_2)$ is at its  minimum achievable, i.e., equation \eqref{eq:rate3} is active. This requires the largest amount of common randomness, given by the constraint $C\geq S(U|RB)_{\sigma_1}+S(V|RA)_{\sigma_2}$. Next, let us consider the regime where $C=0$. This implies $R_1+R_2\geq S(U,V)_{\sigma_3}$. This regime corresponds to the quantum measurement $M_A\tensor M_B$ followed by classical Slepian-Wolf compression \cite{slepian}.  Fig. \ref{fig:rateRegion} demonstrates the achievable rate region in these cases. 


\begin{figure}[hbt]
	\centering
	\hspace{-10pt}
	{\includegraphics[scale=0.4]{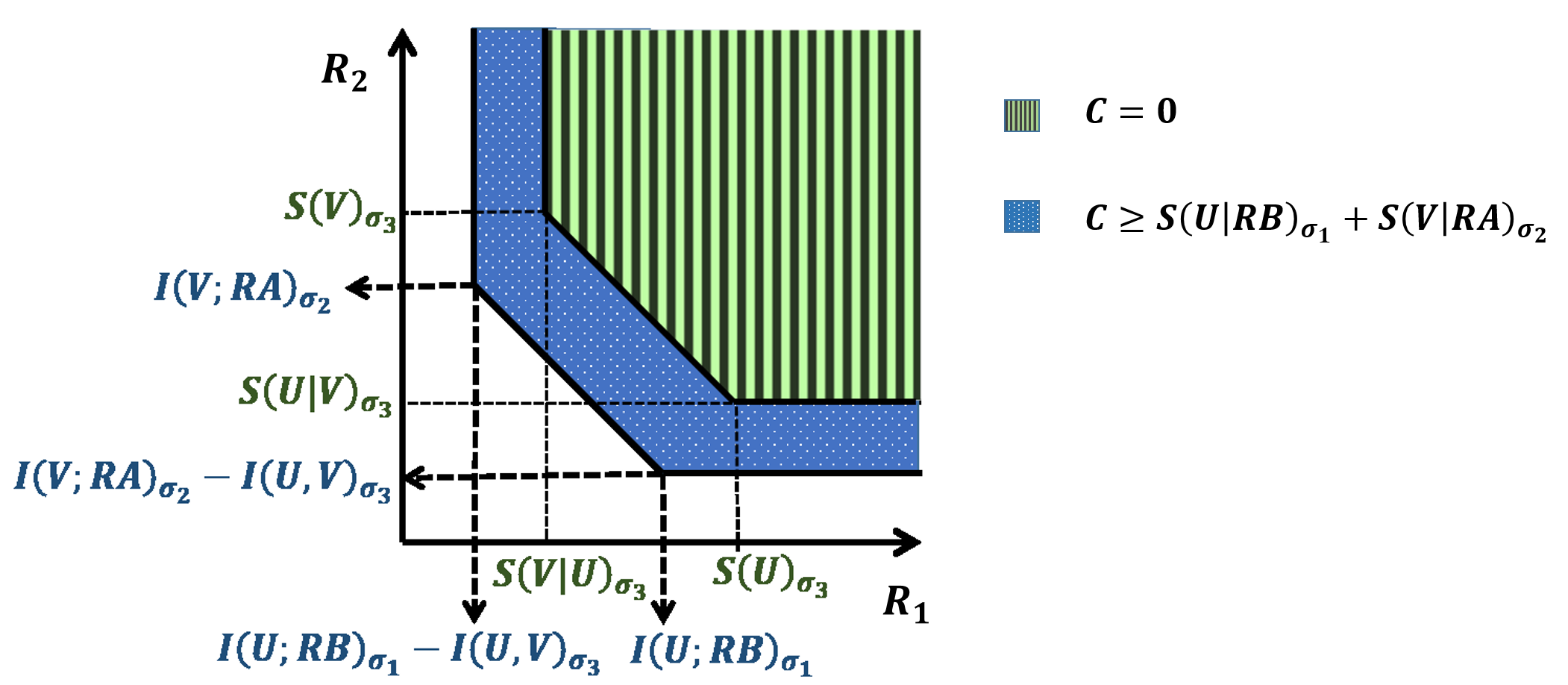}}
	
    \caption{ The inner bound to the achievable rate region given in Theorem \ref{thm: dist POVM appx} at two planes:  1) with no common randomness, i.e., $C=0$ (green color), and  2) with at least $S(U|RB)_{\sigma_1}+S(V|RA)_{\sigma_2}$ amount of common randomness (blue color). 
    As a result, the latter region contains the former.} 
\label{fig:rateRegion}
\end{figure}



We encounter two challenges in developing the single-letter inner bound to the achievable rate region as stated in Theorem \ref{thm: dist POVM appx}:
1) The direct use of single-POVM compression theorem, proved using random coding arguments as in \cite{winter}, for each individual POVMs, $M_A$ and $M_B$, does not necessarily ensure a “distributed” faithful simulation for the overall measurement, $M_A \tensor M_B$. This issue is unique to the quantum settings. One of the contributions of this work is to prove this when the two sources $A$ and $B$ are not necessarily independent, i.e., $\rho_{AB} \neq \rho_A \tensor \rho_B$ (see Lemma \ref{lem:mutual covering}).

2) The classical outputs of the approximating POVMs (operating on $n$ copies of the source) are not \textit{independently and identically distributed} (IID) sequences - rather they are codewords generated from random coding. The Slepian-Wolf scheme \cite{slepian} (also referred to as \textit{binning} in the literature) is developed for distributed compression of IID source sequences.  Applicability of such an approach to the problem requires that the classical outputs produced from the two approximating POVMs are jointly typical with high probability. This issue also arises in classical distributed source coding problem which was addressed by Wyner-Ahlswede-Korner by developing Markov Lemma and Mutual Packing Lemma (Lemma 12.1 and 12.2 in \cite{elGamal}). Building upon these ideas, we develop quantum-classical counterparts of these lemmas for the multi-user quantum measurement simulation problem (see the discussion in Section \ref{sec:POVM binning} and Proposition \ref{prop:Lemma for S_3}). Let us consider an example to illustrate the above inner bound.
\begin{example}
Suppose the composite state $\rho_{AB} $ is described using one of the Bell states on $\mathcal{H}_A\tensor\mathcal{H}_B$ as
\begin{align*}
    \rho^{AB} = \cfrac{1}{2}\left(\ket{00}_{AB} + \ket{11}_{AB} \right)\left( \bra{00}_{AB} + \bra{11}_{AB}\right).
\end{align*}
Since $\pi^A = \Tr_B{\rho^{AB}}$ and $\pi^B = \Tr_A{\rho^{AB}}$, Alice and Bob would perceive each of their particles in maximally mixed states $\pi^A = \frac{I^A}{2}$ and $\pi^B = \frac{I^B}{2}$, respectively. Upon receiving the quantum state, the two parties wish to independently measure their states, using identical POVMs $M_A$ and $M_B$, given by $\left\{ \cfrac{1}{2}\ketbra{0}, \cfrac{1}{2}\ketbra{1}, \cfrac{1}{2}\ketbra{+}, \cfrac{1}{2}\ketbra{-} \right\}$.
Alice and Bob together with Eve are trying to simulate the action of $M_A\tensor M_B$ using the classical communication and common randomness as the resources available to them (as described earlier). 

We compute the constraints given in Theorem \ref{thm: dist POVM appx}. Considering the first constraint from $\eqref{eq:rate1}$,
we evaluate $\sigma_1^{UB}$ as
\begin{align}
    \sigma_1^{UB} = \cfrac{1}{4} \big(\ketbra{0}_U\tensor\ketbra{0}_B + \ketbra{1}_U\tensor\ketbra{1}_B + \ketbra{2}_U\tensor\ketbra{+}_B + \ketbra{3}_U\tensor\ketbra{-}_B \big), \nonumber
\end{align}
where 
the vectors $\{\ket{0}_U,\ket{1}_U,\ket{2}_U,\ket{3}_U\}$ denote a set of orthogonal states on the space $\mathcal{H}_U$. Based on this state, we get
\begin{align*}
    S(\sigma_1^{RUB}) = S(\sigma_1^{UB}) = 2, \quad S(\sigma_1^{RB}) = S(\sigma_1^{B}) = 1 \quad \mbox{and} \quad S(\sigma_1^{U}) = 2.
\end{align*}
This gives $I(U;RB)_{\sigma_1}$ to be equal to 1 bit. Similarly, from the symmetry of the example, we also get $I(V;RA)_{\sigma_2}$ to be equal to 1 bit. Similarly, we can evaluate $\sigma_3^{UV}$ as 
\begin{align*}
 \sigma_3^{UV} = \left(\cfrac{1}{8}\sum_{i=0}^3\ketbra{i}_U\tensor\ketbra{i}_V + \cfrac{1}{16}\sum_{i=0}^3\sum_{j=i+2}^{i+4}\ketbra{i}_U\tensor\ketbra{\hspace{-10pt}\mod(j,4)}_V\right),   
\end{align*}
which gives
\begin{align*}
    S(U,V)_{\sigma_3} = 3.5 \quad \mbox{and} \quad I(U;V)_{\sigma_3} = 0.5.
\end{align*}
Therefore, we can write the constraints given in Theorem \ref{thm: dist POVM appx} as
\begin{align*}
    R_1 \geq 0.5, \quad R_2 \geq 0.5, \quad R_1 + R_2 \geq 1.5, \quad R_1+C \geq 1.5, \quad R_2 +C \geq 1.5 \quad \mbox{and} \quad R_1 +R_2 + C \geq 3.5.
\end{align*}
  Consider the case when $C\geq 2$ is available. By approximating $M_A$ and $M_B$ individually, we receive a gain of 1 bit, decreasing the rate from $S(U)_{\sigma_1}=2$ bits to $I(U;RB)_{\sigma_1}=1$ bit and similarly from $S(V)_{\sigma_2}=2$ bits  to $I(V;RA)_{\sigma_2}=1$ bit. Binning of these approximating POVMs (as discussed in Section (\ref{sec:POVM binning})), gives an additional gain of half a bit, which is characterized by $I(U;V)_{\sigma_3}=0.5$, thus giving us the achievable sum-rate of $1.5$ bits.
   
\end{example}

In the next section we provide the proof for this inner bound.

    \section{Proof of Theorem \ref{thm: dist POVM appx}}\addtocontents{toc}{\protect\setcounter{tocdepth}{1}}
\label{appx:proof of thm dist POVM}
Assume that the operators of the original POVM $M_{AB}$ are decomposed as 
\begin{equation}\label{eq:LambdaAB decompos}
\Lambda^{AB}_{z}=\sum_{u,v} \mathbbm{1}_{\{g(u,v)=z\}} \bar{\Lambda}^A_{u}\tensor \bar{\Lambda}^B_{v}, ~ \forall z \in \mathcal{Z}, 
\end{equation}
for some POVMs   $\bar{M}_{A}$ and $\bar{M}_B$ with operators denoted by $\{\bar{\Lambda}^A_u\}_{u\in \mathcal{U}}$ and $\{\bar{\Lambda}^B_v\}_{v\in \mathcal{V}}$, respectively, and for some function $ g:\mathcal{U}\times \mathcal{V} \rightarrow \mathcal{Z} $ where $\mathcal{U}, \mathcal{V}$ and $ \mathcal{Z} $ are three finite sets. The proof follows by constructing a protocol for faithful simulation of $\bar{M}^{\tensor n}_A\tensor \bar{M}^{\tensor n}_B$. We start by generating the canonical ensembles corresponding to $\bar{M}_{A}$ and $\bar{M}_B$, 
as given in \eqref{eq:dist_canonicalEnsemble}. 
With this notation, corresponding to each of the probability distributions, we can associate a $\delta$-typical set. Let us denote $\mathcal{T}_{\delta}^{(n)}(U)$, $\mathcal{T}_{\delta}^{(n)}(V)$ and $\mathcal{T}_{\delta}^{(n)}(UV)$ as the $\delta$-typical sets defined for $\{\lambda^{A}_{u}\}$, $\{\lambda^{B}_{v}\}$ and $\{\lambda^{AB}_{uv}\}$, respectively.

Let $\PiA$ and $\PiB$ denote the $\delta$-typical projectors (as in \cite{holevo}) for marginal density operators $\rho_A$ and $\rho_B$, respectively. Also, for any $u^n\in \mathcal{U}^n$ and $v^n \in \mathcal{V}^n$, let $\PiuA$ and $\PivB$ denote the conditional typical projectors (as in \cite{holevo}) for the canonical ensembles $\{\lambda^A_u, \hat{\rho}^A_u\}$ and $\{\lambda^B_v, \hat{\rho}^B_v\}$, respectively. For each $u^n\in \mathcal{U}^n$ and $v^n \in \mathcal{V}^n$ define 
\begin{equation}\label{eq:Lambdau'^n v^n}
\Lambda^{A'}_{u^n} \deq \PiA \PiuA \rhohatuA \PiuA \PiA, \quad 
\Lambda^{B'}_{v^n} \deq \PiB \PivB \rhohatvB \PivB \PiB,
\end{equation}
where $\rhohatuA \deq \bigotimes_{i} \hat{\rho}^A_{u_i}$ and $\rhohatvB \deq \bigotimes_{i} \hat{\rho}^B_{v_i}$ \footnote{Note that $\Lambda^{A'}_{u^n}$ and $\Lambda^{B'}_{v^n}$ are not tensor products operators.}.  

With the notation above, define $\sigma^{A'}$ and $\sigma^{B'}$as
\begin{align}
\sigma^{A'}\deq \sum_{u^n \in \mathcal{T}_{\delta}^{(n)}(U)}\cfrac{\lambdauA}{(1-\varepsilon)}\Lambda^{A'}_{u^n} \quad \mbox{and} \quad  \sigma^{B'} \deq \sum_{v^n \in \mathcal{T}_{\delta}^{(n)}(V)}\cfrac{\lambdavB}{(1-\varepsilon')}\Lambda^{B'}_{v^n},
\end{align} 
where $\varepsilon = \sum_{u^n \in \mathcal{T}_{\delta}^{(n)}(U)}\lambdauA$ and $\varepsilon' = \sum_{v^n \in \mathcal{T}_{\delta}^{(n)}(V)}\lambdavB$. Note that $\sigma^{A'}$ and $\sigma^{B'}$ defined above are expectations with respect to the pruned distribution \cite{Wilde_book}.
Let $\PihatA$ and $\PihatB$ be the projectors onto the subspaces spanned by the eigen-states of $\sigma^{A'}$ and $\sigma^{B'}$ corresponding to eigenvalues that are larger than $\epsilon 2^{-n(S(\rho_A)+\delta_1)}$ and $\epsilon 2^{-n(S(\rho_B)+\delta_2)}$,  for some $\delta_1,\delta_2 \geq 0$.  Lastly, define\footnote{Note that $\Lambda^{A}_{u^n}$ and $\Lambda^{B}_{v^n}$ are not tensor products operators.}.
\begin{align}\label{eq:Lambda u^n}
\LambdauA \deq \PihatA\Lambda^{A'}_{u^n}\PihatA,\quad \mbox{and} \quad \LambdavB \deq \PihatB\Lambda^{B'}_{v^n}\PihatB.
\end{align}
\subsection{Construction of Random POVMs}\label{sec:POVM construction}
In what follows, we construct two random POVMs one for each encoder. Fix a positive integer $N$ and positive real numbers $\tilde{R}_1$ and $\tilde{R}_2$ satisfying $\tilde{R}_1 < S(U)_{\sigma_3}$ and $\tilde{R}_2< S(V)_{\sigma_3}$, where ${\sigma_3}$ is defined as
\begin{align*}
\sigma_{3}^{RUV} \deq (\id^R\tensor \bar{M}_{A} \tensor \bar{M}_{B}) ( \Psi^{\rho_{AB}}_{RAB}), 
\end{align*} with $ \Psi^{\rho_{AB}}_{RAB} $ being any purification of $ \rho_{AB}$\footnote{The information theoretic quantities calculated with respect to $ \sigma_{3}^{RUV} $ remain independent of the purification used in its definition.}. 
{Let $ \mu_1 \in [1,N_1]$ denote the common randomness shared between the first encoder and the decoder, and let $ \mu_2 \in [1,N_2]$ denote the common randomness shared between the second encoder and the decoder, with log($ N_1 $) + log($ N_2 $) $ \leq $ log($ N $)}.
For each $\mu_1 \in [1, N_1]$ and $\mu_2 \in [1, N_2]$,
 randomly and independently select $2^{n\tilde{R}_1}$ and $2^{n\tilde{R}_2}$ sequences $(U^{n,(\mu_1)}(l), V^{n,(\mu_2)}(k))$ according to the pruned distributions, i.e.,
 \begin{align}\label{def:prunedDist}
     \PP\left((U^{n,(\mu_1)}(l), V^{n,(\mu_2)}(k)) = (u^n,v^n)\right) = \left\{\begin{array}{cc}
          \dfrac{\lambdauA}{(1-\varepsilon)}\dfrac{\lambdavB}{(1-\varepsilon')} \quad & \mbox{for} \quad u^n \in \mathcal{T}_{\delta}^{(n)}(U), v^n \in \mathcal{T}_{\delta}^{(n)}(V)\\
           0 & \quad \mbox{otherwise}
     \end{array} \right. .
 \end{align} 
Construct operators
\begin{align}\label{eq:A_uB_v}
A^{(\mu_1)}_{u^n} \deq  \gamma^{(\mu_1)}_{u^n} \bigg(\sqrt{\rho_{A}}^{-1}\LambdauA\sqrt{\rho_{A}}^{-1}\bigg)\quad  \text{ and  } \quad
B^{(\mu_2)}_{v^n} \deq \zeta^{(\mu_2)}_{v^n} \bigg(\sqrt{\rho_{B}}^{-1}\LambdavB\sqrt{\rho_{B}}^{-1}\bigg),
\end{align}
where 
\begin{align}\label{eq:gamma_mu}
 \gamma^{(\mu_1)}_{u^n}\deq  \frac{1-\varepsilon}{1+\eta}2^{-n\tilde{R}_1}|\{l: U^{n,(\mu_1)}(l)=u^n\}|\quad  \text{ and  } \quad
 \zeta^{(\mu_2)}_{v^n}\deq  \frac{1-\varepsilon'}{1+\eta}2^{-n\tilde{R}_2}|\{k: V^{n,(\mu_2)}(k)=v^n\}|, 
\end{align}
where $\eta \in (0,1)$ is a parameter to be determined.
 Then, for each $\mu_{1} \in [1,N_{1}]$ and $ \mu_{2} \in [1,N_{2}]$, construct $M_1^{( n, \mu_1)}$ and $M_2^{( n, \mu_2)}$ as in the following
\begin{align}
   M_1^{( n, \mu_1)}& \deq \{A^{(\mu_1)}_{u^n} : u^n \in  \mathcal{T}_{\delta}^{(n)}(U)\}, \quad  M_2^{(n, \mu_2)} \deq \{B^{(\mu_2)}_{v^n} : v^n \in  \mathcal{T}_{\delta}^{(n)}(V)\}.
\label{eq:POVM-14}
\end{align} 

We show in the later part of the proof (Lemma \ref{lem:M_XM_Y POVM}) that $M_1^{(n,\mu_1)}$ and $M_2^{(n,\mu_2)}$ form sub-POVMs, with high probability, for all $\mu \in [1,N_1]$ and $\mu_2 \in [1,N_2]$, respectively. These collections $ M_1^{( n, \mu_1)}$ and $ M_2^{( n, \mu_2)}$ are completed using the operators $I - \sum_{u^n \in \mathcal{T}_{\delta}^{(n)}(U)}A^{(\mu_1)}_{u^n}$ and $I - \sum_{v^n\in \TDeltan(V)}B^{(\mu_2)}_{v^n}$, and these operators are associated with sequences $u^n_0$ and $v^n_0$, which are chosen arbitrarily from $\mathcal{U}^n\backslash\mathcal{T}_{\delta}^{(n)}(U) $ and $\mathcal{V}^n\backslash \mathcal{T}_{\delta}^{(n)}(V)$, respectively.

\subsection{Binning of POVMs}\label{sec:POVM binning}
We introduce the quantum counterpart of the so-called \textit{binning} technique which has been widely used in the context of classical distributed source coding. Fix binning rates $(R_1, R_2)$ and choose a $(\mu_1,\mu_2)$ pair. For each sequence $u^n\in \mathcal{T}_{\delta}^{(n)}(U)$ assign an index  from $[1,2^{nR_1}]$ randomly and uniformly, such that the assignments for different sequences are done independently. Perform a similar random and independent assignment for all $v^n\in \mathcal{T}_{\delta}^{(n)}(V)$ with indices chosen from  $[1,2^{nR_2}]$. Repeat this assignment for every $\mu_1 \in [1,2^{nC_1}]$ and  $\mu_2 \in [1,2^{nC_2}]$. 
%
  For each $i\in [1,2^{nR_1}]$ and $j\in [1,2^{nR_2}]$, let $\mathcal{B}^{(\mu_1)}_1(i)$ and $\mathcal{B}^{(\mu_2)}_2(j)$ denote the $i^{th}$ and the  $j^{th}$ bins, respectively. More precisely, $\mathcal{B}^{(\mu_1)}_1(i)$ is the set of all $ u^n$ sequences with assigned index equal to $ i$, and similar is $ \mathcal{B}^{(\mu_2)}_2(j)$. 
  Define the following operators:
\begin{align*}
\Gamma^{A, ( \mu_1)}_i \deq \sum_{u^n \in \mathcal{B}^{(\mu_1)}_1(i)}A^{(\mu_1)}_{u^n}, \qquad \text{and} \qquad \Gamma^{B,( \mu)}_j \deq \sum_{v^n \in \mathcal{B}^{(\mu_2)}_2(j)}B^{(\mu_2)}_{v^n},
\end{align*}
for all  $i\in [1,2^{nR_1}]$ and $j\in [1,2^{nR_2}]$.  
Using these operators, we form the following collection:
\begin{align}
	    M_A^{( n, \mu_1)} \deq  \{\Gamma^{A,  (\mu_1)}_i\}_{ i \in [1, 2^{nR_1}] }, \quad 
	    M_B^{( n, \mu)}\deq  \{\Gamma^{B, ( \mu_2)}_j \}_{j \in [1, 2^{nR_2}] }.
	    \label{eq:POVM-16}
\end{align}
Note that if  $M_1^{( n, \mu_1)}$ and $M_2^{( n, \mu_2)}$ are sub-POVMs, then so are $M_A^{( n, \mu_1)}$ and $M_B^{( n, \mu_2)}$. This is due to the  relations
\begin{equation*}
\sum_{i} \Gamma^{A, ( \mu_1)}_i=\sum_{u^n \in \mathcal{T}_{\delta}^{(n)}(U)}A^{(\mu_1)}_{u^n},\quad \text{and} \quad
\sum_{j} \Gamma^{B, ( \mu_2)}_j=\sum_{v^n\in \mathcal{T}_{\delta}^{(n)}(V)}B^{(\mu_2)}_{v^n}.
\end{equation*}
To make $ M_A^{( n, \mu_1)} $ and $M_B^{( n, \mu_2)}$ complete, we define $\Gamma^{A, ( \mu_1)}_0$ and $ \Gamma^{B, ( \mu_2)}_0 $ as $\Gamma^{A, ( \mu_1)}_0=I-\sum_i \Gamma^{A, ( \mu_1)}_i$ and $\Gamma^{B, ( \mu_2)}_0=I-\sum_j \Gamma^{B, ( \mu_2)}_j$, respectively\footnote{Note that $\Gamma^{A, ( \mu_1)}_0=I-\sum_i \Gamma^{A, ( \mu_1)}_i = I - \sum_{u^n \in T_{\delta}^{(n)}(U)}A^{(\mu_1)}_{u^n}$ and $\Gamma^{B, ( \mu_2)}_0=I-\sum_j \Gamma^{B, ( \mu_2)}_j = I - \sum_{v^n \in T_{
\delta}^{(n)}(V)}B^{(\mu_2)}_{v^n}$}. Now, we intend to use the completions $[M_A^{( n, \mu_1)}]$ and $[M_B^{( n, \mu_2)}]$ as the POVMs for each encoder.
Also, note that the effect of the binning is in reducing the communication rates from $(\tilde{R}_1, \tilde{R}_2)$ to $(R_1,R_2)$. 
  
\subsection{Decoder mapping }
Note that the operators $A_{u^n}^{(\mu_1)}\tensor B_{v^n}^{(\mu_2)}$ are used to simulate $\bar{M}_A\tensor \bar{M}_B$. Binning can be viewed as partitioning of the set of classical outcomes into bins. Suppose an outcome $(U^n,V^n)$ occurred after the measurement. Then, if the bins are small enough, one might be able to recover the outcomes by knowing the bin numbers. For that we create a decoder that takes as an input a pair of bin numbers and produces a pair of sequences  $(U^n,V^n)$. More precisely, we define a mapping $F^{(\mu)}$, for $ \mu = (\mu_1,\mu_2)$, acting on the outputs of $[M_A^{( n, \mu_1)}] \tensor [M_B^{( n, \mu_2)}]$ as follows. Let $\mathcal{C}^{(\mu)}$ denote the codebook containing all  pairs of  codewords  $(U^{n,(\mu_1)}(l),V^{n,(\mu_2)}(k))$. 
{On observing $ \mu $ and the classical indices $ (i,j) \in [1:2^{nR_1}]\times [1:2^{nR_2}] $ communicated by the encoder, the decoder first deduces $ (\mu_1,\mu_2) $ from $ \mu $} and then populates, 
\begin{equation}
D^{(\mu_1,\mu_2)}_{i,j} \deq \left \{(u^n, v^n)\in \mathcal{C}^{(\mu)}: (u^n,v^n) \in \mathcal{T}_{\delta}^{(n)}(UV) \text{ and } (u^n, v^n) \in \mathcal{B}_1^{(\mu_1)}(i)\times \mathcal{B}_2^{(\mu_2)}(j)\right \}.
%
%
\label{eq:POVM-17}
\end{equation}
For every $ \mu \in [1:N] $, $i\in [1:2^{nR_1}]$ and $j\in [1, 2^{nR_2}]$ define the function $F^{(\mu)}(i,j)=(u^n, v^n)$ if  $(u^n,v^n)$ is the only element of $D^{(\mu_1,\mu_2)}_{i,j}$; otherwise $F^{(\mu)}(i,j)=(u_0^n, v_0^n)$
Further, $F^{(\mu)}(i,j)=(u_0^n, v_0^n)$ for $i=0$ or $j=0$. Finally, the decoder produces $ z^n \in \mathcal{Z}^n $ according to the map $ g^n(F^{(\mu)}(i,j))$. 
With this mapping, we form the following collection of operators, denoted by $\tilde{M}_{AB}^{(n)} $,
\begin{equation*}
\tilde{\Lambda}_{u^n,v^n}^{AB} \delequal \frac{1}{N_{1}N_{2}}\sum_{\mu_{1}=1}^{N_{1}}\sum_{\mu_{2}=1}^{N_{2}}\sum_{\substack{(i,j): F^{(\mu)}(i,j)=(u^n,v^n)}}\hspace{-10pt} \Gamma^{A, ( \mu_1)}_i\tensor \Gamma^{B, ( \mu_2)}_j, \qquad \forall (u^n,v^n) \in \mathcal{U}^n\times \mathcal{V}^n.
\end{equation*}
Note that for $\tilde{\Lambda}_{u^n,v^n}^{AB} = 0$ for $(u^n,v^n) \notin (\TDeltan(U)\times\TDeltan(V))\medcup\{(u^n_0,v^n_0)\}.$
We show that $\tilde{M}_{AB}^{(n)} $ is a POVM that is $\epsilon$-faithful to the intermediate POVM $\bar{M}^{\tensor n}_A\tensor \bar{M}^{\tensor n}_B$, with respect to $\rho_{AB}^{\tensor n}$. For faithful simulation of the original POVM $M_{AB}$, we apply the deterministic mapping $g^n(u^n,v^n)$ to the classical outputs of $\tilde{M}_{AB}^{(n)}$. More precisely, we construct the POVM $\hat{M}_{AB}^{(n)}$ with the following operators: 
$$\hat{\Lambda}^{AB}_{z^n}=\sum_{u^n,v^n}  \tilde{\Lambda}_{u^n,v^n}^{AB}\mathbbm{1}_{\{g^n(u^n,v^n)=z^n\}}, ~ \forall z^n\in \mathcal{Z}^n .$$
\subsection{Analysis of POVM and Trace Distance}
\label{sec:traceDistance_POVM}
\add{In what follows, we show that $\hat{M}_{AB}^{(n)}$ is a POVM, and is $\epsilon$-faithful with respect to $\rho_{AB}$ (according to Definition \ref{def:faith-sim}) to $M_{AB}$, where $\epsilon>0$ can be made arbitrarily small for sufficiently large $ n$. More precisely, we show that, with probability sufficiently close to 1,
\begin{equation}\label{eq:faithful M hat}
\sum_{z^n} \|\sqrt{\rho_{AB}^{\tensor n}} \left(\Lambda_{z^n}^{AB}-\hat{\Lambda}_{z^n}^{AB}\right) \sqrt{\rho_{AB}^{\tensor n}}\|_1\leq \epsilon.
\end{equation}
According to the decomposition of $\Lambda_{z}$, given in \eqref{eq:LambdaAB decompos}, the above inequality is equivalent to 
\begin{align*}
\sum_{z^n} \norm{\sum_{u^n,v^n} \mathbbm{1}_{\{g(u^n,v^n)=z^n\}} \left( \sqrt{\rho_{AB}^{\tensor n}} (\bar{\Lambda}^A_{u^n}\tensor \bar{\Lambda}^B_{v^n}-\tilde{\Lambda}_{u^n, v^n}^{AB}) \sqrt{\rho_{AB}^{\tensor n}}\right)}_1\leq \epsilon.
\end{align*}
From triangle inequality, the left-hand side of the above inequality does not exceed the following 
\begin{align*}
\sum_{z^n} & \sum_{u^n,v^n} \mathbbm{1}_{\{g(u^n,v^n)=z^n\}} \norm{ \sqrt{\rho_{AB}^{\tensor n}} (\bar{\Lambda}^A_{u^n}\tensor \bar{\Lambda}^B_{v^n}-\tilde{\Lambda}_{u^n, v^n}^{AB}) \sqrt{\rho_{AB}^{\tensor n}}}_1=\sum_{u^n,v^n} \norm{ \sqrt{\rho_{AB}^{\tensor n}} (\bar{\Lambda}^A_{u^n}\tensor \bar{\Lambda}^B_{v^n}-\tilde{\Lambda}_{u^n, v^n}^{AB}) \sqrt{\rho_{AB}^{\tensor n}}}_1.
\end{align*}
Hence, it is sufficient to show that the above quantity is no greater than $\epsilon$, with probability sufficiently close to 1. This is equivalent to showing that $\tilde{M}^{(n)}_{AB}$ is $\epsilon$-faithful to $\bar{M}^{\tensor n}_A\tensor \bar{M}^{\tensor n}_B$ with respect to $\rho^{\tensor n}_{AB}$. 
Alternatively, using Lemma 1, we prove the following inequality 
\begin{align}\label{eq:actual trace dist}
    G \deq\left\| (\text{id}\tensor \bar{M}_A^{\tensor n}\tensor \bar{M}_B^{\tensor n}) (\Psi^\rho_{R^nA^nB^n})- (\text{id}\tensor \tilde{M}_{AB}^{(n)} ) (\Psi^\rho_{R^nA^nB^n}) \right\|_1\leq \epsilon.
\end{align}}
We characterize the conditions on $(n,N, R_1, R_2)$ under which the inequality given in \eqref{eq:actual trace dist} holds, using the following steps.\\
\noindent \textbf{Step 1: $M_1^{(n,\mu_1)}$ and $M_2^{(n,\mu_2)}$ are sub-POVMs and individually approximating}\\
As a first step, one can show that with probability sufficiently close to one, $M_1^{( n, \mu_1)}$ and $M_2^{( n, \mu_2)}$ form sub-POVMs for all $\mu_1\in [1,N_1]$ and $\mu_2\in [1,N_2]$, and also individually approximate the corresponding tensor product POVMs. More precisely the following Lemma holds. 

\begin{lem}\label{lem:M_XM_Y POVM}
For any two positive integers $N_1$ and $ N_2 $, and $\epsilon, \varepsilon, \varepsilon', \eta \in (0,1)$, as in \eqref{eq:gamma_mu}, and any $\zeta\in (0,1)$, there exists $n(\epsilon, \varepsilon, \varepsilon', \eta,\zeta)$ such that for all $n\geq n(\epsilon, \varepsilon, \varepsilon', \eta,\zeta)$, the collection of operators $ M_1^{( n, \mu_1)}$ and $ M_2^{( n, \mu_2)}$ form sub-POVMs for all $\mu_1 \in [1,N_1]$ and $\mu_2 \in [1,N_2]$ with probability at least $(1-\zeta)$, provided that 
\begin{equation*}
\tilde{R}_1> I(U;RB)_{\sigma_1}, \quad \text{and} \quad \tilde{R}_2> I(V;RA)_{\sigma_2},
\end{equation*} 
where $\sigma_1, \sigma_2$ are defined as in the statement of the theorem. 
In addition, if
\begin{align}
    \frac{1}{n}\log_2 N_1 +\tilde{R}_1> S(U)_{\sigma_1},~~~
    \frac{1}{n}\log_2 N_2 +\tilde{R}_2> S(V)_{\sigma_2},
\end{align}
then with probability at least $(1-\zeta)$ the collection of average operators $M_i^{( n)} \deq\frac{1}{N_i}\sum_{\mu_i} [M_i^{( n, \mu_i)}], i=1,2$  are $\epsilon$-faithful to $M_A^{\tensor n}$ with respect to $\rho_A^{\tensor n}$ and $M_B^{\tensor n}$ with respect to $\rho_B^{\tensor n}$, respectively. 
\end{lem}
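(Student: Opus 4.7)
The plan is to reduce this lemma to two independent applications of Winter's measurement compression theorem (Theorem \ref{extrinsic}), one for $\bar{M}_A$ acting on $\rho_A$ and one for $\bar{M}_B$ acting on $\rho_B$. The crucial observation is that the purification $\Psi^{\rho_{AB}}_{RAB}$ of $\rho_{AB}$ simultaneously serves as a purification of the marginal $\rho_A$ with reference system $RB$, and of $\rho_B$ with reference system $RA$. Consequently, invoking Theorem \ref{extrinsic} for $\bar{M}_A$ produces the information quantity $I(U;RB)_{\sigma_1}$ rather than $I(U;R)$, matching the hypothesis of the lemma; symmetric reasoning yields $I(V;RA)_{\sigma_2}$ for $\bar{M}_B$. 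Because the $\mu_1$- and $\mu_2$-indexed constructions are drawn independently, the joint claim will follow from a final union bound.

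To establish that $M_1^{(n,\mu_1)}$ is a sub-POVM with high probability, I would apply an operator Chernoff bound to the random sum $\sum_{l=1}^{2^{n\tilde{R}_1}} A^{(\mu_1)}_{U^{n,(\mu_1)}(l)}$. The nested projections $\hat{\Pi}^A$, $\Pi_{\rho_A}$ and $\Pi_{u^n}^A$ entering \eqref{eq:Lambdau'^n v^n}--\eqref{eq:Lambda u^n} bound the operator norm of each summand by roughly $2^{-n\tilde{R}_1}\cdot 2^{n(S(\rho_A)-S(U|RB)_{\sigma_1})}$, while the pruned distribution \eqref{def:prunedDist} forces the expectation to lie inside the $\hat{\Pi}^A$-typical subspace and to remain below $(1+\eta)I$ up to vanishing corrections. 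Under the assumption $\tilde{R}_1 > I(U;RB)_{\sigma_1}$, the resulting concentration bound is doubly exponential in $n$, which absorbs the union bound over $\mu_1\in[1,N_1]$ with $N_1$ at most exponential in $n$. The argument for $M_2^{(n,\mu_2)}$ is identical, with $\bar{M}_B$, $\rho_B$ and reference $RA$ substituted throughout. A final union bound over the two failure events delivers the joint sub-POVM property with probability at least $1-\zeta$.

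For the faithfulness claim, once the sub-POVM property is in hand I would follow Winter's trace-norm analysis: a gentle-measurement step combined with typical-subspace bounds shows that $M_i^{(n)}$ applied to the relevant marginal state approximates $\bar{M}_A^{\otimes n}$ (respectively $\bar{M}_B^{\otimes n}$) in the trace norm that appears in Lemma \ref{lem:faithful_equivalence}, provided the common-randomness condition $\tfrac{1}{n}\log_2 N_i + \tilde{R}_i > S(U)_{\sigma_1}$ (respectively $S(V)_{\sigma_2}$) is satisfied. By Lemma \ref{lem:faithful_equivalence}, this trace-norm bound is exactly $\epsilon$-faithfulness in the sense of Definition \ref{def:faith-sim}.

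The main obstacle is bookkeeping rather than conceptual: one must verify that the two layers of projection in \eqref{eq:Lambdau'^n v^n} and \eqref{eq:Lambda u^n} supply the operator-norm bounds demanded by operator Chernoff, and that the pruning of $\{\lambda_{u^n}^A\}$ and $\{\lambda_{v^n}^B\}$ to the $\delta$-typical set perturbs the expectations only by $o(1)$ so that the typical-subspace inequalities of \cite{winter} transfer cleanly. Once these quantitative estimates are assembled, the remainder of the proof is a direct transcription of the argument establishing Theorem \ref{extrinsic}, executed in parallel for the two sub-systems.
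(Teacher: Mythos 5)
Your proposal matches the paper's approach exactly: the paper's proof of this lemma is a one-line deferral to Winter's Theorem~2 argument applied separately to each marginal, and the key reduction you identify --- that $\Psi^{\rho_{AB}}_{RAB}$ serves as a purification of $\rho_A$ with reference $RB$ (and symmetrically of $\rho_B$ with reference $RA$), so Winter's $I(U;R)$ becomes $I(U;RB)_{\sigma_1}$ --- is precisely what the definitions of $\sigma_1$ and $\sigma_2$ encode, after which the operator Chernoff bound, union bound over $\mu_i$, and trace-norm/gentle-measurement argument transcribe directly. One small slip: the operator-norm exponent you want is $S(\rho_A)-S(RB\,|\,U)_{\sigma_1}=I(U;RB)_{\sigma_1}$, not $S(\rho_A)-S(U\,|\,RB)_{\sigma_1}$, though this does not affect your stated rate conditions or conclusion.
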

\begin{proof}
The proof uses a similar argument as in that of Theorem 2 in \cite{winter}. Hence it is omitted. 
\end{proof}
As a result of the lemma, $\tilde{M}_{AB}^{(n)}$ and $\hat{M}_{AB}^{(n)}$ are valid POVMs with high probability. 

\noindent \textbf{Step 2: Isolating the effect of  un-binned approximating measurements}\\
In this step, we separate out the effect of un-binned approximating measurements from $G$ in \eqref{eq:actual trace dist}. This is done by adding and subtracting an appropriate term within the trace norm and applying triangle inequality, which bounds $G$ as $G\leq S_1 +S_2$, where 
\begin{align}
    & S_1 \deq    \left \| (\text{id}\tensor \bar{M}_A^{\tensor n}\tensor \bar{M}_B^{\tensor n}) (\Psi^\rho_{R^nA^nB^n})- \frac{1}{N_1N_2}\sum_{\mu_1,\mu_2}(\text{id}\tensor  [M_1^{( n, \mu_1)}]\tensor [M_2^{( n, \mu_2)}]) (\Psi^\rho_{R^nA^nB^n})\right \|_1,\nonumber \\
    &S_2 \deq \left \| \frac{1}{N_1N_2}\sum_{\mu_1,\mu_2}(\text{id}\tensor  [M_1^{( n, \mu_1)}]\tensor [M_2^{( n, \mu_2)}]) (\Psi^\rho_{R^nA^nB^n})-(\text{id}\tensor \tilde{M}_{AB}^{(n)} ) (\Psi^\rho_{R^nA^nB^n}) \right \|_1, \label{eq:trace-distance}
\end{align}
{where the $S_1$ captures the effect of using approximating POVMs $ M_1^{( n, \mu_1)} $ and $ M_2^{( n, \mu_2)} $ instead of the actual POVMs $ \bar{M}_A^{\tensor n} $ and $ \bar{M}_B^{\tensor n} $, while $S_2$ captures the error introduced by binning these approximating POVMs.} 
Before we proceed further, we provide the following lemma which will be useful in the rest of the paper.
\begin{lem}\label{lem:Separate}
	Given a density operator $ \rho_{AB} \in \mathcal{D}(\mathcal{H}_{AB}) $, a sub-POVM
$M_Y \deq \left \{\Lambda_y^B: y \in \mathcal{Y}\right \} $ acting on $ \mathcal{H}_B,  $ for some set $\mathcal{Y}$, 
and any Hermitian operator $\Gamma^A$ acting on $ \mathcal{H}_A $, we have
	\begin{align}\label{eq:lemSeparate1}
	\sum_{y \in \mathcal{Y}}\left\| \sqrt{\rho_{AB}}\left (\Gamma^A\tensor \Lambda_y^B\right )\sqrt{\rho_{AB}}\right \|_1 \leq \left \| \sqrt{\rho_A} \Gamma^A\sqrt{\rho_{A}} \right \|_1,
	\end{align} with equality if $ \displaystyle \sum_{y \in \mathcal{Y}}\Lambda_y^B = I  $, where $ \rho_A = \Tr_{B}\{\rho_{AB}\}$.
\end{lem}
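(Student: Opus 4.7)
The plan is to recast $\sqrt{\rho_{AB}}(\Gamma^A\tensor\Lambda_y^B)\sqrt{\rho_{AB}}$ in a form where a single clean triangle inequality produces the stated bound. A naive Jordan decomposition of $\Gamma^A$ applied directly would only yield $\tr[|\Gamma^A|\rho_A]$, which is in general strictly larger than $\|\sqrt{\rho_A}\Gamma^A\sqrt{\rho_A}\|_1$, so a more careful manipulation is needed. First I would reduce to the support of $\rho_A$: letting $P_A$ denote the projector onto $\mathrm{supp}(\rho_A)$, the marginal identity $\rho_A=\tr_B\rho_{AB}$ forces $(P_A\tensor I_B)\sqrt{\rho_{AB}}=\sqrt{\rho_{AB}}$, so without loss of generality $\Gamma^A$ may be replaced by $P_A\Gamma^A P_A$ and $\sqrt{\rho_A}$ treated as invertible on this support.

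Next I would introduce $Q_A\deq\sqrt{\rho_A}\Gamma^A\sqrt{\rho_A}$ and $M\deq\sqrt{\rho_{AB}}(\sqrt{\rho_A}^{-1}\tensor I_B)$, yielding the factorization
\begin{equation*}
\sqrt{\rho_{AB}}(\Gamma^A\tensor\Lambda_y^B)\sqrt{\rho_{AB}}=M(Q_A\tensor\Lambda_y^B)M^\dagger,
\end{equation*}
together with the key identity $\tr_B(M^\dagger M)=\sqrt{\rho_A}^{-1}\rho_A\sqrt{\rho_A}^{-1}=P_A$. Jordan-decomposing $Q_A=Q_+-Q_-$ into mutually orthogonal positive parts so that $\tr|Q_A|=\tr Q_++\tr Q_-=\|\sqrt{\rho_A}\Gamma^A\sqrt{\rho_A}\|_1$, and applying the triangle inequality with the fact that the trace norm of a PSD operator equals its trace, I obtain
\begin{equation*}
\|M(Q_A\tensor\Lambda_y^B)M^\dagger\|_1\leq\tr[M(Q_+\tensor\Lambda_y^B)M^\dagger]+\tr[M(Q_-\tensor\Lambda_y^B)M^\dagger].
\end{equation*}
Summing over $y$, invoking the cyclic trace to push $M^\dagger M$ across, and using $\sum_y\Lambda_y^B\leq I_B$ together with $Q_\pm\geq 0$ bounds the sum by $\tr[(Q_+\tensor I_B)M^\dagger M]+\tr[(Q_-\tensor I_B)M^\dagger M]=\tr_A[Q_+ P_A]+\tr_A[Q_- P_A]=\tr|Q_A|$, which is exactly the right-hand side of the claimed inequality.

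For the equality clause, when $\Gamma^A\geq 0$ one has $Q_-=0$ and the triangle inequality is vacuous; if in addition $\sum_y\Lambda_y^B=I_B$ the monotonicity step is also sharp, and the whole chain collapses to the direct computation $\sum_y\tr[\rho_{AB}(\Gamma^A\tensor\Lambda_y^B)]=\tr[\rho_A\Gamma^A]=\|\sqrt{\rho_A}\Gamma^A\sqrt{\rho_A}\|_1$. The main technical obstacle I anticipate is handling the possible singularity of $\rho_A$, which the initial support reduction dispatches cleanly; beyond that the argument is essentially bookkeeping around the partial-trace identity $\tr_B(M^\dagger M)=P_A$.
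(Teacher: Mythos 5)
Your proof of the inequality is correct and takes a genuinely different route from the paper's. The paper passes everything through purifications: the transpose identity behind Lemma \ref{lem:faithful_equivalence} converts each trace-norm term into a partial trace over the purified state $\Psi_{RAB}$, the terms are packaged into a block-diagonal operator indexed by $y$ and re-read as a measurement applied to the classical-quantum state $\sigma_{RAY}$, and a re-purification of $\sigma_{RAY}$ is used at the very end to read off the bound. Your argument instead factorizes $\sqrt{\rho_{AB}}(\Gamma^A\tensor\Lambda_y^B)\sqrt{\rho_{AB}}=M(Q_A\tensor\Lambda_y^B)M^\dagger$ with $M=\sqrt{\rho_{AB}}(\sqrt{\rho_A}^{-1}\tensor I_B)$, notes that $\tr_B(M^\dagger M)=P_A$, and reduces the whole estimate to positivity by Jordan-decomposing $Q_A=\sqrt{\rho_A}\Gamma^A\sqrt{\rho_A}$. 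This is self-contained (no reference system, no purification) and it exposes the only two places where slack can arise: the triangle inequality applied to $Q_A=Q_+-Q_-$ (vacuous when $\Gamma^A\geq 0$, since then $Q_-=0$), and the sub-POVM monotonicity $\sum_y\Lambda_y^B\leq I$ (sharp when the POVM is complete).

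Where you diverge from the statement is the equality clause. The lemma claims equality whenever $\sum_y\Lambda_y^B=I$, for \emph{any} Hermitian $\Gamma^A$; your argument yields it only under the extra hypothesis $\Gamma^A\geq 0$. Your more guarded version is in fact the correct one, and the lemma's unconditional equality is false. Take $\rho_{AB}=\ketbra{\Phi^+}$ with $\ket{\Phi^+}=\frac{1}{\sqrt{2}}(\ket{00}+\ket{11})$, $\Gamma^A=\ketbra{0}-\ketbra{1}$, and the complete POVM $\{\ketbra{+},\ketbra{-}\}$ on $B$. Then both $\bra{\Phi^+}(\Gamma^A\tensor\ketbra{+})\ket{\Phi^+}$ and $\bra{\Phi^+}(\Gamma^A\tensor\ketbra{-})\ket{\Phi^+}$ vanish, so the left side of \eqref{eq:lemSeparate1} is $0$, while the right side is $\|\Gamma^A/2\|_1=1$. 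The paper's own proof of the equality has a gap at its final step: having re-purified the mixed state $\sigma_{RAY}$ with an auxiliary register $Z$, it then traces over $Z$ as though this preserved the trace norm, but partial trace is only contractive, so that step is in general a strict inequality. Since every subsequent appeal to Lemma \ref{lem:Separate} in the paper (the proof of Lemma \ref{lem:mutual covering}, the bound on $\EE[J_2]$, Propositions \ref{prop:Lemma for S_234} and \ref{prop:Lemma for Q2}) only requires the $\leq$ direction, nothing downstream is affected; but the equality clause of the lemma should carry the additional hypothesis $\Gamma^A\geq 0$, exactly as your proposal has it.
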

\begin{proof}
 The proof is provided in Appendix \ref{appx:proofLemmaSeparate}. 
\end{proof}
Next, we show $S_1$ is sufficiently small using the following Mutual Covering Lemma.
\begin{lem}[Mutual Covering Lemma]\label{lem:mutual covering}
Suppose the sub-POVM $\hat{M}_X$ is $\epsilon$-faithful to $M_X$ with respect to $\rho_X$, and the sub-POVM $\hat{M}_Y$ is $\epsilon$-faithful to ${M}_Y$ with respect to $\rho_Y$, where $\rho_X = \Tr_Y{\{\rho_{XY}\}}$ and $\rho_Y = \Tr_X{\{\rho_{XY}\}}$. Then the sub-POVM $\hat{M}_X\tensor \hat{M}_Y$ is $2\epsilon$-faithful to the POVM ${M}_X \tensor {M}_Y$ with respect to $\rho_{XY}$.
\end{lem}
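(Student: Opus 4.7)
The plan is to reduce the joint approximation error to the two marginal errors via a standard telescoping identity, and then invoke Lemma \ref{lem:Separate} to reduce each piece to a single-subsystem trace-norm sum that is already controlled by the individual $\epsilon$-faithfulness hypotheses. Concretely, writing $\tilde{\Pi}_X \deq \sum_x \hat{\Lambda}_x^X$ and $\tilde{\Pi}_Y \deq \sum_y \hat{\Lambda}_y^Y$, the faithfulness inequality \eqref{eq:faithful-sim-cond-1_2} for $\hat{M}_X \tensor \hat{M}_Y$ splits into the trace-norm sum $\sum_{x,y}\|\sqrt{\rho_{XY}}(\Lambda^X_x \tensor \Lambda^Y_y - \hat{\Lambda}^X_x \tensor \hat{\Lambda}^Y_y)\sqrt{\rho_{XY}}\|_1$ and the residual trace $\tr\{(I - \tilde{\Pi}_X \tensor \tilde{\Pi}_Y)\rho_{XY}\}$, and I would bound these two contributions separately.

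For the trace-norm piece I would perform the telescoping
\begin{equation*}
\Lambda^X_x \tensor \Lambda^Y_y - \hat{\Lambda}^X_x \tensor \hat{\Lambda}^Y_y = \Lambda^X_x \tensor (\Lambda^Y_y - \hat{\Lambda}^Y_y) + (\Lambda^X_x - \hat{\Lambda}^X_x) \tensor \hat{\Lambda}^Y_y,
\end{equation*}
apply the triangle inequality, and then invoke Lemma \ref{lem:Separate} twice. Since $M_X$ is a genuine POVM (so $\sum_x \Lambda_x^X = I$), summing the first summand over $x$ gives the equality bound $\|\sqrt{\rho_Y}(\Lambda^Y_y - \hat{\Lambda}^Y_y)\sqrt{\rho_Y}\|_1$; since $\hat{M}_Y$ is only a sub-POVM, summing the second summand over $y$ gives the inequality bound $\|\sqrt{\rho_X}(\Lambda^X_x - \hat{\Lambda}^X_x)\sqrt{\rho_X}\|_1$. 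After the outer sums in $y$ and $x$ respectively, these are exactly the marginal trace-norm contributions from the hypothesized faithfulness of $\hat{M}_Y$ and $\hat{M}_X$.

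For the residual-trace piece I would use the operator identity
\begin{equation*}
I - \tilde{\Pi}_X \tensor \tilde{\Pi}_Y = (I - \tilde{\Pi}_X) \tensor I + \tilde{\Pi}_X \tensor (I - \tilde{\Pi}_Y),
\end{equation*}
combined with $0 \leq \tilde{\Pi}_X \leq I$ (which gives $\tilde{\Pi}_X \tensor (I - \tilde{\Pi}_Y) \leq I \tensor (I - \tilde{\Pi}_Y)$), and take partial traces to conclude
\begin{equation*}
\tr\{(I - \tilde{\Pi}_X \tensor \tilde{\Pi}_Y)\rho_{XY}\} \leq \tr\{(I - \tilde{\Pi}_X)\rho_X\} + \tr\{(I - \tilde{\Pi}_Y)\rho_Y\}.
\end{equation*}
Adding the bounds from the two pieces and invoking the $\epsilon$-faithfulness of $\hat{M}_X$ with respect to $\rho_X$ and of $\hat{M}_Y$ with respect to $\rho_Y$ yields $2\epsilon$, as required. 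I do not anticipate a substantive obstacle; the only care point is the asymmetric use of Lemma \ref{lem:Separate} (equality against the full POVM $M_X$, inequality against the sub-POVM $\hat{M}_Y$), which is forced by the hypotheses but does not spoil the symmetric final bound.
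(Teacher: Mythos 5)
Your proposal is correct and takes essentially the same route as the paper's proof: the same telescoping decomposition of $\Lambda^X\tensor\Lambda^Y-\hat{\Lambda}^X\tensor\hat{\Lambda}^Y$, two applications of Lemma~\ref{lem:Separate} (equality against the genuine POVM, inequality against the sub-POVM), and the operator identity $I-\tilde{\Pi}_X\tensor\tilde{\Pi}_Y=(I-\tilde{\Pi}_X)\tensor I+\tilde{\Pi}_X\tensor(I-\tilde{\Pi}_Y)$ together with $\tilde{\Pi}_X\leq I$ to control the residual trace. The only cosmetic differences are which factor retains the full POVM in the telescoping (you keep $\Lambda^X_x$, the paper keeps $\Lambda^Y_j$) and that the paper reaches the residual-trace bound by first adding and subtracting the marginal residuals and then showing the leftover is nonpositive, whereas you bound the residual directly; these are the same estimate arranged differently.
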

\begin{proof}
The proof is provided in the Appendix \ref{appx:proof of mutual covering}.
\end{proof}
Using Lemma \ref{lem:mutual covering} with $\rho_{XY} = \rho_{AB}^{\tensor n}$, $\hat{M}_X = \frac{1}{N_1}\sum_{\mu_1} [M_1^{( n, \mu_1)}]$, $\hat{M}_Y = \frac{1}{N_2}\sum_{\mu_2} [M_2^{( n, \mu_2)}]$, $M_X = \bar{M}_A^{\tensor n}$ and $M_Y = \bar{M}_B^{\tensor n}$, and Lemma \ref{lem:M_XM_Y POVM},   
 with probability at least $(1-\zeta)$, we have $S_1\leq 2 \epsilon$. 
 

\noindent\textbf{Step 3: Analyzing the effect of Binning}\\
In this step, we provide an upper bound on $S_2$.
For $(u^n, v^n) \in \mathcal{B}^{(\mu_1)}_1(i)\times \mathcal{B}^{(\mu_2)}_2(j)$
, define $e^{(\mu)}(u^n,v^n)\deq F^{(\mu)}(i,j)$. For any $(u^n, v^n) \notin \mathcal{C}^{(\mu)}$ define $e^{(\mu)}(u^n,v^n)=(u_0, v_0)$. Note that $e^{(\mu)}$ captures the overall effect of the binning followed by the decoding function $F^{(\mu)}$. For all $u^n\in \mathcal{U}^n$ and $v^n \in \mathcal{V}^n$, let $\Phi_{u^n,v^n}\deq\ketbra{u^n,v^n}$.  
With this notation, we simplify $S_2$ using the following proposition.
\begin{proposition}\label{prop:Lemma for Simplification1}
$S_2$ can be simplified as
\begin{align*}
    S_2= \frac{1}{N_1N_2}\sum_{\mu_1,\mu_2}  
\sum_{u^n \in \TDeltan(U)} \sum_{v^n \in \TDeltan(V)}  \big\|\Phi_{u^n, v^n} -\Phi_{e^{(\mu)}(u^n,v^n)} \big\|_1 \gamma^{(\mu_1)}_{u^n}\zeta^{(\mu_2)}_{v^n}\Omega_{u^n,v^n}, 
\end{align*}
where $\Omega_{u^n,v^n}$ is defined as
\begin{align}
    \Omega_{u^n,v^n}\deq \tr\Big\{ \sqrt{\rho^{\tensor n}_A\tensor \rho^{\tensor n}_B}^{-1}(\LambdauA\tensor\LambdavB)  \sqrt{\rho^{\tensor n}_A\tensor \rho^{\tensor n}_B}^{-1} \rho^{\tensor n}_{AB}\Big\}.\nonumber
\end{align}
\end{proposition}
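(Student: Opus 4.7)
The plan is to expand both operators inside the trace norm that defines $S_2$ as classical-quantum states sharing a common classical register on $\mathcal{U}^n\times\mathcal{V}^n$, so that their difference becomes a single sum of tensor products of classical label differences with positive operators on the reference. For the first term this is immediate: the action of $[M_1^{(n,\mu_1)}]\tensor [M_2^{(n,\mu_2)}]$ on $\Psi^\rho_{R^nA^nB^n}$ attaches the classical label $(u^n,v^n)$ to the block $\tau^{\mu}_{u^n,v^n}\deq \tr_{A^nB^n}\{(I\tensor A^{(\mu_1)}_{u^n}\tensor B^{(\mu_2)}_{v^n})\Psi^\rho_{R^nA^nB^n}\}$. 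For the second term, I would substitute the definition of $\tilde{\Lambda}^{AB}_{u^n,v^n}$ from Section \ref{sec:POVM binning} in terms of the binned operators, further expand $\Gamma^{A,(\mu_1)}_i$ and $\Gamma^{B,(\mu_2)}_j$ as sums of $A^{(\mu_1)}_{u^n}$ and $B^{(\mu_2)}_{v^n}$ over their bins, and then swap the order of summation so that the outer sum runs over the pre-image pair $(u^n,v^n)$ rather than the observed label. After this reindexing the second term becomes $\frac{1}{N_1N_2}\sum_{\mu_1,\mu_2}\sum_{u^n,v^n}\Phi_{e^{(\mu)}(u^n,v^n)}\tensor \tau^{\mu}_{u^n,v^n}$, which matches the first term in everything but the classical label.

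Once both terms are in this common form, the difference is a single sum of operators $(\Phi_{u^n,v^n}-\Phi_{e^{(\mu)}(u^n,v^n)})\tensor \tau^{\mu}_{u^n,v^n}$, with $\tau^{\mu}_{u^n,v^n}\succeq 0$. Applying the triangle inequality term-by-term and using multiplicativity of the trace norm under tensor products to factor each summand as $\|\Phi_{u^n,v^n}-\Phi_{e^{(\mu)}(u^n,v^n)}\|_1 \cdot \tr \tau^{\mu}_{u^n,v^n}$ produces the shape of the claim. The remaining work is to evaluate $\tr \tau^{\mu}_{u^n,v^n} = \tr\{(A^{(\mu_1)}_{u^n}\tensor B^{(\mu_2)}_{v^n})\rho_{AB}^{\tensor n}\}$, which after substituting the explicit expressions from \eqref{eq:A_uB_v} and using the tensor factorization $\sqrt{\rho_A^{\tensor n}\tensor \rho_B^{\tensor n}} = \sqrt{\rho_A^{\tensor n}}\tensor \sqrt{\rho_B^{\tensor n}}$, together with cyclicity of the trace, collapses precisely to $\gamma^{(\mu_1)}_{u^n}\zeta^{(\mu_2)}_{v^n}\Omega_{u^n,v^n}$.

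The main obstacle I expect is confining the summation to $(u^n,v^n)\in \mathcal{T}_\delta^{(n)}(U)\times \mathcal{T}_\delta^{(n)}(V)$. The operators $A^{(\mu_1)}_{u^n}$ and $B^{(\mu_2)}_{v^n}$ are defined only for typical sequences, whereas the completed sub-POVMs $[M_1^{(n,\mu_1)}]$ and $[M_2^{(n,\mu_2)}]$ each carry a residual operator attached to the arbitrarily fixed labels $u_0^n, v_0^n$, and these generate extra summands in the first term that have no obvious counterpart in the second. I would resolve this by observing that, by the construction of $F^{(\mu)}$, one has $e^{(\mu)}(u_0^n, v_0^n) = (u_0^n, v_0^n)$, so that the $(u_0^n, v_0^n)$ summand in the difference vanishes identically; the remaining edge contributions indexed by $(u_0^n, v^n)$ or $(u^n, v_0^n)$ involve the completion operators $I - \sum_{u^n\in\mathcal{T}_\delta^{(n)}(U)} A^{(\mu_1)}_{u^n}$ and $I - \sum_{v^n\in\mathcal{T}_\delta^{(n)}(V)} B^{(\mu_2)}_{v^n}$, whose expected trace against $\rho_{AB}^{\tensor n}$ is vanishing in $n$ — by the very same typical-subspace estimate that already underlies Lemma \ref{lem:M_XM_Y POVM} — and hence can be absorbed into error terms paralleling those controlling $S_1$. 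With this careful accounting, only typical $(u^n,v^n)$ survive and the triangle-inequality bound coincides with the simplified expression stated in the proposition, which is the form needed in the subsequent control of $G$ in \eqref{eq:actual trace dist}.
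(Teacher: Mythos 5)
Your expansion-and-reindexing framework is the right idea and your evaluation $\tr\{\tau^{\mu}_{u^n,v^n}\}=\gamma^{(\mu_1)}_{u^n}\zeta^{(\mu_2)}_{v^n}\Omega_{u^n,v^n}$ is the correct last step, but your handling of the completion-operator terms misses the observation on which the proof actually turns, and as written it cannot yield the stated identity.

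The paper expands both operators appearing in $S_2$ (equations \eqref{eq:A_term} and \eqref{eq:B_term} in the appendix) into one ``main'' piece indexed by typical $(u^n,v^n)$ and three boundary pieces that all carry the classical label $\Phi_{(u_0^n,v_0^n)}$ --- arising from the $i=0$, $j=0$, and $i=j=0$ bin indices on the $\tilde{M}^{(n)}_{AB}$ side, and from the completion operators $I-\sum_{u^n}A^{(\mu_1)}_{u^n}$, $I-\sum_{v^n}B^{(\mu_2)}_{v^n}$ on the $[M_1^{(n,\mu_1)}]\tensor[M_2^{(n,\mu_2)}]$ side. The crux, stated explicitly in the appendix, is that the three boundary pieces of \eqref{eq:A_term} are \emph{term-by-term identical} to those of \eqref{eq:B_term} (after noting $\sum_{j\geq 1}\sum_{v^n\in\mathcal{B}^{(\mu_2)}_2(j)}=\sum_{v^n\in\TDeltan(V)}$, etc.), so they cancel exactly inside the trace norm; only the main pieces survive, which is precisely why Proposition~\ref{prop:Lemma for Simplification1} is an equality. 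You instead propose to bound the edge contributions by a typical-subspace estimate and absorb them into $S_1$-style error terms. That turns the claim into an inequality with an additive slack that the downstream Mutual Packing argument (Proposition~\ref{prop:Lemma for S_3}) does not account for, and it is unnecessary work since the terms vanish exactly. Your auxiliary observation that $e^{(\mu)}(u_0^n,v_0^n)=(u_0^n,v_0^n)$ also does not hold as stated: $(u_0^n,v_0^n)$ lies outside the typical set, hence is assigned to no bin, and $e^{(\mu)}$ is not defined on it through $F^{(\mu)}$; the label $(u_0^n,v_0^n)$ enters only as the default \emph{output} of $F^{(\mu)}$ when $i=0$ or $j=0$, and it is precisely this convention that makes the boundary blocks of the two expansions coincide. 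One smaller point: the final pass from the trace norm of the remaining difference to the sum of per-$(u^n,v^n)$ trace norms is an equality here, not the triangle inequality you invoke, because after cancellation the operator is block-diagonal in the classical register and each surviving block is either purely positive or purely negative, so no partial cancellation is possible.
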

\begin{proof}
The proof is provided in Appendix \ref{appx:proof for Simplification1}.
\end{proof}
In the next proposition we provide a bound on $S_2$. 
\begin{proposition}[Mutual Packing]\label{prop:Lemma for S_3}
There exist functions  $\epsilon_{S_2}(\delta) $ and $\delta_{{S}_{2}}(\delta)$, such that for  all sufficiently small $\delta$ and sufficiently large $n$, we have $\PP\left({S}_2 >\epsilon_{{S_2}}(\delta) \right)$ $\leq \delta_{{S}_2}(\delta)$, if $ \tilde{R}_{1} + \tilde{R}_{2} - R_1 -R_2 < I(U;V)_{\sigma_{2}}$,  where $\sigma_3$ is the auxiliary state defined in the theorem and $\epsilon_{{S}_2},\delta_{{S}_2}  \searrow 0$ as $\delta \searrow 0$.
\end{proposition}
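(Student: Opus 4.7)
The plan is to view $S_2$ as a weighted decoding-error probability, bound its expectation over the joint randomness of the codebook generation and the bin assignment, and then convert to a high-probability statement via Markov's inequality. Since $\|\Phi_{u^n,v^n}-\Phi_{e^{(\mu)}(u^n,v^n)}\|_1$ equals $0$ when the decoder succeeds and $2$ otherwise, one may replace it by $2\cdot\mathbbm{1}_{\{e^{(\mu)}(u^n,v^n)\neq(u^n,v^n)\}}$, reducing the task to bounding the expectation of
\begin{equation*}
T \deq \frac{1}{N_1N_2}\sum_{\mu_1,\mu_2}\sum_{\substack{u^n\in \TDeltan(U)\\ v^n\in \TDeltan(V)}} \mathbbm{1}_{\{e^{(\mu)}(u^n,v^n)\neq(u^n,v^n)\}}\,\gamma^{(\mu_1)}_{u^n}\zeta^{(\mu_2)}_{v^n}\Omega_{u^n,v^n}.
\end{equation*}
By inspection of the decoder in \eqref{eq:POVM-17}, the error event is the union of two sub-events: $(\mathcal{E}_1)$ $(u^n,v^n)\notin \TDeltan(UV)$ (so $(u^n,v^n)$ is individually but not jointly typical); and $(\mathcal{E}_2)$ there exists some other pair $(\tilde u^n,\tilde v^n)\in \mathcal{C}^{(\mu)}\cap \TDeltan(UV)$ with $\tilde u^n\in \mathcal{B}^{(\mu_1)}_1(i)$ and $\tilde v^n\in \mathcal{B}^{(\mu_2)}_2(j)$. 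Writing $T\leq T_1+T_2$, each part is treated separately.

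For $T_1$, the product $\gamma^{(\mu_1)}_{u^n}\zeta^{(\mu_2)}_{v^n}\Omega_{u^n,v^n}$ represents, up to pruning constants, the probability that the approximating POVMs $M_1^{(n,\mu_1)}\tensor M_2^{(n,\mu_2)}$ applied to $\rho_{AB}^{\tensor n}$ produce outcome $(u^n,v^n)$. Combining Lemma \ref{lem:M_XM_Y POVM} with Lemma \ref{lem:mutual covering}, these POVMs $\epsilon$-faithfully simulate $\bar{M}_A^{\tensor n}\tensor \bar{M}_B^{\tensor n}$, whose classical output distribution on $\rho_{AB}^{\tensor n}$ is $\{\lambda^{AB}_{uv}\}^{\tensor n}$; classical strong typicality then gives $\sum_{(u^n,v^n)\notin \TDeltan(UV)}\lambda^{AB}_{u^n,v^n}\to 0$, so $\mathbb{E}[T_1]$ is vanishing up to $\epsilon$-sized corrections from the POVM substitution. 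The core of the proof is the bound on $T_2$, which is the mutual packing step. Conditional on the codebook, the bin indices are uniform and independent across sequences, so for any fixed $(\tilde u^n,\tilde v^n)\neq (u^n,v^n)$ one has $\PP(\tilde u^n\in \mathcal{B}^{(\mu_1)}_1(i),\,\tilde v^n\in \mathcal{B}^{(\mu_2)}_2(j))=2^{-n(R_1+R_2)}$. Taking expectation over the codebook via \eqref{def:prunedDist}, for any $(\tilde l,\tilde k)\neq (l,k)$ the pair $(U^{n,(\mu_1)}(\tilde l),V^{n,(\mu_2)}(\tilde k))$ is independent of the true outputs and drawn from the product of the pruned $U$- and $V$-marginals, so a standard packing estimate gives
\begin{equation*}
\PP\!\left((U^{n,(\mu_1)}(\tilde l),V^{n,(\mu_2)}(\tilde k))\in \TDeltan(UV)\right)\leq 2^{-n(I(U;V)_{\sigma_3}-\delta')}.
\end{equation*}
A union bound over the $2^{n(\tilde R_1+\tilde R_2)}$ distinct index pairs yields an expected collision count no larger than $2^{n(\tilde R_1+\tilde R_2-R_1-R_2-I(U;V)_{\sigma_3}+\delta')}$, which decays exponentially under the stated hypothesis.

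Combining both bounds gives $\mathbb{E}[S_2]\leq \epsilon_0(\delta)$ for some $\epsilon_0(\delta)\searrow 0$, and Markov's inequality converts this into $\PP(S_2>\sqrt{\epsilon_0(\delta)})\leq \sqrt{\epsilon_0(\delta)}$, establishing the claim with $\epsilon_{S_2}(\delta)=\delta_{S_2}(\delta)=\sqrt{\epsilon_0(\delta)}$. The principal obstacle will be ensuring that the typical-projector and inverse-state dressings inside $\Lambda^A_{u^n}\tensor \Lambda^B_{v^n}$ from \eqref{eq:A_uB_v}--\eqref{eq:Lambda u^n} do not deposit non-negligible weight on non-jointly-typical outcomes, and that $\sum_{u^n,v^n}\mathbb{E}[\gamma^{(\mu_1)}_{u^n}\zeta^{(\mu_2)}_{v^n}\Omega_{u^n,v^n}]\leq 1+o(1)$. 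This is achieved by using cyclicity of the trace to isolate $\rho_{AB}^{\tensor n}$, invoking the gentle-measurement lemma on the projectors $\PihatA,\PihatB,\PiuA,\PivB$, and thereby reducing the relevant weighted sums to classical probabilities under $\{\lambda^{AB}_{uv}\}^{\tensor n}$ up to vanishing additive terms that can be absorbed into $\epsilon_{S_2}(\delta)$.
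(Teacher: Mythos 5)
Your strategy matches the paper's: replace the trace-norm difference $\|\Phi_{u^n,v^n}-\Phi_{e^{(\mu)}(u^n,v^n)}\|_1$ by a decoding-error indicator, bound $\EE[S_2]$ over the codebook and bin randomness, and close with Markov. The split into $\mathcal{E}_1$ and $\mathcal{E}_2$ is a reasonable refinement (the paper's bound of the $1$-norm by $2\,\mathbbm{1}^{(\mu_1,\mu_2)}(u^n,v^n)$ alone omits the sub-case where $(u^n,v^n)\notin\TDeltan(UV)$ and no jointly typical competitor falls in the same bins, so your $T_1$ term accounts for something real). However, the way you close each half is too loose to stand as written.

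The concrete gap is in the $T_2$ packing estimate. The claim that ``for any $(\tilde l,\tilde k)\neq(l,k)$ the pair $(U^{n,(\mu_1)}(\tilde l),V^{n,(\mu_2)}(\tilde k))$ is independent of the true outputs'' is false when $\tilde l=l$ or $\tilde k=k$: if $\tilde l=l$ then $U^{n,(\mu_1)}(\tilde l)=u^n$ identically, so it lands in $u^n$'s bin automatically and the bin-agreement factor is $2^{-nR_2}$, not $2^{-n(R_1+R_2)}$; similarly for $\tilde k=k$. There is also the accidental-repeat event $U^{n,(\mu_1)}(\tilde l)=u^n$ for $\tilde l\neq l$, of probability about $2^{-nS(U)}$. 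The paper's computation in \eqref{eq:expecIndics} enumerates precisely these cases (giving the five bracketed terms) and then observes that, since $\tilde R_1\geq R_1$ and $\tilde R_2\geq R_2$, the generic term $2^{-n(I(U;V)-2\delta_1)}2^{n(\tilde R_1-R_1)}2^{n(\tilde R_2-R_2)}$ dominates; your union bound produces only that term and must additionally verify the others do not tighten the constraint, which is not automatic. Two smaller points: (i) for $T_1$ the chain is right in spirit — faithfulness of $\frac{1}{N_1N_2}\sum_\mu M_1^{(n,\mu_1)}\tensor M_2^{(n,\mu_2)}$ and $\|A\|_1\geq|\tr A|$ for Hermitian $A$ give $\sum_{u^n,v^n}\big|\lambda^{AB}_{u^n,v^n}-\tfrac{1}{N_1N_2}\sum_\mu\gamma^{(\mu_1)}_{u^n}\zeta^{(\mu_2)}_{v^n}\Omega_{u^n,v^n}\big|\leq 2\epsilon$, and combined with $\sum_{(u^n,v^n)\notin\TDeltan(UV)}\lambda^{AB}_{u^n,v^n}\leq\delta'$ this does bound $T_1$ — but this has to be written down and explicitly conditioned on the faithfulness event of Lemma \ref{lem:M_XM_Y POVM}; (ii) the normalization $\sum_{u^n,v^n}\EE[\gamma^{(\mu_1)}_{u^n}\zeta^{(\mu_2)}_{v^n}\Omega_{u^n,v^n}]\leq 1+o(1)$ is not available; Lemma \ref{lem:omega lambda} gives only $\sum\lambda^A_{u^n}\lambda^B_{v^n}\Omega_{u^n,v^n}\leq 2^{n\delta_{AB}}$, which suffices because the packing estimate decays exponentially.
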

\begin{proof}
The proof is provided in Appendix \ref{appx:proof of S_3}.
\end{proof}
 Using this, and from Step 2, with probability sufficiently close to one, we have $$G \leq S_1 + S_2 \leq 2\epsilon+ {\epsilon_{ S_2}}. $$ 
\subsection{Rate Constraints}
To sum-up, we showed that the trace distance inequality in \eqref{eq:actual trace dist} holds for sufficiently large $n$ and with probability sufficiently close to 1, if the following bounds hold:
\begin{subequations}\label{eq:rate-region 1}
\begin{align}
\tilde{R}_1& \geq I(U; RB)_{\sigma_1}\\
\tilde{R}_2& \geq I(V; RA)_{\sigma_2}\\
C_{1}+\tilde{R}_1  \geq S(U)_{\sigma_3},& \quad C_{2}+\tilde{R}_2 \geq S(V)_{\sigma_3}, \\
(\tilde{R}_1-R_1)+(\tilde{R}_2&-R_2)  < I(U;V)_{\sigma_3}\\
\tilde{R}_1\geq R_1\geq 0, &\quad \tilde{R}_2\geq R_2 \geq 0, \\
C_{1} +C_{2} \leq C, \quad & C_{1}\geq 0,\quad C_{2}\geq 0,
\end{align}
\end{subequations}
where $C_{i}\deq \frac{1}{n}\log_2 N_{i}; $  for $ i=1,2$. This implies that $\tilde{M}_{AB}^{(n)}$ is $\epsilon$-faithful to $\bar{M}^{\tensor n}_A\tensor \bar{M}^{\tensor n}_B$ with probability sufficiently close to one, and hence, $\hat{M}_{AB}^{(n)}$ is also $\epsilon$-faithful to $M_{AB}^{\tensor n}$ with respect to $\rho_{AB}^{\tensor n}$, i.e, \eqref{eq:faithful M hat} is satisfied. Therefore, there exists a distributed protocol with parameters $(n, 2^{nR_1}, 2^{nR_2}, 2^{nC})$ such that its overall POVM $\hat{M}_{AB}^{(n)}$ is $\epsilon$-faithful to $M_{AB}^{\tensor n}$ with respect to $\rho_{AB}^{\tensor n}$.
Lastly,  we complete the proof of the theorem using the following lemma.
\begin{lem}
Let $\mathcal{R}_1$ denote the set of all $(R_1,R_2,C)$ for which there exists $(\tilde{R}_1,\tilde{R}_2)$ such that the sextuple $(R_1,R_2, C_{1}, C_{2}, \tilde{R}_1, \tilde{R}_2)$ satisfies the inequalities in \eqref{eq:rate-region 1}. Let, $\mathcal{R}_2$ denote the set of all triples $(R_1, R_2, C)$ that satisfies  the inequalities in \eqref{eq:dist POVm appx rates} given in the statement of the theorem. Then, $
\mathcal{R}_1=\mathcal{R}_2$.
\end{lem}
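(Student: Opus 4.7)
The plan is to prove the set equality by Fourier--Motzkin elimination, projecting the eight-inequality system \eqref{eq:rate-region 1} (which involves the four auxiliary variables $\tilde{R}_1, \tilde{R}_2, C_1, C_2$) down onto the $(R_1, R_2, C)$-subspace, and showing that the projection coincides with the six-inequality system \eqref{eq:dist POVm appx rates}. This naturally splits into the two inclusions.

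For $\mathcal{R}_1 \subseteq \mathcal{R}_2$, I would start from any witnessing sextuple $(R_1, R_2, C_1, C_2, \tilde{R}_1, \tilde{R}_2)$ satisfying \eqref{eq:rate-region 1} and derive each inequality in \eqref{eq:dist POVm appx rates} by non-negative linear combinations. The key observation is that the mutual packing inequality $(\tilde{R}_1 - R_1) + (\tilde{R}_2 - R_2) \leq I(U;V)_{\sigma_3}$, together with the non-negativity $\tilde{R}_i - R_i \geq 0$, yields $\tilde{R}_1 - R_1 \leq I(U;V)_{\sigma_3}$; substituting $\tilde{R}_1 \geq I(U;RB)_{\sigma_1}$ produces \eqref{eq:rate1}, and an analogous argument gives \eqref{eq:rate2}. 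Adding the two rate lower bounds before subtracting the mutual packing slack gives \eqref{eq:rate3}. For the constraints involving $C$, I would combine $C_i + \tilde{R}_i \geq S(\,\cdot\,)_{\sigma_3}$ with the mutual packing bound and $C_1 + C_2 \leq C$; e.g.\ $R_1 + C \geq (\tilde{R}_1 - I(U;V)_{\sigma_3}) + C_1 \geq S(U)_{\sigma_3} - I(U;V)_{\sigma_3} = S(U|V)_{\sigma_3}$, with \eqref{eq:rate4} coming from summing (v), (vi) and (vii).

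For $\mathcal{R}_2 \subseteq \mathcal{R}_1$, I would carry out the Fourier--Motzkin elimination explicitly. After absorbing the $C_i$-variables via $C_1 + C_2 \leq C$, the constraints on $\tilde{R}_1 + \tilde{R}_2$ reduce to the single upper bound $\tilde{R}_1 + \tilde{R}_2 \leq R_1 + R_2 + I(U;V)_{\sigma_3}$ together with several lower bounds (sums of the individual lower bounds $\alpha, R_1, S_U - C$ on $\tilde{R}_1$ and $\beta, R_2, S_V - C$ on $\tilde{R}_2$, plus the direct bound $\tilde{R}_1 + \tilde{R}_2 + C \geq S(U)_{\sigma_3} + S(V)_{\sigma_3}$). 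Matching each lower bound against the upper bound produces the six inequalities \eqref{eq:rate1}--\eqref{eq:rate4} plus a few auxiliary ones such as $R_1 + R_2 + C \geq I(U;RB)_{\sigma_1} + S(V)_{\sigma_3} - I(U;V)_{\sigma_3}$ and $R_1 + R_2 + 2C \geq S(U,V)_{\sigma_3}$.

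The main obstacle is showing that every auxiliary inequality produced by the elimination is already implied by \eqref{eq:dist POVm appx rates}. This reduces to the universally valid bounds $I(U;RB)_{\sigma_1} \leq S(U)_{\sigma_3}$ and $I(V;RA)_{\sigma_2} \leq S(V)_{\sigma_3}$ (quantum mutual information is at most the marginal entropy, recalling $S(U)_{\sigma_1} = S(U)_{\sigma_3}$ because $U$'s marginal depends only on $\bar{M}_A$ and $\rho_A$), combined with $C \geq 0$. Indeed, $R_1+R_2+C \geq I(U;RB)_{\sigma_1} + S(V)_{\sigma_3} - I(U;V)_{\sigma_3}$ is dominated by $R_1 + R_2 + C \geq S(U,V)_{\sigma_3} = S(U)_{\sigma_3} + S(V)_{\sigma_3} - I(U;V)_{\sigma_3}$ precisely because $I(U;RB)_{\sigma_1} \leq S(U)_{\sigma_3}$, and the doubled-$C$ bound is trivially implied by \eqref{eq:rate4} and $C \geq 0$. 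Once all such redundancies are ruled out, the projection exactly matches $\mathcal{R}_2$, completing the proof.
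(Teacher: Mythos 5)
Your proof is correct and takes the same approach the paper does: the paper's own proof consists solely of citing Fourier–Motzkin elimination, and your write-up carries out the elimination explicitly, correctly identifying the redundant auxiliary inequalities (e.g.\ $R_1+R_2+C \geq I(U;RB)_{\sigma_1}+S(V)_{\sigma_3}-I(U;V)_{\sigma_3}$ and $R_1+R_2+2C \geq S(U,V)_{\sigma_3}$) and showing they are dominated via $I(U;RB)_{\sigma_1}\leq S(U)_{\sigma_1}=S(U)_{\sigma_3}$, $I(V;RA)_{\sigma_2}\leq S(V)_{\sigma_3}$, and $C\geq 0$.
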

\begin{proof}
The prove follows by Fourier-Motzkin elimination \cite{fourier-motzkin}.
\end{proof}


	\section{Q-C Distributed Rate Distortion Theory } \label{sec:QC_dist scr coding}
	
As an application of  faithful simulation of distributed measurements (Theorem \ref{thm: dist POVM appx}), we consider the distributed extension of \ac{qc} rate distortion coding \cite{qtc}. This problem is a quantum counterpart of the classical distributed source coding.  In this setting, consider a memoryless bipartite quantum source, characterized by $\rho_{AB}\in \mathcal{D}(\mathcal{H}_{A}\tensor \mathcal{H}_B)$. Alice and Bob have access to sub-systems $A$ and $B$, characterized by $\rho_A \in \mathcal{D}(\mathcal{H}_A)$  and $\rho_A \in \mathcal{D}(\mathcal{H}_A)$, respectively, where $\rho_A = \Tr_B\{\rho_{AB}\}$ and $\rho_B = \Tr_A\{\rho_{AB}\}$.
They both perform a measurement on $n$ copies of their sub-systems and send the classical bits to Eve. Upon receiving the classical bits sent by Alice and Bob, a reconstruction state is produced by Eve. The objective of Eve is to produce a reconstruction of the source $\rho_{AB}$ within a targeted distortion threshold which is measured by a given distortion observable.  
\subsection{Problem Formulation}
We first formulate this problem as follows. For any quantum information source, characterized by $\rho_{AB} \in \mathcal{D}(\mathcal{H}_A\tensor\mathcal{H}_B)$, denote its purification by $\phirho$.
\begin{definition} \label{def:qc source coding setup}
A \ac{qc} source coding setup is characterized by a triple $(\phirho,\mathcal{H}_{\hat{X}},\Delta)$, where $\phirho \in \mathcal{D}(\mathcal{H}_R\tensor \mathcal{H}_A\tensor \mathcal{H}_B)$ is a purified quantum state,  $\mathcal{H}_{\hat{X}}$ is a reconstruction Hilbert space, and $\Delta\in \mathcal{B}(\mathcal{H}_R \tensor \mathcal{H}_{\hat{X}})$, which satisfies $\Delta\geq 0$, is a distortion observable.
\end{definition}
Next, we formulate the action of Alice, Bob and Eve by the following definition. 
\begin{definition} \label{def:qc source code}
An $(n,\Theta_1,\Theta_2)$ \ac{qc} protocol for a given input and reconstruction Hilbert spaces $(\mathcal{H}_A\tensor \mathcal{H}_B, \mathcal{H}_{\hat{X}})$ is defined by POVMs $M_A^{(n)}$ and $M_B^{(n)}$ acting on $\mathcal{H}^{\tensor n}_A$ and $\mathcal{H}^{\tensor n}_B$ with $\Theta_1$ and $ \Theta_2$ number of outcomes, respectively, and a set of reconstruction states $S_{i,j}\in \mathcal{D}(\mathcal{H}_{\hat{X}}^{\tensor n})$ for all $i\in [1:\Theta_1], j\in [1:\Theta_2]$.
\end{definition}
The overall action of Alice, Bob and Eve, as a \ac{qc} protocol, on a quantum source $\rho_{AB}$ is given by the following operation 
\begin{equation}\label{eq:N_AB action}
\mathcal{N}_{A^nB^n\mapsto \hat{X}^n}: \rho^{\tensor n}_{AB} \mapsto  \sum_{i,j}  \tr\{(\Lambda^A_i\tensor \Lambda^B_j) \rho^{\tensor n}_{AB}\}~ S_{i,j}, 
\end{equation}
where $\{\Lambda^A_i\}$ and $\{\Lambda^B_j\}$ are the operators of the POVMs $M_A^{(n)}$ and $M_B^{(n)}$, respectively. With this notation and given a q-c source coding setup as in Definition \ref{def:qc source coding setup}, the distortion of a $(n=1, \Theta_1, \Theta_2)$ \ac{qc} protocol is measured as
 		\begin{equation}\label{eq:protocol distortion}
	{d}(\rho_{AB},\mathcal{N}_{AB\mapsto \hat{X}} ) \deq \tr\left\{\Delta \left((\id_R \tensor \mathcal{N}_{AB\mapsto \hat{X}}) (\Psi_{RAB}^{\rho_{AB}})\right)\right\} \nonumber.
	\end{equation}
%
For an $n$-letter protocol, we use symbol-wise average distortion observable defined as
	\begin{align}
	\Delta^{(n)} & = \cfrac{1}{n}\sum_{i=1}^{n}\Delta_{R_i\hat{X}_i}\otimes I_{R\hat{X}}^{\otimes [n]\backslash i},
	\end{align}
	where $\Delta_{R_i\hat{X}_i}$ is understood as the observable $\Delta$ acting on the $i$th instance space $\mathcal{H}_{R_i}\tensor \mathcal{H}_{\hat{X}_i}$ of the $n$-letter space $\mathcal{H}^{\tensor n}_{R}\tensor \mathcal{H}^{\tensor n}_{\hat{X}}$.
	With this notation, the distortion for an $(n, \Theta_1, \Theta_2)$ \ac{qc} protocol is given by
	\begin{equation*}\label{avg_dis3}
	\bar{d}(\rho_{AB}^{\otimes n},  \mathcal{N}_{A^nB^n\mapsto \hat{X}^n})\deq \tr\Big\{\Delta^{(n)}
	(\id \tensor \mathcal{N}_{A^nB^n\mapsto \hat{X}^n} )({\Psi_{R^nA^nB^n}^{\rho_{AB}}})
	\Big\}, 
	\end{equation*}
	where $\Psi_{R^nA^nB^n}^{\rho_{AB}}$ is the $n$-fold tensor product of  $\Psi_{RAB}^{\rho_{AB}}$ which is the given purification of the source.
	    
 The authors in \cite{qtc}   studied the point-to-point version of the above formulation. They considered a special distortion observable of the form $\Delta = \sum_{\hat{x} \in \hat{\mathcal{X}}}\Delta_{\hat{x}} \otimes\ketbra{\hat{x}},$
 where $\Delta_{\hat{x}}\geq 0$  acts on the reference Hilbert space and $\hat{\mathcal{X}}$ is the reconstruction alphabet. In this paper, we allow $\Delta$ to be any non-negative and bounded operator acting on the appropriate Hilbert spaces. Moreover, we allow for the use of any c-q reconstruction  mapping as the action of Eve. 

\begin{definition} \label{eq:qc achievable}
For a \ac{qc} source coding setup $(\phirho,\mathcal{H}_{\hat{X}},\Delta)$, a rate-distortion triplet ($R_1, R_2, D$) is said to be achievable, if for all $ \epsilon > 0 $ and all sufficiently large $n$, there exists an $ (n, \Theta_1, \Theta_2) $ \ac{qc} protocol satisfying
	\begin{align*}
	\frac{1}{n}\log_2 \Theta_i  &\leq R_i + \epsilon,  \quad i=1,2, \\
	\bar{d}(\rho_{AB}^{\otimes n}, \mathcal{N}_{A^nB^n\mapsto \hat{X}^n})
	&\leq D + \epsilon,	
    \end{align*}	
    where $\mathcal{N}_{A^nB^n\mapsto \hat{X}^n}$ is defined as in \eqref{eq:N_AB action}. The set of all achievable rate-distortion triplets $(R_1, R_2, D)$ is called the achievable rate-distortion region. 
\end{definition}
Our objective is to characterize the achievable rate-distortion region using single-letter information quantities. 

\subsection{Inner Bound}
%
We provide an inner bound to the achievable rate-distortion region which is stated in the following theorem. We employ a q-c protocol based on a randomized faithful simulation strategy involving a time sharing classical random variable $Q$ that is independent of the quantum source. This can be viewed as a conditional version of the faithful simulation problem considered in Section \ref{sec:Appx_POVM}.   

\begin{theorem}\label{thm:q-c dist scr}
	 For a \ac{qc} source coding setup $(\phirho,\mathcal{H}_{\hat{X}},\Delta)$, 
any rate-distortion triplet $ (R_1,R_2, D)$ satisfying the following inequalities is achievable
	 \begin{align*}
R_1 &\geq I(U;RB|Q)_{\sigma_1}-I(U;V|Q)_{\sigma_3},\\
R_2 &\geq I(V;RA|Q)_{\sigma_2}-I(U;V|Q)_{\sigma_3},\\
R_1+R_2 &\geq I(U;RB|Q)_{\sigma_1}+I(V;RA|Q)_{\sigma_2}-I(U;V|Q)_{\sigma_3},\\
D&\geq {d}(\rho_{AB},\mathcal{N}_{AB\mapsto \hat{X}}),
\end{align*} 
for  POVM  of the form $M_{AB} = \sum_{q \in \mathcal{Q}} P_Q(q) M_A^q \tensor M_B^q$, where for every $q\in \mathcal{Q}$,  $\add{M^q_A\deq}\{\Lambda^{A,q}_u\}_{u\in \mathcal{U}}$ and $\add{M^q_B}\deq\{\Lambda^{B,q}_v\}_{v\in \mathcal{V}}$ are POVMs acting on $\mathcal{H}_A\tensor \mathcal{H}_B$, and reconstruction states $\{S_{u,v,q}\}$ with each state in  $\mathcal{D}(\mathcal{H}_{\hat{X}})$, and some finite sets $\mathcal{U},\mathcal{V}$ and $\mathcal{Q}$. The quantum mutual information quantities are computed according to the auxiliary states
\add{$\sigma_{1}^{RUBQ} \deq \sum_{q \in \mathcal{Q}} P_Q(q) (\emph{id}_R\tensor M^q_A \tensor \emph{id}_B) ( \Psi^{\rho_{AB}}_{R A B})\tensor \ketbra{q},~ \sigma_{2}^{RAVQ} \deq \sum_{q \in \mathcal{Q}} P_Q(q) (\emph{id}_R\tensor \emph{id}_A \tensor M^q_B)\tensor \ketbra{q},$ and $
\sigma_{3}^{RUVQ} \deq \sum_{q \in \mathcal{Q}} P_Q(q) (\emph{id}_R\tensor M^q_A \tensor M^q_B) ( \Psi^{\rho_{AB}}_{R A B})\tensor \ketbra{q},$
}
where $(U,V)$ represents the output of $M_{A B}$, and $\mathcal{N}_{AB \mapsto \hat{X}}: \rho_{AB} \mapsto  \sum_{u,v,q} P_Q(q) \tr\{(\Lambda^{A,q}_u\tensor \Lambda^{B,q}_v) \rho_{AB}\}~ S_{u,v,q}$.
\end{theorem}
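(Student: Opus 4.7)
The plan is to reduce Theorem \ref{thm:q-c dist scr} to a time-sharing application of the distributed measurement simulation result (Theorem \ref{thm: dist POVM appx}), followed by a position-wise classical-to-quantum reconstruction map at Eve. The key observation is that while the overall POVM $M_{AB}=\sum_q P_Q(q)\, M_A^q\tensor M_B^q$ is not separable in the sense of Definition \ref{def:Joint Measurements}, for each fixed $q$ the POVM $M_A^q\tensor M_B^q$ is separable with trivial (identity) deterministic integration, so Theorem \ref{thm: dist POVM appx} applies conditionally on each value of $q$. Since Definition \ref{eq:qc achievable} imposes no constraint on common randomness, the time-sharing sequence can be distributed freely.

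First I would treat $Q^n \sim \prod_i P_Q$ as common randomness available to Alice, Bob and Eve. The induced partition $\mathcal{I}_q \deq \{i : q_i=q\}$ satisfies $|\mathcal{I}_q|/n \to P_Q(q)$ with high probability by the law of large numbers. On each sub-block $\mathcal{I}_q$ the three parties run the protocol of Theorem \ref{thm: dist POVM appx} for the separable POVM $M_A^q \tensor M_B^q$ on $\rho_{AB}^{\tensor |\mathcal{I}_q|}$, with per sub-block letter rates
$R_1^q \geq I(U;RB|Q{=}q)_{\sigma_1} - I(U;V|Q{=}q)_{\sigma_3}$ and analogously for $R_2^q$ and the sum rate, and with per sub-block trace-distance error driven below $\epsilon/|\mathcal{Q}|$. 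Aggregating over $q$, the total per-letter communication rate from Alice is $R_1 \approx \sum_q (|\mathcal{I}_q|/n)\, R_1^q \to \sum_q P_Q(q)\,R_1^q$, which, together with the classical-quantum structure of $\sigma_1,\sigma_3$ in the statement, recovers $I(U;RB|Q)_{\sigma_1}-I(U;V|Q)_{\sigma_3}$; the analogous computation yields the $R_2$ and sum-rate bounds.

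Next, Eve attaches a position-wise c-q map sending $(u_i,v_i,q_i)\mapsto S_{u_i,v_i,q_i}$ to produce the reconstruction on $\mathcal{H}_{\hat X}^{\tensor n}$. Writing $\mathcal{N}^{(n)}$ for the end-to-end protocol and $\mathcal{N}_{AB\to \hat X}^{\tensor n}$ for its ideal product counterpart (where on each position the simulated POVM is replaced by the true $M_A^{q_i}\tensor M_B^{q_i}$ followed by $S_{\cdot,\cdot,q_i}$), the per sub-block faithfulness bounds from Theorem \ref{thm: dist POVM appx}, combined with Lemma \ref{lem:faithful_equivalence}, monotonicity of the trace norm under the c-q reconstruction, and a union bound over $q\in\mathcal{Q}$, give
$\bigl\|(\text{id}\tensor \mathcal{N}^{(n)})(\Psi^{\rho_{AB}}_{R^nA^nB^n}) - (\text{id}\tensor \mathcal{N}_{AB\to \hat X}^{\tensor n})(\Psi^{\rho_{AB}}_{R^nA^nB^n})\bigr\|_1 \leq \epsilon.$
Because $\Delta^{(n)}$ is bounded (as $\Delta \geq 0$ is a bounded observable), H\"older's inequality gives $|\bar d(\rho_{AB}^{\tensor n},\mathcal{N}^{(n)}) - d(\rho_{AB},\mathcal{N}_{AB\to\hat X})| \leq \|\Delta\|_\infty\,\epsilon$, so the realized distortion is within $O(\epsilon)$ of $d(\rho_{AB},\mathcal{N}_{AB\to\hat X}) \leq D$, as required.

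The main obstacle, I expect, is the bookkeeping that stitches together the per sub-block invocations of Theorem \ref{thm: dist POVM appx} into a bona fide $n$-letter distributed protocol: one must check that the assembled operators form a valid POVM on $\mathcal{H}_{AB}^{\tensor n}$, that the sub-block errors aggregate to a vanishing global trace distance (the $\epsilon/|\mathcal{Q}|$ budgeting is safe because $|\mathcal{Q}|$ is finite and fixed), and that $|\mathcal{I}_q|/n$ concentrates tightly enough for the aggregated communication rate to sit within $\epsilon$ of its expectation. A secondary subtlety is handling the atypical positions of $Q^n$; these can be absorbed by assigning a fixed reconstruction state to such positions, incurring distortion penalty at most $\|\Delta\|_\infty$ times the probability of atypicality, which vanishes with $n$.
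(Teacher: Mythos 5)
Your proposal follows essentially the same route as the paper: invoke Theorem~\ref{thm: dist POVM appx} with the identity map as the deterministic integration (so $\mathcal{Z}=\mathcal{U}\times\mathcal{V}$), append Eve's position-wise c-q reconstruction $(u_i,v_i,q_i)\mapsto S_{u_i,v_i,q_i}$, and then bound the distortion gap by $\|\Delta^{(n)}\|_\infty$ times the trace-norm error of the simulated measurement, using $|\tr A|\le\|A\|_1$, Lemma~\ref{lem:norm1 of AB}, and monotonicity of trace distance under the c-q channel $\id\tensor\mathcal{L}_{UV\mapsto\hat X}^{\tensor n}$. The paper explicitly works through only the trivial-$\mathcal{Q}$ case and declares the extension ``straightforward but tedious''; your sketch spells out the time-sharing argument for general $\mathcal{Q}$, observing correctly that each $M_A^q\tensor M_B^q$ is separable with identity integration even though $M_{AB}$ as a whole need not be.

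There is, however, a genuine gap in your handling of common randomness. Definition~\ref{def:qc source code} defines a \emph{deterministic} q-c protocol: a fixed pair of POVMs and fixed reconstruction states, with no shared-randomness resource at all. The fact that Definition~\ref{eq:qc achievable} ``imposes no constraint on common randomness'' is not a license to use it freely --- it is because none is available. Consequently, both the random time-sharing sequence $Q^n\sim\prod_i P_Q$ and the common randomness index $\mu\in[1,N]$ that Theorem~\ref{thm: dist POVM appx} hands you must be eliminated before you have produced an admissible $(n,\Theta_1,\Theta_2)$ protocol. Both fixes are standard but need to be stated: for $Q^n$, replace the random sequence by a fixed sequence of the appropriate type (which also removes the LLN concentration step); for $\mu$, exploit linearity of the distortion functional to bound the $\mu$-average $\frac{1}{N}\sum_\mu \bar d(\rho_{AB}^{\tensor n},\tilde{\mathcal N}^{(\mu)})\le D+\epsilon\|\Delta\|_\infty$ and then select a single $\mu^*$ achieving at most the average. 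The paper performs this selection implicitly (it bounds the $\mu$-averaged distortion and stops there); your proposal skips it entirely, which is the one missing step that would leave the protocol ill-defined under the paper's model.
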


\begin{remark}
Note that for the auxiliary states $\sigma_i, i=1,2,3$, we have $I(R;Q)=0$.
\end{remark}	

\begin{proof}
In the interest of brevity, we provide the proof for the special case, when the time sharing random variable is trivial, i.e., $\mathcal{Q}$ is empty. An extension to the more general case is straightforward but tedious. For the special case, the proof follows from Theorem \ref{thm: dist POVM appx}. Fix POVMs $(M_{A}, M_B)$ and reconstruction states $S_{u,v}$ as in the statement of the theorem. Let $\mathcal{N}_{AB\mapsto \hat{X}}$ be the mapping corresponding to these POVMs and the reconstruction states. Then, 
${d}(\rho_{AB},\mathcal{N}_{AB\mapsto \hat{X}})\leq D. $
 According to Theorem \ref{thm: dist POVM appx}, for any $\epsilon>0$,  there exists an $(n, 2^{nR_1},2^{nR_2}, N)$ distributed protocol for $\epsilon$-faithful simulation of $M_A^{\tensor n}\tensor M_B^{\tensor n}$ with respect to $\rho_{AB}^{\tensor n}$ such that $(R_1,R_2)$ satisfies the inequalities in \eqref{eq:dist POVm appx rates} for $\bar{M}_A=M_A$ and $\bar{M}_B=M_B$.  Let $\tilde{M}_A^{(\mu)}, \tilde{M}_B^{(\mu)}, \mu \in [1:N]$ and $f^{(\mu)}$ be the POVMs and the deterministic decoding functions of this protocol with $\mathcal{Z}=\mathcal{U}\times \mathcal{V}$. 
%
%
We use these POVM's and mappings to construct a \ac{qc} protocol for distributed quantum source coding.

For each $\mu\in [1,N]$, consider the \ac{qc} protocol with parameters $\Theta_i=2^{nR_i}, i=1,2$, and POVMs $\tilde{M}_A^{(\mu)}, \tilde{M}_B^{(\mu)}$. Moreover, we use n-length reconstruction states $S_{i,j}\deq \sum_{u^n,v^n} \11\big\{f^{(\mu)}(i,j)= (u^n,v^n)\big\} S_{u^n,v^n}$, where $S_{u^n,v^n}=\tensor_{i}S_{u_i, v_i}$. Further, let the corresponding mappings be denoted as $\tilde{\mathcal{N}}^{(\mu)}_{A^nB^n\mapsto \hat{X}^n} $. 
With this notation, for the average of these random protocols, the following bounds hold:
{  \begin{align*}
\frac{1}{N}\sum_\mu  &\bar{d}(\rho_{AB}^{\tensor n}, \tilde{\mathcal{N}}^{(\mu)}_{A^nB^n\mapsto \hat{X}^n})=\frac{1}{N}\sum_\mu \tr\Big\{ \Delta^{(n)} (\id\tensor \tilde{\mathcal{N}}^{(\mu)}_{A^nB^n\mapsto \hat{X}^n} )\phirhon \Big\}\\
&=\tr\Big\{ \Delta^{(n)} (\id\tensor \mathcal{N}^{\tensor n}_{AB\mapsto \hat{X}})\phirhon \Big\}+\tr\Big\{ \Delta^{(n)} (\id\tensor (\mathcal{N}^{\tensor n}_{AB\mapsto \hat{X}}- \tilde{\mathcal{N}}_{A^nB^n\mapsto \hat{X}^n}))\phirhon \Big\}\\
&\leq \tr\left\{\Delta \left((\id_R \tensor \mathcal{N}_{AB\mapsto \hat{X}}) (\phirho)\right)\right\}+\| \Delta^{(n)} (\id\tensor (\mathcal{N}^{\tensor n}_{AB\mapsto \hat{X}}- \tilde{\mathcal{N}}_{A^nB^n\mapsto \hat{X}^n}))\phirhon\|_1\\
&\leq D +\|\Delta^{(n)}\|_{\infty} \|(\id\tensor (\mathcal{N}^{\tensor n}_{AB\mapsto \hat{X}}- \tilde{\mathcal{N}}_{A^nB^n\mapsto \hat{X}^n}))\phirhon\|_1\\
&\leq D +\|\Delta^{(n)}\|_{\infty} \|(\id\tensor (M_A^{\tensor n}\tensor M_B^{\tensor n} - \tilde{M}_{AB}))\phirhon\|_1\\
&\leq D+\epsilon \|\Delta\|_{\infty},
\end{align*} }
where $\tilde{\mathcal{N}}_{AB\mapsto \hat{X}}$ is the average of $\tilde{\mathcal{N}}^{(\mu)}_{AB\mapsto \hat{X}}$, and $\tilde{M}_{AB}$ is the overall POVM of the underlying distributed protocol as given in \eqref{eq:overall POVM opt}. The first inequality holds by the fact that $|\tr\{A\}|\leq \norm{A}_1$. The second inequality follows by Lemma \ref{lem:norm1 of AB} given in the sequel. The third inequality is due to the monotonicity of the trace-distance \cite{Wilde_book} with respect to the quantum channel given by $\id\tensor \mathcal{L}_{UV\mapsto \hat{X}}^{\tensor n}$, where 
\begin{equation*}
 \mathcal{L}_{UV\mapsto \hat{X}}(\omega)\deq \sum_{u,v} \bra{u,v}\omega \ket{u,v} S_{u,v}.  \end{equation*}
The last inequality follows by Theorem \ref{thm: dist POVM appx}, and the fact that $\|\Delta^{(n)}\|_{\infty}\leq \|\Delta\|_{\infty}$. This completes the proof of the theorem, since $\Delta$ is a bounded operator.
\end{proof}
\begin{lem}\label{lem:norm1 of AB}
For any operator $A$ and $B$ acting on a Hilbert space $\mathcal{H}$ the following inequalities hold. 
\begin{equation*}
\|B A\|_1\leq \|B\|_{\infty} \|A\|_1, \qquad \text{and} \qquad \|AB \|_1\leq \|B\|_{\infty} \|A\|_1.
\end{equation*}
\end{lem}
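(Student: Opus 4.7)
The plan is to prove both inequalities via the singular-value representation of the trace norm, $\|X\|_1 = \sum_i \sigma_i(X)$, by reducing everything to the pointwise singular-value bound $\sigma_i(BA) \le \|B\|_\infty\,\sigma_i(A)$ for each index $i$ and then summing.

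For the pointwise bound I would start from the operator inequality
$$(BA)^\dagger(BA) = A^\dagger(B^\dagger B)A \le \|B\|_\infty^2\,A^\dagger A,$$
which holds because $B^\dagger B \le \|B\|_\infty^2\,I$ and conjugation by $A$ preserves the L\"owner order. By Weyl's monotonicity principle for eigenvalues of Hermitian operators (equivalently, the Courant--Fischer min--max characterization), this yields $\lambda_i\bigl((BA)^\dagger(BA)\bigr) \le \|B\|_\infty^2\,\lambda_i(A^\dagger A)$ for every index $i$. Taking square roots gives $\sigma_i(BA) \le \|B\|_\infty\,\sigma_i(A)$, and summing over $i$ produces the first claim $\|BA\|_1 \le \|B\|_\infty\,\|A\|_1$.

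For the companion inequality $\|AB\|_1 \le \|B\|_\infty\,\|A\|_1$, I would exploit the adjoint-invariance of the Schatten norms: $\|X\|_1 = \|X^\dagger\|_1$ and $\|X\|_\infty = \|X^\dagger\|_\infty$, both consequences of the fact that $X$ and $X^\dagger$ have the same singular values. Writing $\|AB\|_1 = \|B^\dagger A^\dagger\|_1$ and applying the first inequality with $(B^\dagger,A^\dagger)$ in place of $(B,A)$ gives $\|AB\|_1 \le \|B^\dagger\|_\infty\,\|A^\dagger\|_1 = \|B\|_\infty\,\|A\|_1$, as required.

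No real obstacle is expected; this is a standard Schatten-norm estimate, and the only nontrivial ingredient is Weyl monotonicity, which is routine. An essentially equivalent alternative would use the duality $\|X\|_1 = \sup_{\|Y\|_\infty \le 1} |\tr(YX)|$ together with cyclicity of trace, namely $\|BA\|_1 = \sup_{\|Y\|_\infty \le 1}|\tr((YB)A)|$ followed by sub-multiplicativity of the operator norm, but the singular-value route above is more elementary and avoids any appeal to the Hahn--Banach-type duality.
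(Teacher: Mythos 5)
Your argument is correct and self-contained. The paper does not actually supply a proof of this lemma: it simply defers to Exercise~12.2.1 of Wilde's textbook. So there is no in-paper argument to compare against; the relevant question is whether your proof is sound, and it is. The chain $(BA)^\dagger(BA) = A^\dagger (B^\dagger B) A \le \|B\|_\infty^2\, A^\dagger A$, followed by Weyl monotonicity (Courant--Fischer) to get $\sigma_i(BA) \le \|B\|_\infty\,\sigma_i(A)$ for each $i$, and then summing, is a standard and rigorous route to $\|BA\|_1 \le \|B\|_\infty\|A\|_1$; the reduction of $\|AB\|_1$ to the first inequality via adjoint invariance of the Schatten $1$- and $\infty$-norms is likewise correct. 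The alternative you sketch---duality $\|X\|_1 = \sup_{\|Y\|_\infty\le1}|\tr(YX)|$ together with cyclicity and submultiplicativity of $\|\cdot\|_\infty$---is essentially the proof found in most textbook treatments (and likely the intended solution to Wilde's exercise), so you have in effect given both a primal (spectral/Weyl) and a dual (variational) proof. In finite dimensions, where this paper lives, both routes are fully elementary; the singular-value route has the minor advantage of not invoking any duality statement, while the duality route generalizes more immediately to other Schatten--H\"older pairs.
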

\begin{proof}
See Exercise 12.2.1 in \cite{Wilde_book}.
\end{proof}

 One can observe that the rate region in Theorem \ref{thm:q-c dist scr} matches in form with the classical Berger-Tung region when $\rho_{AB}$ is a mixed state of a collection of orthogonal pure states. Note that the  rate region is an inner bound for the set of all achievable rates. The single-letter characterization of the set of achievable rates is still an open problem even in the classical setting. Some progress has been made recently on this problem which provides an improvement over Berger-Tung rate region \cite{Farhad_Dist}.

\subsection{Outer Bound}
In this section, we provide an outer bound for the achievable rate-distortion region. 
\begin{theorem}\label{thm:dist str outer}
Given a \ac{qc} source coding setup $(\phirho,\mathcal{H}_{\hat{X}},\Delta)$,
if any triplet $(R_1, R_2, D)$ is achievable, then the following inequalities must be satisfied
 \begin{subequations}\label{eq:dist POVm appx outer}
	 \begin{align}
R_1 &\geq I(W_1;R|W_2,Q)_{\sigma},\\
R_2 &\geq I(W_2;R|W_1,Q)_{\sigma},\\
 R_1+R_2 &\geq I(W_1,W_2;R|Q)_{\sigma},\\
D&\geq \tr\{\Delta\}\sigma^{R\hat{X}}\},
\end{align}
\end{subequations}
for some state $\sigma^{W_1W_2RQ\hat{X}}$ which can be written as 
\begin{align*}
 \sigma^{W_1W_2QR\hat{X}}= (\id\tensor \mathcal{N}_{AB \mapsto W_1W_2Q\hat{X}})(\phirho),
 \end{align*} 
  where $Q$ represents an auxiliary quantum state, and  $ \mathcal{N}_{AB \mapsto W_1W_2Q\hat{X}}$ is a quantum test channel with $I(R;Q)_\sigma=0$.
\end{theorem}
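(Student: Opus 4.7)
The plan is a standard multi-letter converse in which the specifically quantum ingredient is the embedding of the time-sharing variable and the per-letter side information into a direct-sum auxiliary register $Q$ that is, by construction, in product with $R$. Fix an achievable $(R_1,R_2,D)$ and, for each $\epsilon>0$, an $(n,\Theta_1,\Theta_2)$ protocol from Definition~\ref{eq:qc achievable} with POVMs $M_A^{(n)},M_B^{(n)}$ and reconstruction states $\{S_{i,j}\}$. Let $W_1^n,W_2^n$ denote the classical POVM outcomes, $\hat X^n$ the reconstruction, and $\Omega^{R^nW_1^nW_2^n\hat X^n}$ the resulting post-protocol state obtained from $\phirhon$. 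Elementary size-entropy inequalities give
\[
n(R_i+\epsilon)\geq \log \Theta_i\geq H(W_i^n)_\Omega\geq I(W_i^n;R^n|W_j^n)_\Omega,\quad \{i,j\}=\{1,2\},
\]
together with $n(R_1+R_2+2\epsilon)\geq I(W_1^nW_2^n;R^n)_\Omega$. Combining the quantum chain rule with data processing (each $W_{i,t}$ is a classical function of $W_i^n$) yields, for instance,
\[
I(W_1^n;R^n|W_2^n)_\Omega=\sum_{t=1}^n I(W_1^n;R_t|R^{t-1},W_2^n)\geq \sum_{t=1}^n I(W_{1,t};R_t|Q_t,W_{2,t}),
\]
with $Q_t:=(R^{t-1},W_{2,\setminus t})$; the other two bounds decompose analogously.

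Next I would introduce a uniform classical flag $T\in[1,n]$ and a register $Q$ on the direct sum $\bigoplus_{t=1}^n \mathcal{H}_{Q_t}$, and define the candidate single-letter state
\[
\sigma^{RQW_1W_2\hat X}:=\frac{1}{n}\sum_{t=1}^n \omega_t^{R_tQ_tW_{1,t}W_{2,t}\hat X_t}\tensor \ketbra{t}_T,
\]
where $\omega_t$ is the $t$-th marginal of the post-protocol state. The chain rule for quantum conditional mutual information, with $T$ absorbed into $Q$, gives $I(W_1;R|Q,W_2)_\sigma=\tfrac1n\sum_t I(W_{1,t};R_t|Q_t,W_{2,t})$ and its analogues for the remaining two quantities, so the bounds of the previous paragraph yield the three desired single-letter rate inequalities modulo $\epsilon$. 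Because the source is iid, $\omega_t^{R_t}=\rho_R$ for every $t$, and $\omega_t^{Q_t}$ is built only from ancillary purifiers and classical outputs at positions $s\neq t$; hence $\omega_t^{R_tQ_t}=\rho_R\tensor \omega_t^{Q_t}$, $\sigma^{RQ}=\rho_R\tensor \sigma^Q$, and $I(R;Q)_\sigma=0$. The state $\sigma$ is realised as $(\id\tensor \mathcal{N})(\phirho)$ by the CPTP map $\mathcal{N}_{AB\to W_1W_2Q\hat X}$ that draws $T$ uniformly, places the input $AB$ at position $T$, appends fresh copies of $\phirho$ at positions $s\neq T$ (keeping their $R_s$ registers as ancillas), runs the $n$-letter protocol, and outputs $(W_{1,T},W_{2,T},\hat X_T)$ together with $Q:=(T,R^{T-1}_{\mathrm{anc}},W_{2,\setminus T})$. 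The additive form of $\Delta^{(n)}$ finally gives $\bar d=\tfrac1n\sum_t\tr\{\Delta\,\omega_t^{R_t\hat X_t}\}=\tr\{\Delta\,\sigma^{R\hat X}\}\leq D+\epsilon$, and a standard compactness / Carath\'eodory-type argument on the finite-dimensional set of test channels handles $\epsilon\to 0$.

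The principal obstacle is the direct-sum construction of $Q$: naive uniform time-sharing does not produce the product structure $\sigma^{RQ}=\rho_R\tensor \sigma^Q$ required by the constraint $I(R;Q)_\sigma=0$. Embedding the per-$t$ side information through orthogonal flags, and drawing its contents solely from ancillary purifiers that are, by iid-ness of the source, independent of the input's reference $R$, is exactly what secures simultaneously the chain-rule identity, the independence $I(R;Q)_\sigma=0$, and the realisation of $\sigma$ via a single-letter CPTP map $\mathcal{N}_{AB\to W_1W_2Q\hat X}$.
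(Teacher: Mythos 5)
There is a genuine gap in the proposal, and it sits at the very first place where you depart from the paper's technique.

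You write the POVM outcomes as $W_1^n,W_2^n$ and then invoke data processing ``each $W_{i,t}$ is a classical function of $W_i^n$'' to reduce $I(W_1^n;R_t\,|\,R^{t-1},W_2^n)$ to $I(W_{1,t};R_t\,|\,Q_t,W_{2,t})$ with $Q_t := (R^{t-1},W_{2,\setminus t})$. But in Definition~\ref{def:qc source code} the encoders apply a \emph{single} $n$-letter POVM $M_A^{(n)}$ on $\mathcal H_A^{\tensor n}$ with $\Theta_1$ outcomes, producing one classical label $L_1\in[1:\Theta_1]$ (and similarly $L_2$); there is no per-letter output $W_{i,t}$ to project onto, and no meaningful $W_{2,\setminus t}$. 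Consequently the ``data processing'' step, the composition of $Q_t$ out of $W_{2,\setminus t}$, and the subsequent definition of the single-letter test channel are not available. Even if one granted a letterwise structure, embedding $W_{2,\setminus t}$ into $Q$ would destroy the key property $I(R;Q)_\sigma=0$: Bob's outcome is a function of $B^n$, and $B_t$ is entangled with $R_t$ through the purification $\phirhon$, so $R_t$ is \emph{not} independent of Bob's outcome information. The iid argument you give (``$\omega_t^{Q_t}$ is built only from ancillary purifiers and classical outputs at positions $s\neq t$'') does not apply because the protocol's measurements act jointly on $A^n$ and $B^n$.

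The paper avoids all of this by \emph{not} single-letterizing the POVM outcomes at all: the auxiliary classical registers are the \emph{full} labels $W_1=L_1$, $W_2=L_2$, and only the reference $R^n$ is single-letterized via $Q=(R^{J-1},J)$ with $J$ uniform on $[1,n]$. The direct-sum register $\mathcal H_Q\cong\bigoplus_{k}\mathcal H_R^{\tensor k}$ carries exactly $(R^{J-1},J)$ and nothing else, so $\sigma^{RQ}$ factorizes term by term ($R_j\perp R^{j-1}$ for an iid source), giving $I(R;Q)_\sigma=0$ cleanly. The reconstruction $\hat X$ at the single-letter level is $\tr_{\hat X^n_{\sim j}}\{S_{l_1,l_2}\}$, which depends on the full labels $(l_1,l_2)$ --- consistent with $W_1,W_2$ being kept whole --- and the test channel $\mathcal N_{AB\to W_1W_2 Q\hat X}$ is built by planting the input at slot $J$, padding with the iid marginal $E_j=\Psi^{\rho_{AB}}_{(RAB)^n_{\sim j}}$, and running the $n$-letter protocol. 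This is also why the paper's remark notes that no cardinality bound on $W_1,W_2$ (or $Q$) is available: those auxiliaries are inherently multi-letter objects, and trying to single-letterize them, as your argument does, is exactly the step that cannot be justified here.
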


\begin{proof}
Suppose the triplet $(R_1, R_2, D)$ is achievable. Then, from Definition \ref{eq:qc achievable}, for all $\epsilon>0$, there exists an $(n, \Theta_1, \Theta_2)$ q-c protocol satisfying the inequalities in the definition. Let $M_A\delequal \{\Lambda^A_{l_1}\}$, $M_B\delequal \{\Lambda^B_{l_2}\}$ and $S_{l_1, l_2}\in \mathcal{D}(\mathcal{H}_{\hat{X}}^{\tensor n})$ be the corresponding POVMs and reconstruction states. Let $L_1, L_2$ denote the outcomes of the measurements. Then, for Alice's rate, we obtain
\begin{align*}
n (R_1+\epsilon) &\geq H(L_1)\geq H(L_1|L_2)\\
&\geq I(L_1;R^n|L_2)_\tau\\
  &=\sum_{j=1}^n I(L_1;R_j|L_2,R^{j-1})_\tau.
\end{align*}
where the state $\tau$ is defined as 
\begin{equation*}
     \tau^{L_1L_2R^n\hat{X}^n}\deq  \sum_{l_1,l_2} \ketbra{l_1,l_2}\tensor 
\tr_{A^nB^n} \Big\{(\id\tensor {\Lambda}^{A}_{l_1}\tensor{\Lambda}^{B}_{l_2})\phirhon\Big\} \tensor S_{l_1,l_2}.
\end{equation*}
Note that for each $j$ the corresponding mutual information above is defined for a state in the Hilbert space $\mathcal{H}_{L_1}\tensor \mathcal{H}_{L_2}\tensor \mathcal{H}_{R}^{\tensor j}$. Next, we convert the above summation into a single-letter quantum mutual information term. For that we proceed with defining a new Hilbert space using direct-sum operation. 

Let us recall the direct-sum of Hilbert spaces \cite{Conway1985}. Consider a tuple of Hilbert spaces $\mathcal{H}_k$, $k=1,2,\ldots,n$ with inner products $\braket{\cdot}{\cdot}_k$. Define $\bigoplus_{k=1}^n \mathcal{H}_k$ as the collection of tuples of vectors 
$(\ket{x}_1,\ket{x}_2,\ldots,\ket{x}_n)$. The inner 
product of two tuples 
$(\ket{x}_1,\ket{x}_2,\ldots,\ket{x}_n)$ and 
$(\ket{y}_1,\ket{y}_2,\ldots,\ket{y}_n)$
is given by the sum of inner products of the components, i.e.,
$\sum_{k=1}^n \braket{x_k}{y_k}_k$. 
A linear operator in this space is a tuple of operators given by  $(A_1,A_2,\ldots,A_n)$, where $A_k$ operates on 
$\mathcal{H}_k$, and $\Tr(A)=\sum_{k=1}^n \Tr(A_i)$.
A state in $\bigoplus_{k=1}^n \mathcal{H}_k$ 
is denoted conventionally as 
$\bigoplus_{k=1}^n \ket{x}_k$.
Similarly, a linear operator in this space is
written in the form $A=\bigoplus_{k=1}^n A_k$.

With this definition, consider the following 
single-letterization:
\begin{align*}
\sum_{j=1}^n I(L_1;R_j|L_2,R^{j-1})_\tau=n I(L_1;R|L_2,Q)_\sigma,
\end{align*}
where the state $\sigma$ is defined below 
\begin{equation}\label{eq:sigma_L_1L_2RX}
    \sigma^{L_1L_2RQ\hat{X}}\deq \hspace{-5pt}
     \sum_{l_1,l_2}\frac{ \ketbra{l_1,l_2}}{n}\tensor 
\Big(\bigoplus_{j=1}^n    \big(\tr_{R_{j+1}^nA^nB^n} \Big\{(\id\tensor {\Lambda}^{A}_{l_1}\tensor{\Lambda}^{B}_{l_2})\phirhon\Big\} \tensor  \ketbra{j}\tensor \tr_{\hat{X}^n{\sim j}}\{S_{l_1,l_2}\}\big)\Big),
\end{equation}
where $\tr_{\hat{X}^{n}\sim j}$ denotes tracing over $(\hat{X}^{\tensor j-1}\tensor \hat{X}_{j+1}^{\tensor n})$, and $Q\deq (R^{J-1},J)$, and $J$ is an averaging random variable which is uniformly distributed over $[1,n]$. We elaborate on the Hilbert space associated with $Q$ as follows. 

Suppose $\{\ket{\phi_i}\}_{i\in \mathcal{I}}$ is an orthonormal basis for $\mathcal{H}_R$. Then, a basis for $\mathcal{H}_R^{\tensor k}$ is given by $$\ket{{\phi}_{\mathbf{i}^k}}\deq \ket{\phi_{i_1}} \tensor \ket{\phi_{i_2}}\tensor \cdots \tensor \ket{\phi_{i_k}},$$
for all $\mathbf{i}^k\in \mathcal{I}^k$. Consider the direct-sum of the Hilbert spaces $\bigoplus_{k=1}^n \mathcal{H}_R^{\tensor k}$. Consider the Hilbert space $\mathcal{H}_J\tensor (\bigoplus_{k=1}^n \mathcal{H}_R^{\tensor k})$.
With this definition, define $\mathcal{H}_Q$, as the Hilbert space which is spanned by $\ket{j} \tensor \ket{{\phi}_{\mathbf{i}^{(j-1)}}},$ for all $j\in [1,n]$ and $\mathbf{i}^{(j-1)} \in \mathcal{I}^{(j-1)}$. Therefore, $\mathcal{H}_Q$ is \textit{isometrically isomorphic} to the direct-sum $\bigoplus_k \mathcal{H}_R^{\tensor k}$. Note that $\mathcal{H}_Q$ can be viewed as a multi-particle Hilbert space, which is a truncated version of the so-called Fock space \cite{meyer2006quantum}.

Similarly, for Bob's rate we have
\begin{align*}
R_2+\epsilon \geq I(L_2;R|L_1,Q)_\sigma.
\end{align*} 
For the sum-rate, the following inequalities hold
\begin{align*}
n(R_1+R_2+2\epsilon)&\geq H(L_1,L_2)\geq I(L_1,L_2;R^n)_\tau\\
&=\sum_{j=1}^n  I(L_1, L_2; R_j|R^{j-1})_{\tau}\\
&= nI(L_1,L_2;R|Q)_\sigma. 
\end{align*}
In addition, the distortion of this q-c protocol satisfies $\bar{d}(\rho_{AB}^{\otimes n},  \mathcal{N}_{A^nB^n\mapsto \hat{X}^n})\leq D+\epsilon$, where $\mathcal{N}_{A^nB^n\mapsto \hat{X}^n}$ is the quantum channel associated with the protocol. Therefore, as the distortion observable is symbol-wise additive, we obtain 
\begin{align*}
D+\epsilon &\geq \frac{1}{n}\sum_{j=1}^n \tr\Big\{\left(\Delta_{R_j\hat{X}_j}\otimes I_{R\hat{X}}^{\otimes [n]\backslash j}\right) (\id\tensor \mathcal{N}_{A^nB^n\mapsto \hat{X}^n})(\phirhon)\Big\}\\
&= \frac{1}{n}\sum_{j=1}^n \tr\Big\{\left(\Delta_{R_j\hat{X}_j}\otimes I_{R_1^{j-1}}\tensor I_{R_{j+1}^n \hat{X}^n{\sim j}}\right) (\id\tensor \mathcal{N}_{A^nB^n\mapsto \hat{X}^n})(\phirhon)\Big\}\\
&=\frac{1}{n}\sum_{j=1}^n \tr\Big\{\left(\Delta_{R_j\hat{X}_j}\otimes I_{R_{1}^{j-1}}\right)\left(\tr_{R_{j+1}^n\hat{X}^n{\sim j}}\{ (\id\tensor \mathcal{N}_{A^nB^n\mapsto \hat{X}^n})(\phirhon)\}\right)\Big\}\\
&\stackrel{(a)}{=} \tr\{(\Delta\tensor I_Q)  \sigma^{RQ\hat{X}}\},
\end{align*}
where the third equality holds, because of the following argument. From \eqref{eq:sigma_L_1L_2RX}, one can show by partially tracing over $(L_1, L_2)$, that 
\begin{equation}
   \sigma^{RQ\hat{X}}=\tr_{L_1,L_2}\{\sigma^{L_1L_2RQ\hat{X}}\}=\bigoplus_{j} \frac{1}{n}\ketbra{j}\tensor\tr_{R_{j+1}^n\hat{X}^n{\sim j}}\{(\id\tensor \mathcal{N}_{A^nB^n\mapsto \hat{X}^n})(\phirhon)\}, 
\end{equation}
and $I_Q\deq \bigoplus_{j=1}^n \big(I_R^{\tensor (j-1)} \tensor \ketbra{j}\big)$. Then, $I_Q$ is the identity operator acting on $\mathcal{H}_Q$. Therefore, the right-hand side of the equality $(a)$ above can be written as   
\begin{align*}
 \tr\{(\Delta\tensor I_Q) \sigma^{RQ\hat{X}}\}=\tr\left\{ \Delta \sigma^{R\hat{X}}    \right\}.
\end{align*}

Let us identify the single-letter quantum test channel as given in the statement of the theorem. First, due to the distributive property of tensor product over direct sum operation, we can rewrite  $\sigma^{L_1L_2RQ\hat{X}}$ as
\begin{align*}
      & \sigma^{L_1L_2RQ\hat{X}}=\\\label{eq:sigma_L_1L_2RX}
    & \qquad \Big(\bigoplus_{j=1}^n \frac{1}{n} \sum_{l_1,l_2} \ketbra{l_1,l_2}\tensor 
   \big(\tr_{R_{j+1}^nA^nB^n} \Big\{(\id\tensor {\Lambda}^{A}_{l_1}\tensor{\Lambda}^{B}_{l_2})\phirhon\Big\} \tensor  \ketbra{j}\tensor \tr_{\hat{X}_{\sim j}}\{S_{l_1,l_2}\}\big)\Big).
\end{align*}
Next, we identify a quantum channel $\mathcal{N}_{AB \mapsto L_1L_2Q\hat{X}}: \rho_{AB}\mapsto \sigma^{L_1L_2Q\hat{X}}$. For that and for any $j$ define the following intermediate quantum channels:
\begin{align*}
    \mathcal{N}^{(j)}_{AB\mapsto L_1L_2 R^{(j-1)} \hat{X}}(\omega_{AB})\!\deq\!\!\! \sum_{l_1,l_2}\! \ketbra{l_1,l_2}\tensor  \big(\!\tr_{R_{j+1}^nA^nB^n} \!\Big\{\!(\id_{R^{n}{\sim j}}\!\tensor\! {\Lambda}^{A}_{l_1}\tensor{\Lambda}^{B}_{l_2})(\omega_{AB} \tensor E_j)\!\Big\} \!\tensor \!\tr_{\hat{X}^{n}{\sim j}}\{S_{l_1,l_2}\}\!\big),
\end{align*}
where $E_j=\Psi^{\rho_{AB}}_{(RAB)^{n}_{\sim j}}$. One can verify that $\mathcal{N}^{(j)}_{AB\mapsto L_1L_2 R^{(j-1)} \hat{X}}$ is indeed a quantum channel. With these definitions, let 
\begin{align*}
    \mathcal{N}_{AB \mapsto L_1L_2Q\hat{X}}(\omega_{AB}) \deq \bigoplus_j \frac{1}{n}\left(\mathcal{N}^{(j)}_{AB\mapsto L_1L_2 R^{(j-1)} \hat{X}}(\omega_{AB})\tensor \ketbra{j}\right).
\end{align*}
Using the property of direct-sum operation, one can verify that $\mathcal{N}_{AB \mapsto L_1L_2Q\hat{X}}$ is a valid quantum channel, moreover, 
\[
\sigma^{L_1L_2RQ\hat{X}}=(\id\tensor \mathcal{N}_{AB \mapsto L_1L_2Q\hat{X}})(\phirho).
\]
Lastly, we show that the condition $I(R;Q)_\sigma=0$ is also satisfied. By taking the partial trace of $\sigma$ over $(L_1, L_2, \hat{X})$ we obtain the following state
\begin{align*}
 \sigma^{RQ}&=\tr_{L_1L_2\hat{X}}(\sigma^{L_1L_2RQ\hat{X}})=
\bigoplus_{j=1}^n \frac{1}{n} \sum_{l_1,l_2} 
   \Big(\tr_{R_{j+1}^nA^nB^n} \Big\{(\id\tensor {\Lambda}^{A}_{l_1}\tensor{\Lambda}^{B}_{l_2})\phirhon\Big\} \Big)\tensor \ketbra{j}\\
   &= \bigoplus_{j=1}^n \frac{1}{n}  \Big(\tr_{R_{j+1}^nA^nB^n} \Big\{\phirhon\Big\} \Big)\tensor \ketbra{j}\\
     &= \bigoplus_{j=1}^n \frac{1}{n}  \Big(\tr_{AB}\{\phirho\}\Big)^{\tensor j} \tensor \ketbra{j}\\
     &= \tr_{AB}\{\phirho\} \tensor \left(\bigoplus_{j=1}^n \frac{1}{n}  \Big(\tr_{AB}\{\phirho\}\Big)^{\tensor (j-1)} \tensor \ketbra{j}\right),
\end{align*}
where the last equality is  due to the distributive property of tensor product over direct sum operation. Hence, $\sigma^{RQ}$ is in a tensor product of the form $\sigma^R\tensor \sigma^Q$, and therefore, $I(R;Q)_\sigma=0$. 
\begin{remark}
One may question the computability of the outer bound provided in Theorem \ref{thm:dist str outer}. The computability of this bound depends on the dimensionality of the auxiliary space $\mathcal{H}_Q$ defined in the theorem. Currently, we are unable to bound the dimension of the Hilbert space  $\mathcal{H}_Q$, but aim to provide one in our future work. As a matter of fact, the current outer bounds for the equivalent classical distributed rate distortion problem still suffers from the computability issue. The  first outer bound to the classical problem was provided in \cite{berger} and a recent substantial improvement was made by authors in \cite{wagner2008improved}. Both of these bounds suffer from the absence of cardinality bounds on at least one of the variables used and hence cannot be claimed to be computable using finite resources.
\end{remark}
\end{proof}

\section{Simulation of POVMs with Stochastic Processing} \label{sec:nf_faithfulsimulation}
We now provide an extension of the Winter's point-to-point measurement compression scheme \cite{winter} (discussed in Section \ref{sec:Winter_compression}) with stochastic processing. We assume that the receiver (Bob) has access to additional private randomness, and he 
is allowed to use this additional resource to perform any stochastic mapping of the received classical bits.
In fact, the overall effect on the quantum state can be assumed to be a measurement which is a concatenation of the POVM Alice performs and the stochastic map Bob implements. Hence, Alice in this case, does not remain aware of the measurement outcome. It is for this reason that \cite{wilde_e} describes this as a non-feedback problem, with the sender not required to know the outcomes of the measurement. With the availability of additional resources, 
such a formulation is expected to help reduce the overall resources needed.

\subsection{Problem Formulation}
\begin{definition}
	For a given finite set $\mathcal{X}$, and a Hilbert space $\mathcal{H}_{A}$, a measurement simulation protocol with stochastic processing with parameters $(n,\Theta,N)$ 
	is characterized by 
	\\
	\textbf{1)} a collections of Alice's sub-POVMs  $\tilde{M}^{(\mu)},\mu \in [1,N]$ each acting on $\mathcal{H}_A^{\tensor n}$ and with outcomes in a subset $\mathcal{L}$ satisfying $|\mathcal{L}|\leq \Theta$.\\
	\textbf{2)} a  Bob's classical stochastic map $P^{(\mu)}(x^n|l)$ for all $l\in \mathcal{L}$, $ x^n\in \mathcal{X}^n$ and $\mu \in [1,N]$. \\
	The overall  sub-POVM of this distributed protocol, given by $\tilde{M}$, is characterized by the following operators: 
	\begin{equation}\label{eq:p2p overall POVM opt}
	\tilde{\Lambda}_{x^n}\deq \frac{1}{N}\sum_{\mu, l} P^{(\mu)}(x^n|l)~ \Lambda^{(\mu)}_{l}, \quad \forall x^n \in \mathcal{X}^n,
	\end{equation}
	where $\Lambda^{(\mu)}_{l}$ are the operators corresponding to the sub-POVMs $\tilde{M}^{(\mu)}$.
	%
	%
	%
	%
	%
\end{definition}
In the above definition,  $\Theta$ characterizes the amount of classical bits communicated from Alice to Bob, and the amount of common randomness is determined by  $N$, with $\mu$ being the common randomness bits distributed among the parties. The classical stochastic mappings induced by  $P^{(\mu)}$ represents the action of Bob on the received classical bits.  
\begin{definition}
	Given a POVM $M$ acting on  $\mathcal{H}_{A}$, and a density operator $\rho \in \mathcal{D}(\mathcal{H}_{A})$, a pair $(R,C)$ is said to be achievable, if for all $\epsilon>0$ and for all sufficiently large $n$, there exists a measurement simulation protocol with stochastic processing with parameters  $(n, \Theta, N)$ such that its overall sub-POVM $\tilde{M}$  is $\epsilon$-faithful to $M^{\tensor n}$ with respect to $\rho^{\tensor n}$ (see Definition \ref{def:faith-sim}), and
	\begin{align*}
	\frac{1}{n}\log_2 \Theta \leq R+\epsilon, \quad
	\frac{1}{n}\log_2 N \leq C+\epsilon.
	\end{align*}
The set of all achievable pairs is called the achievable rate region. 
\end{definition}

The following theorem characterizes the achievable rate region.

\begin{theorem}\label{thm:nf_faithfulSim}
For any density operator $\rho\in \mathcal{D}(\mathcal{H}_A)$ and any POVM ${M} \deq \{{\Lambda}_x\}_{x \in \mathcal{X}}$ acting on the Hilbert space $\mathcal{H}_A$, a pair $(R,C)$ is achievable if and only if there exist a POVM $\bar{M}_A \deq \{\bar{\Lambda}_w^{A}\}_{w \in \mathcal{W}} $, with $\mathcal{W}$ being a finite set, and a stochastic map $P_{X|W}:\mathcal{W}\rightarrow\mathcal{X}$ such that
\begin{align*}
    R &\geq I(R;W)_\sigma \quad \mbox{and} \quad R+C \geq I(RX;W)_{\sigma}, \\
    &\Lambda_x = \sum_{w\in\mathcal{W}}P_{X|W}(x|w)\bar{\Lambda}_w^A, \quad \forall x \in \mathcal{X}.
\end{align*}
where  $\sigma_{RWX} \deq \sum_{w,x}\sqrt{\rho}\bar{\Lambda}^A_{w}\sqrt{\rho} \tensor P_{X|W}(x|w)\ketbra{w}\tensor\ketbra{x}.$
\end{theorem}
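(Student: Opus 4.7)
My plan is to prove both directions of this iff statement, with the achievability following Winter's measurement compression theorem (Theorem \ref{extrinsic}) together with a classical post-processing step at Bob, refined by a binning argument to reduce common randomness. The natural starting point is: since the original POVM decomposes as $\Lambda_x = \sum_w P_{X|W}(x|w)\bar{\Lambda}_w^A$, we first apply Winter's theorem to the \emph{finer} POVM $\bar{M}_A$ on $\rho$. This yields approximating POVMs $\{\tilde{\bar{M}}^{(\mu)}\}_{\mu \in [1,N]}$, each with at most $2^{n\tilde{R}}$ outcomes, whose average is $\epsilon$-faithful to $\bar{M}_A^{\tensor n}$ with respect to $\rho^{\tensor n}$ whenever $\tilde{R} \geq I(W;R)_\sigma + \delta$ and $\tilde{R} + \frac{1}{n}\log_2 N \geq S(W)_\sigma + \delta$. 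Bob's classical stochastic map $P^{(\mu)}(x^n|\ell)$ is then chosen to implement the product channel $P_{X|W}^n$ applied to the received codeword $W^n(\mu,\ell)$. Since the trace distance is non-increasing under the quantum-to-classical post-processing $\mathcal{L}_{W\mapsto X}$ induced by $P_{X|W}$, the resulting overall sub-POVM is automatically $\epsilon$-faithful to $M^{\tensor n}$ with respect to $\rho^{\tensor n}$, giving the rate pair $R \geq I(R;W)_\sigma$ and $R + C \geq S(W)_\sigma$.

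\textbf{Tightening the common randomness via binning.} The first rate inequality already matches the theorem, but the sum constraint $R+C \geq S(W)_\sigma$ is loose by $S(W|RX)_\sigma = S(W|X)_\sigma$, where the equality uses the Markov chain $X-W-R$ implied by the decomposition. To save this much common randomness, the next step is to transfer $S(W|X)_\sigma$ bits of shared randomness to Bob's local (private) processing via random binning on the codewords. Specifically, within each common-randomness index $\mu$, I would randomly partition the $2^{n\tilde{R}}$ Winter codewords $\{W^n(\mu,\ell)\}$ into $2^{nR}$ bins of size $\approx 2^{n S(W|X)_\sigma}$; Alice's encoder transmits only the bin index, and Bob's classical map $P^{(\mu)}(\cdot|b)$ first picks $W^n$ uniformly from bin $b$ and then applies $P_{X|W}^n$. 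The required analysis is a conditional soft-covering (resolvability) argument: the $X^n$-distribution produced by uniform bin selection composed with $P_{X|W}^n$ stays close, in expectation, to the target conditional distribution $P_{X^n|R^n}$ as long as the bin size exceeds $2^{nS(W|X)_\sigma}$. Combined with Lemma \ref{lem:faithful_equivalence}, this yields the sharpened sum-rate $R+C \geq I(RX;W)_\sigma = I(R;W)_\sigma + I(X;W|R)_\sigma$.

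\textbf{Converse and main obstacle.} For the converse, I would associate to any achievable $(n,\Theta,N)$ protocol the classical-quantum state $\theta^{\Mu L R^n X^n} = \sum_{\mu,\ell,x^n}\frac{1}{N}P^{(\mu)}(x^n|\ell)\ketbra{\mu,\ell,x^n}\tensor \tr_{A^n}\{\Lambda_\ell^{(\mu)}\Psi^{\rho}_{RA}{}^{\tensor n}\}$ and identify $W \deq (\Mu,L)$. Using $H(L)\leq n(R+\epsilon)$, $H(\Mu)\leq n(C+\epsilon)$, the independence $\Mu \perp R^n$, and the data-processing/Fannes continuity applied to the near-identity $\tilde{\sigma}^{R^nX^n}\approx \sigma^{R^nX^n}$, one deduces $nR\gtrsim I(L;R^n|\Mu) $ and $n(R+C)\gtrsim I(L,\Mu;R^n X^n)$. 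A standard time-sharing single-letterization then identifies the auxiliary $\bar{M}_A$ and stochastic map $P_{X|W}$. The principal obstacle is the analysis of the binning step in the direct part: unlike classical random binning, one must bound the trace distance between the simulated and target quantum-classical states under the combined operation of measurement, random bin selection, and classical channel $P_{X|W}^n$. I expect this to require an operator-level covering argument in the spirit of Proposition \ref{prop:Lemma for S_3}, using conditional typical projectors and operator Chernoff bounds adapted to absorb the classical post-processing.
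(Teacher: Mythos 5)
Your achievability plan has a genuine gap in the step you flag as the main obstacle, and the gap is not one a covering lemma can repair as you envision; the paper's actual proof avoids your binning construction entirely and attacks the problem with a different covering argument.

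The flaw is in the proposed decoder. After Alice's Winter measurement, the post-measurement reference state given outcome $\ell$ is $\approx \hat{\rho}_{W^n(\mu,\ell)}$, and the \emph{correct} joint state on $(R^n,X^n)$ requires $X^n$ to be drawn from $P^n_{X|W}(\cdot\,|\,W^n(\mu,\ell))$ for \emph{that same} $\ell$. If Alice sends only a bin index $b(\ell)$ and Bob draws $\ell'$ uniformly from the bin and applies $P^n_{X|W}(\cdot\,|\,W^n(\mu,\ell'))$, the simulated joint state becomes, after averaging over $\ell$,
\begin{align*}
\sum_b \Big(\tfrac{1}{|\text{bin}|}\sum_{\ell\in b}\hat{\rho}_{W^n(\ell)}\Big)\tensor\Big(\tfrac{1}{|\text{bin}|}\sum_{\ell'\in b}P^n_{X|W}(\cdot\,|\,W^n(\ell'))\Big),
\end{align*}
and for random bins of exponential size both inner averages soft-cover to their unconditional marginals, so the result collapses toward $\rho_R^{\tensor n}\tensor P_X^n$ — the product of marginals, with the $R$--$X$ correlation destroyed. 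The term with $\ell'=\ell$ has weight only $2^{-n(\tilde{R}-R)}$ and cannot save it. This is not an analysis difficulty to be absorbed by conditional typicality or an operator Chernoff bound; it is a structural failure of "random bin + uniform decoder selection" for this problem. Separately, your identity $S(W|RX)_\sigma = S(W|X)_\sigma$ is false: the decomposition gives the Markov chain $X\text{--}W\text{--}R$, i.e.\ $I(X;R|W)=0$, not $I(W;R|X)=0$; in fact $S(W|RX)\leq S(W|X)$, so your bin size $2^{nS(W|X)}$ would over-count the savings and claim a rate region strictly smaller than the theorem's.

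The paper's proof makes \emph{no} use of binning and does not invoke Winter's theorem as a black box. Instead, for each $\mu\in[1,2^{nC}]$ it directly builds an approximating sub-POVM with only $2^{nR}$ outcomes, one per codeword $W^{n,(\mu)}(l)$ drawn from the pruned distribution, and Bob simply applies the memoryless channel $P^n_{X|W}$ to the received codeword. The key step, Lemma~\ref{lem:nf_SoftCovering}, is an operator covering lemma applied to the \emph{augmented} ensemble $\{\lambda_w^A,\ \mathcal{T}_w\}$ with $\mathcal{T}_w = \hat{\rho}_w^A\tensor \sigma_w$, where $\sigma_w=\sum_x P_{X|W}(x|w)\ketbra{x}$ tensors the classical channel output onto the reference part. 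Covering this joint state with $2^{n(R+C)}$ codewords requires $R+C\geq\chi(\{\lambda_w,\mathcal{T}_w\})=I(RX;W)_\sigma$; this is exactly where the savings $S(W|RX)_\sigma$ over Winter comes from, as the classical noise $P_{X|W}$ is part of the covered state and itself provides obfuscation. The constraint $R\geq I(R;W)_\sigma$ then comes from sub-POVM validity (the analogue of Lemma~\ref{lem:M_XM_Y POVM}). In short: rather than "Winter then bin," you should redo the random coding with fewer codewords and prove covering over the enlarged $(R,X)$ state. Finally, note the paper proves only the achievability direction and cites \cite{wilde_e} and \cite{berta2014identifying} for the converse; your converse sketch is therefore your own additional content, not something to be checked against a proof in this paper.
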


\begin{remark}
An alternative characterization of the above rate region can also be obtained in terms of Holevo information. For this, we define the following ensemble $\left\{\lambda_{x},\hat{\rho}_x\right\}$ 
as
\begin{align}
    \lambda_{x} = \sum_{w\in \mathcal{W}}\lambda_w^A P_{X|W}(x|w) \quad \mbox{and} \quad \hat{\rho}_x = \sum_{w\in \mathcal{W}} P_{W|X}\hat{\rho}_w^A, \nonumber
\end{align}  
for $\left\{\lambda_{w}^A,\hat{\rho}_w^A\right\}$  being the canonical ensemble  associated with the POVM ${M}$ and the state $\rho$ as defined in \eqref{eq:dist_canonicalEnsemble}. With this ensemble, we have
$$I(R;W)_\sigma = \chi\left(\left\{\lambda_{w}^A,\hat{\rho}_w^A\right\}\right) \quad \mbox{and} \quad I(RX;W)_\sigma = I(X;W)_\sigma + \chi\left(\left\{\lambda_{w}^A,\hat{\rho}_w^A\right\}\right) - \chi\left(\left\{\lambda_{x},\hat{\rho}_x\right\}\right).$$

\end{remark}

\subsection{Proof of Achievability of Theorem \ref{thm:nf_faithfulSim}}
\label{app:nf_faithfulSim}
Suppose there exist a POVM $\bar{M}_A$ and a stochastic map $ P_{X|W}:\mathcal{W\rightarrow \mathcal{X}} $,  such that 
${M}$ can be decomposed as 
\begin{equation}\label{eq:Lambda decompos}
{\Lambda}_{x}=\sum_{w} P_{X|W}(x|w) \bar{\Lambda}_{w}^A, ~ \forall x \in \mathcal{X}.
\end{equation}
We begin by defining a canonical ensemble corresponding to $\bar{M}_A$ 
as $\{\lambda_w^A, \hat{\rho}_w^A\}_{w\in \mathcal{W}},$  
where 
\begin{align}\label{def:rhohat}
\lambda_w^A \delequal \tr\{\bar{\Lambda}_w^A \rho\}, \quad \text{and} \quad 
\hat{\rho}_w^A \delequal \frac{1}{\lambda_w^A}\sqrt{\rho}  \bar{\Lambda}_w^A \sqrt{\rho}.
\end{align}
%
Similarly. for each $w^n\in \mathcal{W}^n$ , we also define
\begin{equation}
\Lambda_{w^n}^{A} \deq \hat{\Pi}\Pi_{\rho} \Pi_{w^n}  \hat{\rho}_{w^n}^A \Pi_{w^n} \Pi_{\rho}\hat{\Pi}, \nonumber
\end{equation}
where $\hat{\rho}_{w^n}^A \deq \bigotimes_{i} \hat{\rho}_{w_i}^A$, and $ \hat{\Pi}, \Pi_{\rho} $ and $ \Pi_{w^n} $ are similar to the ones defined in Section \ref{appx:proof of thm dist POVM}.
Using the above definitions, we now construct the approximating POVM.


\subsubsection{Construction of Random POVMs}
In what follows, we construct a collection of random POVMs. 
Fix $ R $ and $ C $ as two positive integers. 
Let $ \mu \in [1,2^{nC}]$ denote the common randomness shared between the sender and receiver. For each $\mu \in [1,2^{nC}]$, randomly and independently select $2^{n{R}}$ sequences 
$W^{n,(\mu)}(l)$
according to the pruned distributions, i.e.,
 \begin{align}\label{def:WprunedDist}
     \PP\left((W^{n,(\mu)}(l)) = (w^n)\right) = \left\{\begin{array}{cc}
          \dfrac{\lambda_{w^n}^A}{(1-\varepsilon)}\quad & \mbox{for} \quad w^n \in \mathcal{T}_{\delta}^{(n)}(W)\\
           0 & \quad \mbox{otherwise}
     \end{array} \right. .
 \end{align} 
For $w^n \in \mathcal{T}_{\delta}^{(n)}(W)$, let the operators of the POVM $\tilde{M}^{(\mu)}_A$ be $\{A^{(\mu)}_{w^n}; w^n \in \mathcal{W}^n\}$ for each $\mu \in [1,2^{nC}]$, where $A^{(\mu)}_{w^n}$ is defined as
\begin{align}\label{eq:nfA_uB_v}
A^{(\mu)}_{w^n} \deq  \gamma^{(\mu)}_{w^n} \bigg(\sqrt{\rho_{A}}^{-1}\Lambda_{w^n}^A\sqrt{\rho_{A}}^{-1}\bigg)\quad  \text{ and } \quad \gamma^{(\mu)}_{w^n} =
\cfrac{1}{2^{nR}}\sum_{l=1}^{2^{nR}}\cfrac{(1-\varepsilon)}{(1+\eta)}\mathbbm{1}_{\{W^{n,(\mu)}(l)=w^n\}}, 
\end{align}
with $\eta \in (0,1)$  being a parameter to be determined. Now, for each $\mu \in [1,2^{nC}]$ construct $\tilde{M}^{(n,\mu)}_A$ as 
\begin{align}
\tilde{M}^{(n,\mu)}_A = \{A_{w^n}^{(\mu)} : w^n \in \mathcal{T}_{\delta}^{(n)}(W)\}.\nonumber
\end{align}
Since the construction is very similar to the one used in Section \ref{appx:proof of thm dist POVM}, we make a claim similar to the one in Lemma \ref{lem:M_XM_Y POVM}. This claim gives us the first constraint on the classical rate of communication $R$, which ensures that the operators constructed above for all $ \mu \in [1,2^{nC}] $ are valid sub-POVMs (characterized as the event $E_1$) with high probability. The claim is as follows. If $R > I(R;W)_{\sigma}$ then $\PP(E_1 ) \geq (1-\delta_1)$ for some $\delta_1 \in (0,{1}/{6})$; or in other words, with probability sufficiently close to one, $\tilde{M}^{( n, \mu)}_A$ forms a sub-POVM for all $\mu\in [1,2^{nC}]$. Note the definition of $\sigma_{RWX}$ follows from the statement of theorem. From this, let $ [\tilde{M}^{(n,\mu)}_A] $ denote the completion of the corresponding sub-POVM $ \tilde{M}^{(n,\mu)}_A $ for $ \mu \in [1,2^{nC}] $. Let the operators completing these POVMs, given by $ I-\sum_{w^n}A_{w^n}^{(\mu)} $, be denoted by $ A_{w_0^n}^{(\mu)} $ for some $ w_0^n \notin \TDeltan(W), $ for all $ \mu \in [1,2^{nC}]$, and $A_{w^n}^{(\mu)} = 0$ for $w^n \notin \TDeltan(W)\medcup \{w_0^n\}$. Using this construction, we define the intermediate POVM $\tilde{M}^{(n)}_A $ as $ \tilde{M}^{(n)}_A = \dfrac{1}{2^{nC}}\sum_{\mu}\tilde{M}^{(n,\mu)}_A $ and the operators of $  \tilde{M}^{(n)}_A $ as $ \tilde{\Lambda}_{w^n}^A = \dfrac{1}{2^{nC}}\sum_{\mu}A_{w^n}^{(\mu)}. $ Now, we define Bob's stochastic map as $P_{X|W}^n$, yielding the operators of the final approximating POVM  as
\begin{align}
    \sum_{w^n\in \mathcal{W}^n}P^n_{X|W}(x^n|w^n)\tilde{\Lambda}_{w^n}^A, \quad x^n \in \mathcal{X}^n. \nonumber
\end{align}
\subsubsection{Trace Distance}
 Now, we compare the action of this approximating POVM on the input state $ \rho^{\tensor n}$ with that of the given POVM $ M $, using the characterization provided in Definition \ref{def:faith-sim}.
Specifically, we show using the expressions for canonical ensemble that, with probability close to one,
\begin{align} \label{eq:main_lemma}
{G} \deq \sum_{x^n\in \mathcal{X}^n} &\left\|\sum_{w^n\in \mathcal{W}^n}P^n_{X|W}(x^n|w^n)\sqrt{\rho^{\tensor n}} (\bar{\Lambda}_{w^n}^A-\tilde{\Lambda}_{w^n}^A) \sqrt{\rho^{\tensor n}}\right\|_1 \leq \epsilon.
\end{align}
As a first step, we split and bound $G$ as $G \leq S_1 + S_2$, where
\begin{align} 
S_1 & \deq \sum_{x^n}\left\|\sum_{w^n}{\lambda}_{w^n}^A\hat{\rho}_{w^n}^A P^n_{X|W}(x^n|w^n) - \frac{1}{2^{nC}}\sum_{w^n\neq w^n_0}\sum_{\mu =1}^{2^{nC}}\gamma^{(\mu)}_{w^n}{\Lambda}_{w^n}^A P^n_{X|W}(x^n|w^n) \right\|_{1},  \nonumber \\ 
S_2 & \deq \sum_{x^n}\left\|P^n_{X|W}(x^n|w_0^n)\frac{1}{2^{nC}}\sum_{\mu =1}^{2^{nC}}\left[\sqrt{\rho^{\tensor n}} (I - \sum_{w^n}A_{w^n}^{(\mu)}) \sqrt{\rho^{\tensor n}} \right]\right\|_{1}. \nonumber
\end{align}



Now we bound $S_1$ by adding and subtracting an appropriate term and using triangle inequality as $S_1 \leq S_{11} + S_{12}$, where $S_{11} $ and $ S_{12}$ are given by
\begin{align}
S_{11}& \deq \left\|\sum_{x^n} \left[\sum_{w^n}{\lambda}_{w^n}^A\hat{\rho}_{w^n}^A P^n_{X|W}(x^n|w^n)\tensor \ketbra{x^n}    - \frac{1}{2^{nC}}\sum_{w^n\neq w_0^n}\sum_{\mu =1}^{2^{nC}}\gamma^{(\mu)}_{w^n}\hat{\rho}_{w^n}^A P^n_{X|W}(x^n|w^n)\tensor \ketbra{x^n}    \right]\right\|_1 \nonumber, \\
S_{12} & \deq \left\|\sum_{x^n}\sum_{w^n\neq w_0^n} \left[\frac{1}{2^{nC}}\sum_{\mu =1}^{2^{nC}}\gamma^{(\mu)}_{w^n}\hat{\rho}_{w^n}^A P^n_{X|W}(x^n|w^n)  - \frac{1}{2^{nC}}\sum_{\mu =1}^{2^{nC}}\gamma^{(\mu)}_{w^n}{\Lambda}_{w^n}^AP^n_{X|W}(x^n|w^n) \right]\tensor \ketbra{x^n} \right\|_{1} \nonumber. 
\end{align}
Note that in the above expressions, we have used an additional triangle inequality for block operators (which is in fact an equality) to move the summation over $\mathcal{X}^n$ inside the trace norm.
Firstly, we show $S_{11}$ is small with high probability. To simplify the notation, we define $\sigma_{w^n} = \sum_{x^n} P^n_{X|W}(x^n|w^n)  \ketbra{x^n}$ which gives $S_{11}$ as 
\begin{align}
S_{11} &  = \left\|\sum_{w^n}{\lambda}_{w^n}^A\hat{\rho}_{w^n}^A  \tensor \sigma_{w^n} - \cfrac{1}{2^{n(R+C)}}\cfrac{(1-\varepsilon)}{(1+\eta)}\sum_{l,\mu} \hat{\rho}_{W^{n,(\mu)}(l)}^A\tensor\sigma_{W^{n,(\mu)}(l)}\right\|_1. \nonumber 
\end{align}
We develop the following lemma to bound this term.
\begin{lem}\label{lem:nf_SoftCovering}
Consider an ensemble given by $\{\tilde{P}_{W^n}({w^n}),\mathcal{T}_{w^n}\}$, where $ \tilde{P}_{W^n}({w^n}) $ is the pruned distribution as defined in (\ref{def:WprunedDist}) and $  \mathcal{T}_{w^n}$ is any tensor product state of the form  $  \mathcal{T}_{w^n} = \bigtensor_{i=1}^n\mathcal{T}_{w}  $. Then, for any $\epsilon_2 > 0 $, $\delta>0$, there exists functions $\delta_s(\delta,\epsilon_2)$ and $\delta'(\delta)$,  such that  for all sufficiently large $n$, the inequality
\begin{align}
    \left\|\sum_{w^n}\lambda_{w^n}^A\mathcal{T}_{w^n} - \cfrac{1}{2^{n(R+C)}}\cfrac{(1-\varepsilon)}{(1+\eta)}\sum_{l,\mu} \mathcal{T}_{W^{n,(\mu)}(l)}\right\|_1\leq \epsilon_2, \label{eq:nf_SoftCovering_Term}
\end{align}  
holds with probability greater than $1-\delta_s(\delta,\epsilon_2)$, 
if  $R + C \geq S({\sum_{w}\lambda_w^A\mathcal{T}_{w}}) - \sum_{w}\lambda^A_{w} S(\mathcal{T}_{w}) + \delta' = \chi\left(\{\lambda_w^A,\mathcal{T}_w\}\right) + \delta'$, where 
 $\{W^{n,(\mu)}(l): l \in [1,2^{nR}],\mu \in [1,2^{nC}]\}$ are independent random vectors generated according to the pruned distribution given in \eqref{def:WprunedDist}, and $\delta_s \searrow 0$, $\delta' \searrow 0$
 as $\epsilon_2 \searrow 0$,
 $\delta \searrow 0$.
\end{lem}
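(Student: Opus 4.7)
The plan is to prove this as a standard quantum soft covering (Ahlswede--Winter / Shannon--Winter) statement, but adapted to the pruned distribution $\tilde{P}_{W^n}$. Since $\mathcal{T}_{w^n}$ is a tensor product state and $\{\lambda_w^A,\mathcal{T}_w\}$ is a fixed ensemble, the quantity $\chi(\{\lambda_w^A,\mathcal{T}_w\})=S(\bar{\mathcal{T}})-\sum_w\lambda_w^A S(\mathcal{T}_w)$, with $\bar{\mathcal{T}}\deq\sum_w\lambda_w^A\mathcal{T}_w$, is exactly the rate above which an empirical average of $2^{n(R+C)}$ IID copies of $\mathcal{T}_{W^n}$ concentrates in trace norm around its mean.

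First, I would reduce the claim to a concentration statement for the empirical mean. By a triangle inequality, I would split the quantity in \eqref{eq:nf_SoftCovering_Term} into (i) the bias $\|\sum_{w^n}\lambda_{w^n}^A\mathcal{T}_{w^n}-\mathbb{E}\frac{1-\varepsilon}{1+\eta}\mathcal{T}_{W^{n,(\mu)}(l)}\|_1$ and (ii) the deviation of the empirical average from its expectation. Term (i) is small by typicality: the pruned distribution assigns mass $\lambda_{w^n}^A/(1-\varepsilon)$ to $w^n\in\TDeltaN(W)$, the set $\TDeltaN(W)$ has probability at least $1-\varepsilon$ with $\varepsilon\searrow 0$, and the factor $\frac{1-\varepsilon}{1+\eta}$ cancels the normalization; the leftover error is $O(\varepsilon+\eta)$ in trace norm, which can be made $\leq\epsilon_2/2$.

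Second, I would handle term (ii), which is a pure operator-Chernoff problem. Let $\Pi$ be the $\delta$-typical projector of $\bar{\mathcal{T}}^{\tensor n}$ and $\Pi_{w^n}$ the conditionally typical projector of $\mathcal{T}_{w^n}$. Defining $\tilde{\mathcal{T}}_{w^n}\deq \Pi\Pi_{w^n}\mathcal{T}_{w^n}\Pi_{w^n}\Pi$, the gentle operator lemma shows $\tilde{\mathcal{T}}_{w^n}$ is trace-close to $\mathcal{T}_{w^n}$ on average and satisfies the operator bound $\tilde{\mathcal{T}}_{w^n}\leq 2^{-n(\sum_w\lambda_w^A S(\mathcal{T}_w)-\delta)}\Pi$, while $\mathbb{E}\tilde{\mathcal{T}}_{W^n}\geq (1-o(1))\,2^{-n(S(\bar{\mathcal{T}})+\delta)}\Pi$ on the support of $\Pi$. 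Applying the operator Chernoff bound to the $N=2^{n(R+C)}$ independent operators $\tilde{\mathcal{T}}_{W^{n,(\mu)}(l)}$, the deviation from $\mathbb{E}\tilde{\mathcal{T}}_{W^n}$ (in operator norm on the typical subspace, then in trace norm) is $\leq \epsilon_2/2$ with probability at least $1-2\,\text{tr}(\Pi)\exp(-N\cdot c\,\epsilon_2^2\cdot 2^{-n(\chi+2\delta)})$. This failure probability is doubly exponentially small exactly when $R+C\geq \chi(\{\lambda_w^A,\mathcal{T}_w\})+\delta'$ for some $\delta'=\delta'(\delta)\searrow 0$, giving the stated rate condition.

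The main technical obstacle is the operator-Chernoff step, because the pruned distribution breaks exact IID structure — the draws $W^{n,(\mu)}(l)$ are IID under $\tilde{P}$, but $\tilde{P}\neq\prod_i P_W$. I would address this by absorbing the $1/(1-\varepsilon)$ factor into the pruned-IID law so the bound can still be applied verbatim, while passing the residual $O(\varepsilon)$ mismatch into the bias term (i). With that book-keeping done, the argument closes in standard fashion and yields $\delta_s,\delta'\searrow 0$ as $\delta,\epsilon_2\searrow 0$.
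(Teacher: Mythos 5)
Your proposal is correct and follows essentially the same route as the paper: a triangle-inequality split into a bias term (bounded by $O(\varepsilon+\eta)$ via typicality of the pruned distribution) and an empirical-deviation term bounded by the covering lemma applied to the pruned ensemble, with the rate constraint $R+C\geq\chi(\{\lambda_w^A,\mathcal{T}_w\})+\delta'$ emerging from $\log_2(D/d)$. The only cosmetic difference is that you unpack the covering lemma into its operator-Chernoff proof (typical projectors, cut-down operators $\tilde{\mathcal{T}}_{w^n}$, Ahlswede--Winter concentration), whereas the paper simply invokes the covering lemma from Wilde's text as a black box after verifying its four hypotheses.
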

\begin{proof}
 The proof of the lemma is provided in Appendix \ref{lem:nf_SoftCovering_Proof}
\end{proof}
Therefore, using the lemma above,
$S_{11}$ can be made arbitrarily small, for sufficiently large n, with high probability, given the constraints
$R + C \geq S(\sum_{w}\lambda_w^A\hat{\rho}_{w}^A  \tensor \sigma_{w}) - \sum_{w}\lambda_{w}^A S(\hat{\rho}_{w}^A  \tensor \sigma_{w}) + \delta' = \chi\left(\{\lambda_w^A\},\{\hat{\rho}_{w}^A\tensor\sigma_{w}\}\right) + \delta' = I(RX;W)_{\sigma} + \delta'$.
Secondly, we bound $S_{12}$ by applying expectation and using Gentle Measurement Lemma \cite{Wilde_book} as follows,
\begin{align}
\EE\left[S_{12}\right] & = \EE\Bigg[\left\|\sum_{x^n}\sum_{w^n\neq w_0^n}  \left[\frac{1}{2^{nC}}\sum_{\mu =1}^{M}\gamma^{(\mu)}_{w^n}\hat{\rho}_{w^n}^A P^n_{X|W}(x^n|w^n)  - \frac{1}{2^{nC}}\sum_{\mu =1}^{2^{nC}}\gamma^{(\mu)}_{w^n}{\Lambda_{w^n}^A}P^n_{X|W}(x^n|w^n) \right]\tensor \ketbra{x^n} \right\|_{1}\Bigg] \nonumber \\
& \stackrel{(a)}\leq \frac{1}{2^{nC}}\sum_{\mu =1}^{2^{nC}} \sum_{x^n}\sum_{w^n\neq w_0^n}P^n_{X|W}(x^n|w^n) \EE\left[\gamma^{(\mu)}_{w^n}  \left \| (\hat{\rho}_{w^n}^A -\Lambda_{w^n}^A) \right \|_{1} \right] \nonumber \\
& \stackrel{(b)}= \frac{1}{2^{nC}}\sum_{\mu =1}^{2^{nC}} \sum_{w^n \in \TDeltaN{(W)}}\cfrac{\lambda_{w^n}^A}{(1+\eta)} \left \|\hat{\rho}_{w^n} -\Lambda_{w^n}^A \right \|_{1} \nonumber \\
& \stackrel{(c)}= \cfrac{1}{(1+\eta)} \sum_{w^n \in \TDeltaN{(W)}}\lambda_{w^n}^A \left \|\hat{\rho}_{w^n}^A -\hat{\Pi}\Pi_{\rho}\Pi_{w^n}\hat{\rho}_{w^n}^A\Pi_{w^n}\Pi_{\rho}\hat{\Pi} \right \|_{1} \stackrel{(d)}\leq \cfrac{(1-\varepsilon)}{(1+\eta)} (2\sqrt{\varepsilon'} + 2\sqrt{\varepsilon''}) \deq \varepsilon_3, \label{eq:avg_gentle}
\end{align}
where $(a)$ is obtained by using triangle inequality and the linearity of expectation, $(b)$ is obtained by marginalizing over $x^n$ and using the fact that $\EE[\gamma^{(\mu)}_{w^n}] = \frac{\lambda_{w^n}}{(1+\eta)} $, $(c)$ is obtained by substitution, and finally $(d)$ uses repeated application of the average gentle measurement lemma, by setting $ \varepsilon_3 = \frac{(1-\varepsilon)}{(1+\eta)} (2\sqrt{\varepsilon'} + 2\sqrt{\varepsilon''}) $ with $\varepsilon_3 \searrow 0 $ as $ \delta \searrow 0 $ and, $ \varepsilon' = \varepsilon + 2\sqrt{\varepsilon} $ and $ \varepsilon'' = 2\varepsilon + 2\sqrt{\varepsilon} $ (see (35) in \cite{wilde_e} for details).
Finally, we show that the term corresponding to $S_2$ can also be made arbitrarily small. This term can be simplified as follows
\begin{align}
S_2 & \leq \frac{1}{2^{nC}}\sum_{\mu =1}^{2^{nC}}\sum_{x^n}P^n_{X|W}(x^n|w_0^n)\left\| \sum_{w^n}\lambda_{w^n}^A\hat{\rho}_{w^{n}}^A - \sum_{w^n\neq w_0^n}\sqrt{\rho^{\tensor n}}A_{w^n}^{(\mu)} \sqrt{\rho^{\tensor n}} \right\|_{1}, \nonumber \\
& \leq \frac{1}{2^{nC}}\sum_{\mu =1}^{2^{nC}}\left\| \sum_{w^n}\lambda_{w^n}^A\hat{\rho}_{w^{n}}^A - \sum_{w^n\neq w_0^n}\gamma_{w^n}^{(\mu)}\hat{\rho}_{w^n}^A \right\|_1 + \frac{1}{2^{nC}}\sum_{\mu =1}^{2^{nC}}\sum_{w^n\neq w_0^n}\gamma_{w^n}^{(\mu)}\left\| \hat{\rho}_{w^n}^A -  \Lambda_{w^n}^A \right\|_1 = S_{21} + S_{22}, \nonumber
\end{align}
where
\begin{align}\label{eq:mainTerm0Simplification}
S_{21} & \deq \frac{1}{2^{nC}}\sum_{\mu =1}^{2^{nC}}\left\| \sum_{w^n}\lambda_{w^n}^A\hat{\rho}_{w^{n}}^A - \cfrac{(1-\varepsilon)}{(1+\eta)}\cfrac{1}{2^{nR}}\sum_{l=1}^{2^{nR}}\hat{\rho}_{W^{n,(\mu)}_{l}}^A \right\|_1 \quad \mbox{and} \quad
S_{22} & \deq \frac{1}{2^{nC}}\sum_{\mu =1}^{2^{nC}}\sum_{w^n\neq w_0^n}\gamma_{w^n}^{(\mu)}\left\| \hat{\rho}_{w^n}^A -  \Lambda_{w^n}^A \right\|_1 .
\end{align}
Now, for the first term in \eqref{eq:mainTerm0Simplification} we use 
Lemma \ref{lem:nf_SoftCovering} and claim that for given any $ \epsilon_{\scriptscriptstyle w_0}, \delta_{\scriptscriptstyle w_0} \in (0,1)$, if  
\begin{align}
R > S\left (\sum_{w \in \mathcal{W}}\lambda_w^A\hat{\rho}_{w}\right ) + \sum_{w \in \mathcal{W}}\lambda_w^A S(\hat{\rho}_{w}) = I(R;W)_{\sigma}, \nonumber
\end{align} 
then the probability of this term being greater than $ \epsilon_{\scriptscriptstyle w_0} $ is bounded by $ \delta_{\scriptscriptstyle w_{0}} $ for sufficiently large n, where $ \sigma$ is as defined in the statement of the theorem .
Note that the  requirements we obtain on $ R$  were already imposed when claiming the collection of operators $ A_{w^n}^{(\mu)} $ forms a sub-POVM. As for the second term in \eqref{eq:mainTerm0Simplification} we again use the gentle measurement Lemma 
and bound its expected value as
\begin{align}
\EE\left[\frac{1}{2^{nC}}\sum_{\mu =1}^{2^{nC}}\sum_{w^n}\gamma_{w^n}^{(\mu)}\left\| \hat{\rho}_{w^n} -  \Lambda_{w^n} \right\|_1 \right]  = \sum_{w^n \in \TDeltaN(W)}\cfrac{\lambda_{w^n}}{(1+\eta)}\left\| \hat{\rho}_{w^n} -  \Lambda_{w^n} \right\|_1 \leq \varepsilon_{3},\nonumber
\end{align}
where $ \varepsilon_{3} $ is defined in \eqref{eq:avg_gentle}.

In summary, we have performed the following sequence of steps. Firstly, we argued that $\tilde{M}^{(n,\mu)}_A$ forms a valid sub-POVM for all $ \mu\in[1,2^{nC}]$, with high probability, when the rate $ R $ satisfies $ R > I(R;W)_{\sigma} $. Secondly, we moved onto bounding the trace norm between the states obtained after the action for these approximating POVMs when compared with those obtained from the action of actual POVM $ M $, characterized as $G$ using 
Definition \ref{def:faith-sim}. As a first step in establishing this bound, we showed that $G\leq S_1 + S_2.$
Considering $S_1$, we used triangle inequality and divided it into two terms: $S_{11}$ and $S_{12}$. Then, using Lemma \ref{lem:nf_SoftCovering} we showed that for any given $ \varepsilon_1 \in (0,1) $,  $S_{11}$ can be made smaller than $ \varepsilon_1, $
with high probability if $ R+C > I(RX;W)_{\sigma} $. As for $S_{12}$, we showed that it goes to zero in the expected sense using  (\ref{eq:avg_gentle}). Finally, for the term given by $S_2$, we bounded this as a sum of two trace norms $S_{21}$ and $S_{22}$ given in \eqref{eq:mainTerm0Simplification}. We showed that its first term can be made smaller than $ \epsilon_{\scriptscriptstyle w_0} $ with high probability if $ R > I(R;W) $ for sufficiently large $ n $, and the second term was shown to approach zero in expected sense. 

Now, using Markov inequality we argue the existence of at least one collection of POVMs that satisfies the statement of the Theorem  \ref{thm:nf_faithfulSim} as follows. Note that $S_{12}$ is same as $S_{22}$. Let $E_1$ be the event defined earlier in the proof. Let us define $ E_{2}, E_3 $ and $ E_{4} $ as the random variables corresponding to the terms $S_{11}, S_{12}$ and $S_{21}$, respectively.
Firstly, if $R > I(R;W)_{\sigma}$. then $\PP(E_1 ) \geq (1-\delta_1)$. Secondly, from Lemma \ref{lem:nf_SoftCovering},
for all $0<\epsilon_2<1 $, and 
for all sufficiently large n, if $$R + C \geq I(RX;W)_{\sigma} 
 + \delta', \quad \mbox{and} \quad R \geq I(W;R)_{\sigma} + \delta',$$ then  we have $\PP(E_2 \leq {\epsilon_2}) \geq 1-\delta_s(\epsilon_2)$, $\PP(E_4 \leq {\epsilon_{w_0}}) \geq 1-\delta_{w_0}$.
Thirdly, from (\ref{eq:avg_gentle}) we have  $\EE[E_3]  \leq \varepsilon_3$. This implies, from the Markov inequality, that
\begin{align}
    \PP(E_3 \geq 2\varepsilon_3) \leq \frac{\EE[E_3]}{2\varepsilon_3} \leq \frac{1}{2}.\nonumber
\end{align}
\noindent Using these bounds, we get
\begin{align}\label{eq:probExistence}
\PP\left((E_1 ) \medcap (E_2 < \varepsilon_2) \medcap (E_3<2\varepsilon_3) \medcap (E_4 < \epsilon_{w_0}) \right) & \geq \PP(E_1)+ \PP(E_2 < \varepsilon_2) + \PP(E_3 < 2\varepsilon_3)+ \PP(E_4 < \epsilon_{w_0}) -3 \nonumber \\
& \geq 3-(\delta_1+\delta_s+\delta_{w_0})+\cfrac{1}{2}  -3 > \cfrac{1}{4},
\end{align}
given that we choose $0 < \delta_1,\delta_s,\delta_{w_0}< \frac{1}{12}.$
Note that $G \leq E_2+ (2E_3) + E_4$, and the inequality in (\ref{eq:probExistence}) ensures that there exists a valid collection of sub-POVMs satisfying $G \leq \varepsilon_2 + 4\varepsilon_3 + \epsilon_{w_0}$, with non-vanishing probability.
Therefore, using random coding arguments, there exists at least one collection of sub-POVMs with the above construction satisfying the statement of  Theorem \ref{thm:nf_faithfulSim}.



\section{Simulation of Distributed POVMs with Stochastic Processing}\label{sec:nf_dist_POVM}
In this section, we develop a stochastic processing  variant of the distributed POVM simulation problem described in Section \ref{sec:Appx_POVM}. 
Let $\rho_{AB}$ be a density operator acting on a composite Hilbert Space  $\mathcal{H}_A\tensor \mathcal{H}_B$.
Consider two measurements $M_A$ and $M_B$ on sub-systems $A$ and $B$, respectively. 
Imagine again that we have three parties, named Alice, Bob and Eve, that are trying to collectively simulate a given measurement $M_{AB}$
acting on the state $\rho_{AB}$, as shown in Fig. \ref{fig:NF_Distributed}. In this version of distributed simulation, Eve additionally has access to unlimited private randomness.
The problem is defined in the following.
		 
\begin{figure}[hbt]
	\begin{center}
	  \includegraphics[scale=0.4]{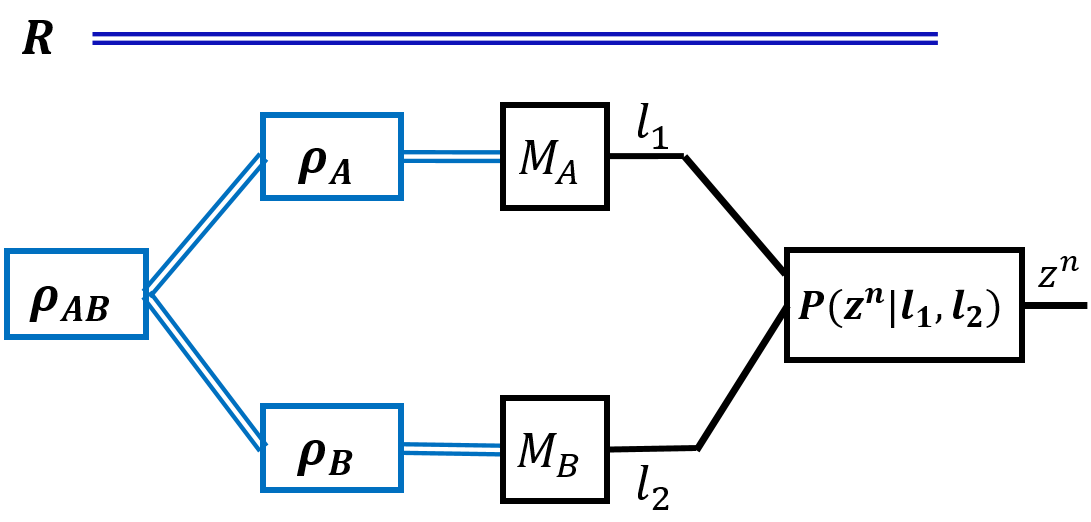}
    \caption{The diagram depicting the distributed POVM simulation problem with stochastic processing. In this setting, Eve additionally has access to unlimited private randomness.}
    \label{fig:NF_Distributed}  
	\end{center}
\end{figure}
\begin{definition}
	For a given finite set $\mathcal{Z}$, and a Hilbert space $\mathcal{H}_{A}\tensor \mathcal{H}_B$, a distributed protocol with stochastic processing with parameters $(n,\Theta_1,\Theta_2,N)$ 
	is characterized by 
	\\
		\textbf{1)} a collections of Alice's sub-POVMs  $\tilde{M}_A^{(\mu)},\mu \in [1, N]$ each acting on $\mathcal{H}_A^{\tensor n}$ and with outcomes in a subset $\mathcal{L}_1$ satisfying $|\mathcal{L}_1|\leq \Theta_1$.\\
		\textbf{2)} a collections of Bob's sub-POVMs  $\tilde{M}_B^{(\mu)},\mu \in [1, N]$ each acting on $\mathcal{H}_B^{\tensor n}$ and with outcomes in a subset $\mathcal{L}_2$, satisfying $|\mathcal{L}_2|\leq \Theta_2$.\\
		\textbf{3)} Eve's   classical stochastic map $P^{(\mu)}(z^n|l_1,l_2)$ for all $l_1\in \mathcal{L}_1, l_2\in \mathcal{L}_2, z^n\in \mathcal{Z}^n$ and $\mu \in [1,N]$. \\
		The overall  sub-POVM of this distributed protocol, given by $\tilde{M}_{AB}$, is characterized by the following operators: 
		\begin{equation}
		\tilde{\Lambda}_{z^n}\deq \frac{1}{N}\sum_{\mu, l_1, l_2} P^{(\mu)}(z^n|l_1,l_2)~ \Lambda^{A,(\mu)}_{l_1}\tensor \Lambda^{B,(\mu)}_{l_2}, \quad \forall z^n \in \mathcal{Z}^n, \nonumber
		\end{equation}
		where $\Lambda^{A,(\mu)}_{l_1}$ and  $\Lambda^{B,(\mu)}_{l_2}$ are the operators corresponding to the sub-POVMs $\tilde{M}_A^{(\mu)}$ and $\tilde{M}_B^{(\mu)}$, respectively.
	%
	%
	%
	%
	%
\end{definition}
In the above definition,  $(\Theta_1,\Theta_2)$ determines the amount of classical bits communicated from Alice and Bob to Eve. The amount of common randomness is determined by  $N$. The classical stochastic maps $P^{(\mu)}(z^n|l_1,l_2)$ represent the action of Eve on the received classical bits.  
\begin{definition}
	Given a POVM $M_{AB}$ acting on  $\mathcal{H}_{A}\tensor \mathcal{H}_B$, and a density operator $\rho_{AB}\in \mathcal{D}(\mathcal{H}_{A}\tensor \mathcal{H}_B)$, a triple $(R_1,R_2,C)$ is said to be achievable, if for all $\epsilon>0$ and for all sufficiently large $n$, there exists a distributed protocol with stochastic processing with parameters  $(n, \Theta_1, \Theta_2, N)$ such that its overall sub-POVM $\tilde{M}_{AB}$  is $\epsilon$-faithful to $M_{AB}^{\tensor n}$ with respect to $\rho_{AB}^{\tensor n}$ (see Definition \ref{def:faith-sim}), and
	\begin{align*}
	\frac{1}{n}\log_2 \Theta_i \leq R_i+\epsilon, \quad i=1,2, \quad \mbox{and} \quad
	\frac{1}{n}\log_2 N \leq C+\epsilon.
	\end{align*}
	The set of all achievable triples $(R_1,R_2,C)$ is called the achievable rate region. 
\end{definition}
The following theorem provides an inner bound to the {achievable} rate region, which is proved in Section \ref{appx:nf_proof of thm dist POVM}.
\begin{theorem}\label{thm:nf_dist POVM appx}
		Given a density operator $\rho_{AB}\in \mathcal{D}(\mathcal{H}_{A}\tensor \mathcal{H}_B)$, and a POVM $M_{AB}=\{\Lambda^{AB}_{z}\}_{z \in \mathcal{Z}}$ acting on  $\mathcal{H}_{A}\tensor \mathcal{H}_B$ having a separable decomposition with stochastic integration (as in Definition \ref{def:Joint Measurements}), a triple $(R_1,R_2,C)$ is achievable if the following inequalities are satisfied:
		\begin{subequations}\label{eq:nf_dist POVm appx rates}
			\begin{align}
			R_1 &\geq I(U;RB)_{\sigma_1}-I(U;V)_{\sigma_3},\\
			R_2 &\geq I(V;RA)_{\sigma_2}-I(U;V)_{\sigma_3},\\
			R_1+R_2 &\geq I(U;RB)_{\sigma_1}+I(V;RA)_{\sigma_2}-I(U;V)_{\sigma_3},\label{eq:nfrate3}\\
			R_1 + C &\geq I(U;RZV)_{\sigma_3} - I(U;V)_{\sigma_3},\\
			R_2 + C &\geq I(V;RZ)_{\sigma_3} - I(U;V)_{\sigma_3}, \\
			R_1+R_2+C &\geq I(UV;RZ)_{\sigma_3}, \label{eq:nfrate4}
			\end{align}
		\end{subequations}
		for some decomposition with POVMs $\bar{M}_A=\{\bar{\Lambda}^A_{u}\}_{u \in \mathcal{U}}$ and $\bar{M}_B=\{ \bar{\Lambda}^B_{v}\}_{v \in \mathcal{V}}$ and a stochastic map $P_{Z|U,V}:\mathcal{U}\times \mathcal{V} \rightarrow \mathcal{Z}$,
		where $\Psi^{\rho_{AB}}_{R A B}$ is a purification of $\rho_{AB}$, and the above information quantities are computed for the auxiliary states
$\sigma_{1}^{RUB} \deq (\emph{id}_R\tensor \bar{M}_{A} \tensor \emph{id}_B) ( \Psi^{\rho_{AB}}_{R A B}), \sigma_{2}^{RAV} \deq (\emph{id}_R\tensor \emph{\id}_A \tensor \bar{M}_{B}) ( \Psi^{\rho_{AB}}_{R A B}),$ 
and	$\sigma_3^{RUVZ}   \deq \sum_{u,v,z}\sqrt{\rho_{AB}}\left (\bar{\Lambda}^A_{u}\tensor \bar{\Lambda}^B_{v}\right )\sqrt{\rho_{AB}} \tensor P_{Z|U,V}(z|u,v)\ketbra{u}\tensor\ketbra{v}\tensor\ketbra{z}.$

\end{theorem}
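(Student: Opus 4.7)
The plan is to combine the distributed simulation architecture of Theorem \ref{thm: dist POVM appx} with the soft-covering machinery developed in the proof of Theorem \ref{thm:nf_faithfulSim}. Starting from the separable decomposition
\[
\Lambda^{AB}_{z}=\sum_{u,v} P_{Z|U,V}(z|u,v)\, \bar{\Lambda}^A_{u}\tensor \bar{\Lambda}^B_{v},
\]
it suffices to produce at Alice and Bob a pair of approximating sub-POVMs whose joint classical outputs, after being pushed through the memoryless channel $P_{Z|U,V}^n$ at Eve, induce an overall sub-POVM that is $\epsilon$-faithful to $M_{AB}^{\tensor n}$ with respect to $\rho_{AB}^{\tensor n}$.

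I would reuse the random code construction of Section \ref{appx:proof of thm dist POVM} essentially verbatim: sample codewords $U^{n,(\mu_1)}(l)$ and $V^{n,(\mu_2)}(k)$ from the pruned canonical distributions associated with $\bar{M}_A,\bar{M}_B$, form the operators $A^{(\mu_1)}_{u^n}$, $B^{(\mu_2)}_{v^n}$, and bin them into $2^{nR_1}$ and $2^{nR_2}$ indices, respectively. Eve's map $P^{(\mu)}(z^n|l_1,l_2)$ is then defined compositionally in two stages: it first attempts to recover a jointly $\delta$-typical codeword pair $(u^n,v^n)$ from the received bin indices (the mutual packing step), and subsequently draws $z^n$ according to $P_{Z|U,V}^n(\cdot|u^n,v^n)$ using private randomness. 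The resulting overall sub-POVM $\tilde{M}_{AB}$ has the prescribed form in the definition of a distributed protocol with stochastic processing.

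The trace distance $G$ between $(\id\tensor M_{AB}^{\tensor n})(\Psi^{\rho_{AB}}_{R^nA^nB^n})$ and the simulated state is then split, via triangle inequalities, into three contributions: (i) the error in approximating $\bar{M}_A^{\tensor n}\tensor \bar{M}_B^{\tensor n}$ by the un-binned random POVMs, bounded using Lemma \ref{lem:mutual covering} and giving the point-to-point Winter-type conditions $\tilde R_1 > I(U;RB)_{\sigma_1}$, $\tilde R_2 > I(V;RA)_{\sigma_2}$; (ii) the binning error, bounded via Proposition \ref{prop:Lemma for S_3} and yielding $(\tilde R_1-R_1)+(\tilde R_2-R_2) < I(U;V)_{\sigma_3}$; and (iii) the error introduced by Eve's stochastic post-processing, which I would control by extending Lemma \ref{lem:nf_SoftCovering} to a two-codebook ensemble of the form $\{\lambda^{AB}_{uv},\, \hat\rho^{AB}_{uv}\tensor \sigma_{uv}\}$ with $\sigma_{uv}\deq \sum_z P_{Z|U,V}(z|u,v)\ketbra{z}$.

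The main obstacle, already flagged in the paper's outline, is that Eve's stochastic coupling of $U$ and $V$ through $Z$ prevents a direct symmetric two-sided use of the mutual covering lemma for the common-randomness constraints. To sidestep this, I would partition the common randomness non-symmetrically as $C=C_1+C_2$ and apply the soft-covering step successively rather than in parallel: first cover $U^n$ against the side information $(R,Z,V)$ to obtain $\tilde R_1+C_1 \geq I(U;RZV)_{\sigma_3}$, then cover $V^n$ against the reduced side information $(R,Z)$ to obtain $\tilde R_2+C_2 \geq I(V;RZ)_{\sigma_3}$, and finally invoke the joint soft-covering argument on $(U^n,V^n)$ to obtain $\tilde R_1+\tilde R_2+C \geq I(UV;RZ)_{\sigma_3}$. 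Combining these with the bounds from (i)--(iii) and eliminating the auxiliary rates $(\tilde R_1,\tilde R_2,C_1,C_2)$ via Fourier--Motzkin, exactly as in the concluding step of the proof of Theorem \ref{thm: dist POVM appx}, should yield the symmetric single-letter region in \eqref{eq:nf_dist POVm appx rates}.
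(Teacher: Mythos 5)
Your high-level architecture (reuse the random codebooks and binning of Theorem~\ref{thm: dist POVM appx}; Eve first decodes $(u^n,v^n)$ and then privately samples $z^n$ from $P^n_{Z|U,V}$; eliminate auxiliary rates by Fourier--Motzkin) is right, and your late realization that the common-randomness terms must come from an \emph{asymmetric, successive} soft-covering step --- cover $U^n$ against $(R,Z,V)$ to get $\tilde R_1 + C_1 \geq I(U;RZV)_{\sigma_3}$, then cover $V^n$ against $(R,Z)$ to get $\tilde R_2 + C_2 \geq I(V;RZ)_{\sigma_3}$ --- is exactly what the paper does. However, your item (i) is a genuine flaw that contradicts the rest of your plan. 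Lemma~\ref{lem:mutual covering} requires, as a hypothesis, that each marginal sub-POVM be $\epsilon$-faithful to the corresponding marginal POVM, and by Lemma~\ref{lem:M_XM_Y POVM} that faithfulness requires \emph{both} $\tilde R_i > I(\cdot;R\,\cdot)$ and the additional common-randomness inequalities $C_1 + \tilde R_1 \geq S(U)$, $C_2 + \tilde R_2 \geq S(V)$. Those are strictly stronger than the $I(U;RZV)$ and $I(V;RZ)$ bounds you want, so invoking Lemma~\ref{lem:mutual covering} would over-constrain the region and at best reproduce Theorem~\ref{thm: dist POVM appx} --- it would completely erase the gain from the stochastic decoder. (Indeed, once the joint un-binned sub-POVM is faithful to $\bar M_A^{\tensor n}\tensor \bar M_B^{\tensor n}$, pushing both sides through $P^n_{Z|U,V}$ only contracts the trace distance, so your item (iii) would be vacuous; your three-part decomposition is internally inconsistent.) The paper explicitly remarks, in the preamble to Section~\ref{appx:nf_proof of thm dist POVM}, that the mutual-covering route of Theorem~\ref{thm: dist POVM appx} is unavailable here precisely because the stochastic $P_{Z|U,V}$ jointly perturbs the two streams.

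What the paper actually does is use only the sub-POVM portion of Lemma~\ref{lem:M_XM_Y POVM} (so $\tilde R_1 > I(U;RB)_{\sigma_1}$, $\tilde R_2 > I(V;RA)_{\sigma_2}$ control the ``not-covering'' remainder $\widetilde S$ via Proposition~\ref{prop:Lemma for S_234}), the stochastic analogue of mutual packing (Proposition~\ref{prop:Lemma for S_12}, not Proposition~\ref{prop:Lemma for S_3}) for the binning error, and then the two successive soft-covering steps (the $J_1/J_2$ bound for $Q_1$ and Proposition~\ref{prop:Lemma for Q2} for $Q_2$) for the remaining approximation error. Your proposal would be correct if you dropped step (i) entirely --- deriving $\tilde R_i > I(\cdot;R\,\cdot)$ from the sub-POVM condition, not from mutual covering --- and also dropped the spurious extra constraint ``$\tilde R_1+\tilde R_2+C \geq I(UV;RZ)_{\sigma_3}$ from a joint soft-covering argument,'' which neither appears in the paper's intermediate inequalities \eqref{eq:nf_rate-region 1} nor is needed (it falls out of Fourier--Motzkin from the two successive-covering bounds and the binning bound).
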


\begin{remark}
An alternative characterization of the above rate region can be obtained in terms of Holevo information. For this, we define the following ensemble $\left\{\lambda_{z},\hat{\rho}_z\right\}$ 
as
\begin{align}
    \lambda_{z} = \sum_{u\in \mathcal{U}}\sum_{v\in \mathcal{V}}\lambda^{AB}_{uv} P_{Z|UV}(x|u,v) \quad \mbox{and} \quad \hat{\rho}_z = \sum_{u\in \mathcal{U}}\sum_{v\in \mathcal{V}} P_{UV|Z}(u,v|z)\hat{\rho}^{AB}_{uv}, \nonumber
\end{align}  
with 
 $\left\{\lambda^{A}_{u},\hat{\rho}^{A}_{u}\right\}$, $\left\{\lambda^{B}_{v},\hat{\rho}^{B}_{v}\right\}$ and 
$\left\{\lambda^{AB}_{uv},\hat{\rho}^{AB}_{uv}\right\}$  being the canonical ensembles defined in \eqref{eq:dist_canonicalEnsemble}, and 
$    P_{UV|Z}(u,v|z) = \lambda^{AB}_{uv}\cdot P_{Z|UV}(z|u,v)/{\lambda_z}$ for all $(u,v,z) \in \mathcal{U}\times\mathcal{V}\times\mathcal{Z}$.
With this ensemble, we have
$I(U;RB)_{\sigma_1} = \chi\left(\left\{\lambda_{u}^{A},\hat{\rho}_u^{A}\right\}\right)$, $I(V;RA)_{\sigma_2} = \chi\left(\left\{\lambda_{v}^{B},\hat{\rho}_v^{B}\right\}\right)$, and  $I(UV;RZ)_{\sigma_3} = I(UV;Z) + \chi\left(\left\{\lambda_{uv}^{AB},\hat{\rho}^{AB}_{uv}\right\}\right) - \chi\left(\left\{\lambda_{z},\hat{\rho}_z\right\}\right).$

\end{remark}

\section{Proof of Theorem \ref{thm:nf_dist POVM appx}}\label{appx:nf_proof of thm dist POVM}
\addtocontents{toc}{\protect\setcounter{tocdepth}{1}}
\subsection{Construction of POVMs}\label{subsec:stochastic_dist_POVM}
Suppose there exist POVMs $\bar{M}_A \deq $ $\{\bar{\Lambda}^A_u\}_{u\in \mathcal{U}}$ and $\bar{M}_B\deq\{\bar{\Lambda}^B_v\}_{v\in \mathcal{V}}$ and a stochastic map $ P_{Z|UV}:\mathcal{U\times V\rightarrow \mathcal{Z}} $,  such that ${M}_{AB}$ can be decomposed as
\begin{equation}\label{eq:nf_LambdaAB decompos}
\Lambda^{AB}_{z}=\sum_{u,v} P_{Z|UV}(z|u,v) \bar{\Lambda}^A_{u}\tensor \bar{\Lambda}^B_{v}, ~ \forall z, 
\end{equation}
Note that the proof technique here is very different to the one used in Section \ref{appx:proof of thm dist POVM} for proving Theorem \ref{thm: dist POVM appx}. Recall that in Theorem \ref{thm: dist POVM appx} we initiated the proof by constructing a protocol to faithfully simulate $\bar{M}_A^{\tensor n}\tensor \bar{M}_B^{\tensor n}$. However, here we are not interested in faithfully simulating $\bar{M}_A^{\tensor n}\tensor \bar{M}_B^{\tensor n}$. Instead, by carefully exploiting the private randomness Eve possesses, manifested in terms of the stochastic processing applied by her on the classical bits received, i.e., $ P_{Z|U,V} $, we aim to strictly reduce the sum rate constraints compared to the ones obtained in (\ref{eq:rate4}) of Theorem \ref{thm: dist POVM appx}. This requires a considerably different methodology. More specifically, Lemma \ref{lem:faithful_equivalence} was employed in Theorem \ref{thm: dist POVM appx}, which guaranteed that any two point-to-point POVMs that can individually approximate their corresponding original POVMs, can also faithfully approximate a measurement formed by the tensor product of the original POVMs performed on any state in the tensor product Hilbert space. Such a lemma cannot be developed in the setting involving a stochastic decoder. This is due to the fact that bits received from Alice and Bob are jointly perturbed by the stochastic decoder which doesn't allow a straightforward segmentation into two point-to-point problems. However, the analysis performed in the Section \ref{appx:nf_proof of thm dist POVM} actually modularizes the problem, using an asymmetric partitioning.

Nevertheless, we use the same POVM construction and binning operation as in the proof of Theorem \ref{thm: dist POVM appx}, and hence we appeal to Section \ref{sec:POVM construction} and \ref{sec:POVM binning} for constructing the POVMs based on the codebook $\mathcal{C}^{(\mu)}$ and binning them, resulting in the sub-POVMs $M_1^{( n, \mu_1)}$ and $M_2^{( n, \mu_2)}$ (see \eqref{eq:POVM-14}), and $M_A^{( n, \mu_1)}$ and $M_B^{( n, \mu_2)}$ (see \eqref{eq:POVM-16}), and their completions. All the notations used subsequently can be found in these sections. Therefore, the main focus of the proof hereon is to describe the decoder which is distinct from the one with deterministic mapping, in the sense that it employs the additional stochastic map, and a thorough analysis of the achievability result.

To start with, one can show by using a result similar to Lemma \ref{lem:M_XM_Y POVM} that with probability sufficiently close to one, $M_1^{( n, \mu_1)}$ and $M_2^{( n, \mu_2)}$ form sub-POVMs for all $\mu_1\in [1,N_1]$ and $\mu_2\in [1,N_2]$ if $\tilde{R}_1> I(U;RB)_{\sigma_1}$ and $\tilde{R}_2> I(V;RA)_{\sigma_2}$.
where $\sigma_1, \sigma_2$ are defined as in the statement of the theorem. Further, from $D^{(\mu_1,\mu_2)}$ in \eqref{eq:POVM-17} and $F^{(\mu)}$, as defined subsequently,  we obtain the sub-POVM 
$\tilde{M}_{AB}$ with the following operators. 
\begin{align*}
  \tilde{\Lambda}_{u^n,v^n}^{AB} \delequal \frac{1}{N_1N_2}\sum_{\mu_1=1}^{N_1}\sum_{\mu_2=1}^{N_2} 
  \sum_{(i,j):F^{(\mu)}(i,j)=(u^n,v^n)}
  \Gamma^{A, ( \mu_1)}_i\tensor \Gamma^{B, (\mu_2)}_j, \quad 
\forall (u^n,v^n) \in \mathcal{U}^n \times \mathcal{V}^n.   
\end{align*}
Now, we use the stochastic mapping to define the approximating sub-POVM $\hat{M}^{(n)}_{AB} \deq \{\hat{\Lambda}_{z^n}\}$ as
  \begin{align*}
\hat{\Lambda}^{AB}_{z^n}=\sum_{u^n,v^n}  \tilde{\Lambda}_{u^n,v^n}^{AB}P^n_{Z|U,V}(u^n,v^n), ~ \forall z^n\in \mathcal{Z}^n.
\end{align*}

\subsection{Trace Distance}
In what follows, we show that $\hat{M}_{AB}^{(n)}$ is $\epsilon$-faithful to $M_{AB}^{\tensor n}$ with respect to $\rho_{AB}^{\tensor n}$ (according to Definition \ref{def:faith-sim}), where $\epsilon>0$ can be made arbitrarily small. More precisely, using \eqref{eq:nf_LambdaAB decompos}, we show that, with probability sufficiently close to 1, the following inequality holds
\begin{align}\label{eq:nf_actual trace dist}
{G} = \sum_{z^n} \norm{ \sum_{u^n,v^n}\sqrt{\rho_{AB}^{\tensor n}}\left( \bar{\Lambda}^A_{u^n}\tensor \bar{\Lambda}^B_{v^n} P^n_{Z|U,V}(z^n|u^n,v^n) - \tilde{\Lambda}_{u^n,v^n}^{AB}P^n_{Z|U,V}(u^n,v^n) \right )\sqrt{\rho_{AB}^{\tensor n}}}_1\leq \epsilon.
\end{align}

\noindent \textbf{Step 1: Isolating the effect of error induced by not covering}\\ Consider the second term within $ {G} $, which can be written as
\begin{align}
\sum_{u^n,v^n}&\sqrt{\rho_{AB}^{\tensor n}}\tilde{\Lambda}^{AB}_{u^n,v^n}\sqrt{\rho_{AB}^{\tensor n}}P^n_{Z|U,V}(u^n,v^n) \nonumber\\&= \frac{1}{N_1N_2}\sum_{\mu_1,\mu_2}\sum_{i,j} \sqrt{\rho_{AB}^{\tensor n}}\left (\Gamma^{A, ( \mu_1)}_i\tensor \Gamma^{B, (\mu_2)}_j\right )\sqrt{\rho_{AB}^{\tensor n}} P^n_{Z|U,V}(z^n|F^{(\mu)}(i,j))\underbrace{\sum_{u^n,v^n}\mathbbm{1}_{\{F^{(\mu)}(i,j) = (u^n,v^n)\}}}_{=1} \nonumber \\\vspace{-20pt}
& = T + \widetilde{T}, \nonumber
\end{align}
where \begin{align}
 T \deq & \frac{1}{N_1N_2}\sum_{\mu_1,\mu_2}\sum_{\{i>0\} \medcap \{j>0\}} \sqrt{\rho_{AB}^{\tensor n}}\left (\Gamma^{A, ( \mu_1)}_i\tensor \Gamma^{B, (\mu_2)}_j\right )\sqrt{\rho_{AB}^{\tensor n}} P^n_{Z|U,V}(z^n|F^{(\mu_1,\mu_2)}(i,j)), \nonumber \\
 \widetilde{T} \deq &\frac{1}{N_1N_2}\sum_{\mu_1,\mu_2}\sum_{\{i=0\}\medcup\{j=0\}} \sqrt{\rho_{AB}^{\tensor n}}\left (\Gamma^{A, ( \mu_1)}_i\tensor \Gamma^{B, (\mu_2)}_j\right )\sqrt{\rho_{AB}^{\tensor n}} P^n_{Z|U,V}(z^n|u^{n}_{0},v^{n}_{0}). \nonumber 
\end{align}
Hence, we have
\begin{align}
    G \leq S+ \widetilde{S}, \label{eq:G_separation_1}
\end{align}
where
\begin{align}
    S \deq \sum_{z^n} \norm{ \sum_{u^n,v^n}\sqrt{\rho_{AB}^{\tensor n}}\left( \bar{\Lambda}^A_{u^n}\tensor \bar{\Lambda}^B_{v^n} P^n_{Z|U,V}(z^n|u^n,v^n)\right)\sqrt{\rho_{AB}^{\tensor n}} - T }_1, \label{eq:def_S}
\end{align}
and $\widetilde{S} \deq \sum_{z^n}\|\widetilde{T}\|_1$. Note that $\widetilde{S}$ captures the error induced by not covering the state $\rho_{AB}^{\tensor n}.$
For the term corresponding to $\widetilde{S}$, we prove the following result. 
\begin{proposition}\label{prop:Lemma for S_234}
There exist functions  $\epsilon_{\widetilde{S}}(\delta), $ and $\delta_{\widetilde{S}}(\delta)$, such that for  all sufficiently small $\delta$ and sufficiently large $n$, we have $\PP\left(\widetilde{S} >\epsilon_{\widetilde{S}}(\delta) \right)\leq \delta_{\widetilde{S}}(\delta)$, if  $\tilde{R}_1 > I(U;RB)_{\sigma_1}$ and $\tilde{R}_2 > I(V;RA)_{\sigma_2},$ where $\sigma_1$ and $\sigma_2$ are auxiliary states defined in the theorem and $\epsilon_{\widetilde{S}},\delta_{\widetilde{S}}  \searrow 0$ as $\delta \searrow 0$. 
\end{proposition}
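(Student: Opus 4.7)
The plan is to bound $\widetilde{S}$ in three stages: convert trace norms to traces via positivity, reduce to two marginal ``escape-probability'' terms by inclusion-exclusion, and finally invoke the covering estimate that underlies the sub-POVM construction of Lemma~\ref{lem:M_XM_Y POVM}.

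First, every summand of $\widetilde{T}$ has the form $\sqrt{\rho_{AB}^{\tensor n}}(\Gamma^{A,(\mu_1)}_i \tensor \Gamma^{B,(\mu_2)}_j)\sqrt{\rho_{AB}^{\tensor n}}$ weighted by the non-negative number $P^n_{Z|U,V}(z^n|u_0^n,v_0^n)$. On the high-probability event that all $M_1^{(n,\mu_1)}$ and $M_2^{(n,\mu_2)}$ are valid sub-POVMs (guaranteed by $\tilde{R}_1 > I(U;RB)_{\sigma_1}$ and $\tilde{R}_2 > I(V;RA)_{\sigma_2}$ via the operator-Chernoff step of Lemma~\ref{lem:M_XM_Y POVM}), the completions $\Gamma^{A,(\mu_1)}_0$ and $\Gamma^{B,(\mu_2)}_0$ are positive, so $\widetilde{T}\geq 0$ for every $z^n$ and $\|\widetilde{T}\|_1 = \tr\{\widetilde{T}\}$. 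Summing over $z^n$ and using $\sum_{z^n}P^n_{Z|U,V}(z^n|u_0^n,v_0^n)=1$ collapses the expression to
\begin{equation*}
\widetilde{S} = \frac{1}{N_1N_2}\sum_{\mu_1,\mu_2}\sum_{\{i=0\}\cup\{j=0\}} \tr\left\{\big(\Gamma^{A,(\mu_1)}_i\tensor\Gamma^{B,(\mu_2)}_j\big)\rho_{AB}^{\tensor n}\right\}.
\end{equation*}

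Next, I apply inclusion-exclusion to $\{i=0\}\cup\{j=0\}$, collapse the complementary sums via $\sum_j\Gamma^{B,(\mu_2)}_j = \sum_i\Gamma^{A,(\mu_1)}_i = I$, and drop the non-negative intersection term to obtain
\begin{equation*}
\widetilde{S} \leq \frac{1}{N_1}\sum_{\mu_1}\tr\{\Gamma^{A,(\mu_1)}_0\,\rho_A^{\tensor n}\} + \frac{1}{N_2}\sum_{\mu_2}\tr\{\Gamma^{B,(\mu_2)}_0\,\rho_B^{\tensor n}\}.
\end{equation*}
Each summand is the escape probability of the corresponding marginal approximating POVM on the reduced state. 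Since $\Gamma^{A,(\mu_1)}_0 = I - \sum_{u^n\in\TDeltan(U)} A^{(\mu_1)}_{u^n}$, this equals $1 - \tr\{\sum_{u^n}A^{(\mu_1)}_{u^n}\rho_A^{\tensor n}\}$, with the analogous identity on the Bob side.

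Finally, I control each escape probability in expectation by the standard covering calculation from Winter's measurement compression proof. Using $\EE[\gamma^{(\mu_1)}_{u^n}] = \lambda^A_{u^n}/(1+\eta)$ for $u^n\in\TDeltan(U)$ together with trace cyclicity on the support of $\rho_A^{\tensor n}$ yields $\EE[\tr\{\sum_{u^n} A^{(\mu_1)}_{u^n}\rho_A^{\tensor n}\}] = (1+\eta)^{-1}\sum_{u^n\in\TDeltan(U)}\lambda^A_{u^n}\tr\{\Lambda^A_{u^n}\}$. Iterated gentle-measurement estimates for the three projectors $\Pi_{\rho_A}$, $\Pi_{u^n}^A$, and $\hat{\Pi}^A$ appearing in $\Lambda^A_{u^n}$ force $\tr\{\Lambda^A_{u^n}\}\geq 1-\delta'(\delta)$ for typical $u^n$, and $\sum_{u^n\in\TDeltan(U)}\lambda^A_{u^n}\geq 1-\varepsilon$ by typicality, so the above expectation is at least $1-\delta''(\delta)$ with $\delta''\searrow 0$ as $\delta,\eta,\varepsilon\searrow 0$. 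A symmetric argument handles the Bob term. Markov's inequality applied to each of the two marginal averages, together with a union bound against the sub-POVM event of Lemma~\ref{lem:M_XM_Y POVM}, produces functions $\epsilon_{\widetilde{S}}(\delta),\delta_{\widetilde{S}}(\delta)\searrow 0$ satisfying $\PP(\widetilde{S}>\epsilon_{\widetilde{S}}) \leq \delta_{\widetilde{S}}$. The main obstacle is ensuring this covering estimate uses only the mild sub-POVM rate condition $\tilde{R}_i > I(\cdot;\cdot)$, with no common-randomness requirement; this is resolved by observing that the operator-Chernoff step of Lemma~\ref{lem:M_XM_Y POVM} already delivers the two-sided sandwich of $\sum_{u^n}A^{(\mu_1)}_{u^n}$ between $(1-\varepsilon')\hat{\Pi}^A\Pi_{\rho_A}\hat{\Pi}^A$ and $I$, which is exactly the strength needed.
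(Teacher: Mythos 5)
Your proposal is correct but takes a genuinely different route from the paper's. The paper splits $\widetilde{S}$ into three pieces $\widetilde{S}_2,\widetilde{S}_3,\widetilde{S}_4$ according to $\{i>0,j=0\}$, $\{i=0,j>0\}$, $\{i=0,j=0\}$, then for each piece invokes Lemma~\ref{lem:Separate} (the ``separate'' lemma) to strip away the other party's measurement, and subsequently decomposes the remaining single-system trace norm $\|\sqrt{\rho_B^{\tensor n}}\,\Gamma^{B,(\mu_2)}_0\,\sqrt{\rho_B^{\tensor n}}\|_1$ into a soft-covering term (handled by Lemma~\ref{lem:nf_SoftCovering} with rate $\tilde{R}_2$, effectively $C=0$) plus an average gentle-measurement term. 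You instead observe that after summing out $z^n$ and exploiting the sub-POVM positivity of $\Gamma^{A,(\mu_1)}_0$ and $\Gamma^{B,(\mu_2)}_0$ on the event $E_1$, the $\ell_1$-norm is just a trace; then a one-line inclusion-exclusion, combined with $\sum_{j\geq 0}\Gamma^{B,(\mu_2)}_j = I$, collapses the whole thing to the two marginal escape probabilities $\tr\{\Gamma^{A,(\mu_1)}_0\rho_A^{\tensor n}\}$ and $\tr\{\Gamma^{B,(\mu_2)}_0\rho_B^{\tensor n}\}$, which you bound in expectation (noting $\EE[\gamma^{(\mu_1)}_{u^n}]=\lambda^A_{u^n}/(1+\eta)$ and $\tr\{\Lambda^A_{u^n}\}$ close to $1$) and finish by Markov. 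Your route is more elementary: it needs $E_1$ only for the trace-norm-to-trace step, avoids re-invoking the soft-covering lemma, and arrives at exactly the same rate constraints. Two small points worth tidying: the claim ``$\tr\{\Lambda^A_{u^n}\}\geq 1-\delta'$ for typical $u^n$'' is not a per-sequence bound because of the $\hat{\Pi}^A$ projector (which is defined through the pruned average state $\sigma^{A'}$, not per-$u^n$ typicality); the correct statement is that the weighted average $\sum_{u^n}\lambda^A_{u^n}\tr\{\Lambda^A_{u^n}\}$ is close to $1-\varepsilon$ via the average gentle-measurement lemma, which is all your expectation computation actually uses. Second, your closing remark about needing the operator-Chernoff two-sided sandwich to avoid a common-randomness constraint is somewhat beside the point: your expectation computation, being unconditional, imposes no rate condition at all on the escape probability; the only place $\tilde{R}_1 > I(U;RB)_{\sigma_1}$ and $\tilde{R}_2 > I(V;RA)_{\sigma_2}$ enter is to guarantee $E_1$, and Markov plus a union bound with $E_1$ then closes the argument without the Chernoff sandwich (although the sandwich would give a valid and sharper deterministic bound on the escape probability on that event, so it is an acceptable alternative).
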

\begin{proof}
The proof is provided in Appendix \ref{appx:proof of S_234}.
\end{proof}
\begin{remark}
The terms corresponding to the operators that complete the sub-POVMs $M_A^{(n,\mu_1)}$ and $M_B^{(n,\mu_2)}$, i.e., $ I - \sum_{u^n \in \TDeltaN(U)}A_{u^n}^{(\mu_1)}$ and $ I - \sum_{v^n \in \TDeltaN(V)}B_{v^n}^{(\mu_2)}$ are taken care in $\widetilde{T}$. The expression $T$ excludes the completing operators. Therefore, we use $A_{u^n}^{(\mu_1)}$  and $B_{v^n}^{(\mu_2)} $ to denote the operators corresponding to $u^n \in \TDeltaN(U)$ and $v^n \in \TDeltaN(V)$, respectively.
\end{remark}

\noindent{\bf Step 2: Isolating the effect of error induced by binning}\\\
Recall the definition of $e^{(\mu)}(u^n,v^n)$ as   $e^{(\mu)}(u^n,v^n)\deq F^{(\mu)}(i,j)$, for each $(u^n, v^n) \in \mathcal{B}^{(\mu_1)}_1(i)\times \mathcal{B}^{(\mu_2)}_2(j)$ and $(u^n,v^n)\in \mathcal{C}^{(\mu)}$. For any $(u^n, v^n) \notin \mathcal{C}^{(\mu)}$ let $e^{(\mu)}(u^n,v^n)=(u_0^n, v_0^n)$.
This simplifies $ T $ as
\begin{align}
T = & \frac{1}{N_1N_2}\sum_{\mu_1,\mu_2}\sum_{\substack{i>0,\\j>0}} \sqrt{\rho_{AB}^{\tensor n}}\left (\sum_{u^n \in B^{(\mu_1)}_{1}(i)}A_{u^n}^{(\mu_1)}\tensor \sum_{v^n \in B^{(\mu_2)}_{2}(j)}B_{v^n}^{(\mu_2)}\right )\sqrt{\rho_{AB}^{\tensor n}} P^n_{Z|U,V}(z^n|F^{(\mu_1,\mu_2)}(i,j)) \nonumber \\
= & \frac{1}{N_1N_2}\sum_{\mu_1,\mu_2}\sum_{u^{n},v^{n}} \sqrt{\rho_{AB}^{\tensor n}}\left (A_{u^n}^{(\mu_1)}\tensor B_{v^n}^{(\mu_2)}\right )\sqrt{\rho_{AB}^{\tensor n}}\sum_{\substack{i>0,\\j>0}}\mathbbm{1}_{\left \{u^n \in B^{(\mu_1)}_{1}(i), v^n \in B^{(\mu_2)}_{2}(j) \right \}} P^n_{Z|U,V}(z^n|e^{(\mu)}(u^n,v^n)) \nonumber \\
= & \frac{1}{N_1N_2}\sum_{\mu_1,\mu_2}\sum_{u^{n},v^{n}} \sqrt{\rho_{AB}^{\tensor n}}\left (A_{u^n}^{(\mu_1)}\tensor B_{v^n}^{(\mu_2)}\right )\sqrt{\rho_{AB}^{\tensor n}} P^n_{Z|U,V}(z^n|e^{(\mu)}(u^n,v^n)), \nonumber
\end{align}
where we have used the fact that $\sum_{u^n \in B^{(\mu_1)}_{1}(i)}A_{u^n}^{(\mu_1)} = \sum_{u^{n}} A_{u^n}^{(\mu_1)}\mathbbm{1}_{\left \{u^n \in B^{(\mu_1)}_{1}(i)\right \}}  $ and $  \sum_{i>0}\mathbbm{1}_{\left \{u^n \in B^{(\mu_1)}_{1}(i)\right \}} = 1 $ for all $u^n \in \TDeltaN(U)$, and similar holds for the POVM $ \{B_{v^n}^{(\mu_2)}\} $. Note that the $(u^n,v^n)$ that appear in the above summation is confined to $(\TDeltaN(U)\times\TDeltaN(V))$, however for ease of notation, we do not make this explicit. We substitute the above expression into $S$ as in \eqref{eq:def_S} to obtain
\begin{align}
S =  \sum_{z^n} &\left \| \sum_{u^n,v^n}\sqrt{\rho_{AB}^{\tensor n}}\left( \bar{\Lambda}^A_{u^n}\tensor \bar{\Lambda}^B_{v^n} -  \frac{1}{N_1N_2}\sum_{\mu_1,\mu_2}A_{u^n}^{(\mu_1)}\tensor B_{v^n}^{(\mu_2)}\right)\sqrt{\rho_{AB}^{\tensor n}}P^n_{Z|U,V}(z^n|e^{(\mu)}(u^n,v^n))\right \|_1. \nonumber
\end{align}
 We add and subtract an appropriate term within $ S $ and apply triangle inequality to isolate the effect of binning as $ S \leq S_{1}  + S_{2},$ where 
\begin{align}
S_{1} & \deq \sum_{z^n} \left\|\sum_{u^n,v^n}\sqrt{\rho_{AB}^{\tensor n}}  \left (\bar{\Lambda}^A_{u^n}\tensor \bar{\Lambda}^B_{v^n} -\frac{1}{N_1N_2} \sum_{\mu_1,\mu_2} A_{u^n}^{(\mu_1)} \tensor B_{v^n}^{(\mu_2)}\right )\sqrt{\rho_{AB}^{\tensor n}} P^n_{Z|U,V}(z^n|u^n,v^n)\right\|_1  \text{ and  } \nonumber \\
S_{2} & \deq \sum_{z^n}\left \| \frac{1}{N_1N_2} \sum_{\mu_1,\mu_2} \sum_{u^n,v^n} \sqrt{\rho_{AB}^{\tensor n}}\left (A_{u^n}^{(\mu_1)} \tensor B_{v^n}^{(\mu_2)}\right ) \sqrt{\rho_{AB}^{\tensor n}} \left(P^n_{Z|U,V}(z^n|u^n,v^n)- P^n_{Z|U,V}\left (z^n|e^{(\mu)}(u^n,v^n)\right )\right)\right \|_1. \label{eq:binningIsolation}
\end{align}
Note that the term $ S_{1} $ characterizes the error introduced by approximation of the original POVM with  the collection of  approximating sub-POVMs $ M_1^{(n,\mu_1)} $ and $ M_2^{{(n,\mu_2)}} $, and the term $ S_2 $ characterizes the error caused by binning of these approximating sub-POVMs. In this step, we analyze $S_2$ and prove the following proposition. 

\begin{proposition}[Mutual Packing]\label{prop:Lemma for S_12}
There exist functions  $\epsilon_{S_{2}}(\delta) $ and $\delta_{{S}_{2}}(\delta)$, such that for  all sufficiently small $\delta$ and sufficiently large $n$, we have $\PP\left({S}_2 >\epsilon_{{S_2}}(\delta) \right)$ $\leq \delta_{{S}_2}(\delta)$, if $ \tilde{R}_{1} + \tilde{R}_{2} - R_1 -R_2 < I(U;V)_{\sigma_{3}}$,  where $\sigma_3$ is the auxiliary state defined in the theorem and $\epsilon_{{S}_2},\delta_{{S}_2}  \searrow 0$ as $\delta \searrow 0$.
\end{proposition}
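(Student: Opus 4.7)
The plan is to recognize that Proposition \ref{prop:Lemma for S_12} is structurally identical to Proposition \ref{prop:Lemma for S_3} from the deterministic-integration case: the only new feature is that the ``discrepancy'' between the true outcome and the decoded outcome is measured through the conditional stochastic kernel $P^n_{Z|U,V}$ rather than through the Dirac-style projectors $\Phi_{u^n,v^n}$. The whole point, therefore, will be to reduce the problem to the same mutual-packing combinatorial argument already used to prove Proposition \ref{prop:Lemma for S_3}.

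First, I would pull the summation over $z^n$ inside the trace norm in the definition of $S_2$ via the block-operator form of the triangle inequality (the same move used to split $S_1$ into $S_{11}$ and $S_{12}$ in Section \ref{app:nf_faithfulSim}), and then use the scalar inequality
\begin{align*}
\sum_{z^n}\bigl|P^n_{Z|U,V}(z^n|u^n,v^n)-P^n_{Z|U,V}(z^n|e^{(\mu)}(u^n,v^n))\bigr|\;\le\; 2\,\mathbbm{1}_{\{e^{(\mu)}(u^n,v^n)\neq (u^n,v^n)\}},
\end{align*}
which holds since both conditionals are probability measures and the total variation is bounded by $2$. This completely eliminates the $z^n$-dependence and replaces the stochastic processing by a simple indicator of decoding error.

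Second, mirroring the manipulation carried out in Proposition \ref{prop:Lemma for Simplification1}, I would rewrite the remaining operator sum as a scalar quantity, obtaining an upper bound of the form
\begin{align*}
S_2 \;\le\; \frac{2}{N_1 N_2}\sum_{\mu_1,\mu_2}\sum_{u^n\in\TDeltaN(U)}\sum_{v^n\in\TDeltaN(V)} \gamma^{(\mu_1)}_{u^n}\,\zeta^{(\mu_2)}_{v^n}\,\Omega_{u^n,v^n}\,\mathbbm{1}_{\{e^{(\mu)}(u^n,v^n)\neq (u^n,v^n)\}},
\end{align*}
with $\Omega_{u^n,v^n}$ as in Proposition \ref{prop:Lemma for Simplification1}. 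This is (up to the constant factor absorbed by $\tfrac{1}{2}\|\Phi_{u^n,v^n}-\Phi_{e^{(\mu)}(u^n,v^n)}\|_1 = \mathbbm{1}\{\cdot\}$) precisely the random variable analyzed in Proposition \ref{prop:Lemma for S_3}.

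Third, I would invoke the mutual-packing counting argument from the proof of Proposition \ref{prop:Lemma for S_3} verbatim. Averaging over the random codebook $\mathcal{C}^{(\mu)}$ (drawn from the pruned product distribution used in Section \ref{sec:POVM construction}) and over the independent random bin assignments, the conditional expectation of $\mathbbm{1}_{\{e^{(\mu)}(u^n,v^n)\neq (u^n,v^n)\}}$ given that $(u^n,v^n)\in\mathcal{C}^{(\mu)}\cap\TDeltaN(UV)$ is upper bounded by the probability that some other codeword pair $(u'^n,v'^n)\in\mathcal{C}^{(\mu)}$ with $(u'^n,v'^n)\in\TDeltaN(UV)$ falls into the same bin pair, which by a standard packing calculation is at most $2^{n(\tilde{R}_1+\tilde{R}_2 - R_1 - R_2 - I(U;V)_{\sigma_3} + o(1))}$. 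Together with $\mathbb{E}[\gamma^{(\mu_1)}_{u^n}\zeta^{(\mu_2)}_{v^n}\Omega_{u^n,v^n}]$ being appropriately normalized by the typicality bounds on $\rho_A^{\otimes n}$ and $\rho_B^{\otimes n}$, the sum-rate hypothesis $\tilde{R}_1+\tilde{R}_2 - R_1 - R_2 < I(U;V)_{\sigma_3}$ yields an exponentially small upper bound on $\mathbb{E}[S_2]$, and Markov's inequality converts this into the desired high-probability statement.

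The main obstacle, as in the deterministic setting, is accounting for the atypical contributions and for the completing operators $A^{(\mu_1)}_{u_0^n}$ and $B^{(\mu_2)}_{v_0^n}$; however, these have already been isolated in $\widetilde{S}$ in Step 1 of Section \ref{appx:nf_proof of thm dist POVM} and are controlled by Proposition \ref{prop:Lemma for S_234}, so in $S_2$ the sums may be restricted to the typical sets, and the mutual-packing argument goes through with only cosmetic changes from Proposition \ref{prop:Lemma for S_3}.
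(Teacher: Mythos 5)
Your proposal matches the paper's argument: both reduce Proposition~\ref{prop:Lemma for S_12} to the packing estimate of Proposition~\ref{prop:Lemma for S_3} by bounding the total variation $\sum_{z^n}\bigl|P^n_{Z|U,V}(z^n|u^n,v^n)-P^n_{Z|U,V}(z^n|e^{(\mu)}(u^n,v^n))\bigr|$ by $2\,\mathbbm{1}_{\{e^{(\mu)}(u^n,v^n)\neq(u^n,v^n)\}}$, re-expressing the remaining operator mass as $\gamma^{(\mu_1)}_{u^n}\zeta^{(\mu_2)}_{v^n}\Omega_{u^n,v^n}$, and then reusing the codebook/bin expectation bound of \eqref{eq:expecIndics} together with Lemma~\ref{lem:omega lambda} and Markov's inequality. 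One small imprecision: the manipulation actually requires pulling the $(\mu,u^n,v^n)$ sums \emph{out} of the trace norm via the ordinary triangle inequality (not moving the $z^n$ sum inside) so that the conditional kernels appear as scalar coefficients whose $z^n$-absolute sum gives the total variation; your subsequent display shows you intend exactly this.
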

\begin{proof}
The proof is provided in Appendix \ref{appx:proof of S_12}
\end{proof}
\noindent\textbf{Step 3: Isolating the effect of Alice's approximating measurement}\\
\noindent In this step, we separately analyze the effect of approximating measurements  at the two distributed parties in the term $S_1$. For that, we split $S_{1}$ as $ S_{1} \leq Q_1 + Q_2 $, where
\begin{align}
Q_1 &\deq \sum_{z^n} \left\|\sum_{u^n,v^n}\sqrt{\rho_{AB}^{\tensor n}}  \left (\bar{\Lambda}^A_{u^n}\tensor \bar{\Lambda}^B_{v^n} -\frac{1}{N_1} \sum_{\mu_1=1}^{N_1} A_{u^n}^{(\mu_1)} \tensor \bar{\Lambda}^B_{v^n}\right )\sqrt{\rho_{AB}^{\tensor n}} P^n_{Z|U,V}(z^n|u^n,v^n)\right\|_1, \nonumber \\
Q_2 &\deq \sum_{z^n} \left\|\frac{1}{N_1} \sum_{\mu_1=1}^{N_1}\sum_{u^n,v^n}\sqrt{\rho_{AB}^{\tensor n}}  \left ( A_{u^n}^{(\mu_1)} \tensor \bar{\Lambda}^B_{v^n} -\frac{1}{N_2} \sum_{\mu_2=1}^{N_2} A_{u^n}^{(\mu_1)} \tensor B_{v^n}^{(\mu_2)}\right )\sqrt{\rho_{AB}^{\tensor n}} P^n_{Z|U,V}(z^n|u^n,v^n)\right\|_1.\nonumber
\end{align} 
 With this partition, the terms within the trace norm of $ Q_1 $ differ only in the action of Alice's measurement. And similarly, the terms within the norm of $ Q_2 $ differ only in the action of Bob's measurement. Showing that these two terms are small forms a major portion of the achievability proof. 

\noindent{\bf Analysis of $ Q_1$:}  
To show $ Q_1 $ is small,
we compute rate constraints which ensure that an upper bound to $ Q_1 $ can be made to vanish in an expected sense.
Furthermore, this upper bound becomes convenient in obtaining a single-letter characterization for the rate needed to make the term
corresponding to $ Q_2 $ vanish. For this, we define $ J $ as
\begin{align}\label{def:J}
J \deq \sum_{z^n,v^n}\left\|\sum_{u^n}\sqrt{\rho_{AB}^{\tensor n}}  \left (\bar{\Lambda}^A_{u^n}\tensor \bar{\Lambda}^B_{v^n} -\frac{1}{N_1} \sum_{\mu_1=1}^{N_1} 	  A_{u^n}^{(\mu_1)} \tensor  \bar{\Lambda}^B_{v^n}\right )\sqrt{\rho_{AB}^{\tensor n}} P^n_{Z|U,V}(z^n|u^n,v^n)\right\|_1.
\end{align}
By defining $J$ and using triangle inequality for block operators (which holds with equality), we add the sub-system $V$ to $RZ$, resulting in the joint system $RZV$, corresponding to the state $\sigma_3$ as defined in the theorem. Then we approximate the joint system $RZV$ using an approximating sub-POVM $M_A^{(n)}$ producing outputs on the alphabet $\mathcal{U}^n$. To make $J$ small for sufficiently large n, we expect the sum of the rate of the approximating sub-POVM and common randomness, i.e., $\tilde{R}_1 + C_1$, to be larger than $I(U;RZV)_{\sigma_3}$. We seek to prove this in the following.

Note that from triangle inequality, we have $ Q_1 \leq J. $
Further, we add and subtract an appropriate term
within $ J $ and use triangle inequality obtain $ J \leq J_1 + J_2 $, where
\begin{align}
J_1 & \deq  \sum_{z^n,v^n}\left\|\sum_{u^n}\sqrt{\rho_{AB}^{\tensor n}}  \left (\bar{\Lambda}^A_{u^n}\tensor \bar{\Lambda}^B_{v^n} -\frac{1}{N_1}\sum_{\mu_1=1}^{N_1} 	  \frac{\gamma_{u^n}^{(\mu_1)}}{\lambdauA}\bar{\Lambda}^A_{u^n} \tensor  \bar{\Lambda}^B_{v^n}\right )\sqrt{\rho_{AB}^{\tensor n}} P^n_{Z|U,V}(z^n|u^n,v^n)\right\|_1 \mbox{ and } \nonumber \\
J_2 &\deq \sum_{z^n,v^n}\left\|\sum_{u^n}\sqrt{\rho_{AB}^{\tensor n}}  \left (\frac{1}{N_1}\sum_{\mu_1=1}^{N_1} \frac{\gamma_{u^n}^{(\mu_1)}}{\lambdauA}\bar{\Lambda}^A_{u^n}\tensor \bar{\Lambda}^B_{v^n} -\frac{1}{N_1} \sum_{\mu_1=1}^{N_1} 	  A_{u^n}^{(\mu_1)} \tensor  \bar{\Lambda}^B_{v^n}\right )\sqrt{\rho_{AB}^{\tensor n}} P^n_{Z|U,V}(z^n|u^n,v^n)\right\|_1. \nonumber
\end{align}
Now with the intention of employing Lemma \ref{lem:nf_SoftCovering}, we express $J_1$ as
\begin{align}
J_1 & = \left\|\sum_{z^n,u^n,v^n}\lambdaAB\rhohatuAvB \tensor P^n_{Z|U,V}(z^n|u^n,v^n)\ketbra{v^n}\tensor\ketbra{z^n}\right . \nonumber \\ 
& \left.\hspace{18pt}- \frac{(1-\varepsilon)}{(1+\eta)}\frac{1}{2^{n(\tilde{R}_1+C_1)}}\sum_{\mu_1,l}\sum_{z^n,u^n,v^n}\hspace{-10pt}\IndiU1\frac{\lambdaAB}{\lambdauA}\rhohatuAvB \tensor P^n_{Z|U,V}(z^n|u^n,v^n)\ketbra{v^n}\tensor\ketbra{z^n}\right\|_1, \nonumber 
\end{align}
where the equality above is obtained by using the definitions of $ \gamma_{u^n}^{(\mu_1)} $ and $ \rhohatuAvB $, followed by using the triangle inequality for the block diagonal operators, which in fact becomes an equality.

\noindent Let us define $ \mathcal{T}_{u^n} $ as
\begin{align}
\mathcal{T}_{u^n} = \sum_{z^n,v^n}\frac{\lambdaAB}{\lambdauA}\rhohatuAvB \tensor P^n_{Z|U,V}(z^n|u^n,v^n)\ketbra{v^n}\tensor\ketbra{z^n}. \nonumber
\end{align} 
Note that the above definition of $ \mathcal{T}_{u^n} $ contains all the elements in product form, and thus it can be written as $ \mathcal{T}_{u^n} = \bigtensor_{i=1}^{n}\mathcal{T}_{u_i}. $
This simplifies $ J_1 $ as
\begin{align}
J_1 &= \left\|\sum_{u^n}\lambdauA\mathcal{T}_{u^n} -  \frac{(1-\varepsilon)}{(1+\eta)}\frac{1}{2^{n(\tilde{R}_1+C_1)}}\sum_{\mu_1,l} \mathcal{T}_{U^{n,(\mu_1)}(l)} \right\|_1. \nonumber
\end{align}
Now, using Lemma \ref{lem:nf_SoftCovering} we get the following bound.
For any $ \epsilon_{\scriptscriptstyle J_1}, \delta_{\scriptscriptstyle J_1} \in (0,1)$, if  
\begin{align}
\tilde{R}_1+C_1 > S\left (\sum_{u \in \mathcal{U}}\lambda_u^A\mathcal{T}_{u}\right ) + \sum_{u \in \mathcal{U}}\lambda_u^A S(\mathcal{T}_{u}) = I(U;RZV)_{\sigma_{3}}, \label{constraint:nfDist1}
\end{align} 
then $ \PP(J_1 \geq \epsilon_{\scriptscriptstyle J_1} ) \leq \delta_{\scriptscriptstyle J_1} $ for sufficiently large n, where $ \sigma_3 = \sum_{u \in \mathcal{U}}\lambda_u^A\mathcal{T}_{u}\tensor \ketbra{u}$.

Now, we consider the term corresponding to $ J_2 $ and prove that its expectation with respect to the Alice's codebook is small. Recalling $ J_2 $, we get
\begin{align}
J_2  & \leq \frac{1}{N_1}\sum_{\mu_1=1}^{N_1}\sum_{u^n,v^n}\sum_{z^n} P^n_{Z|U,V}(z^n|u^n,v^n) \left\|\sqrt{\rho_{AB}^{\tensor n}}  \left ( \cfrac{\gamma_{u^n}^{(\mu_1)}}{\lambdauA}\bar{\Lambda}^A_{u^n}\tensor \bar{\Lambda}^B_{v^n} -	  A_{u^n}^{(\mu_1)} \tensor  \bar{\Lambda}^B_{v^n}\right )\sqrt{\rho_{AB}^{\tensor n}} \right\|_1, \nonumber \\
& = \frac{1}{N_1}\sum_{\mu_1=1}^{N_1}\sum_{u^n,v^n}\gamma_{u^n}^{(\mu_1)} \left\|\sqrt{\rho_{AB}^{\tensor n}}  \left ( \left ( \cfrac{1}{\lambdauA}\bar{\Lambda}^A_{u^n}-\sqrt{\rho_{A}^{\tensor n}}^{-1}\LambdauA\sqrt{\rho_{A}^{\tensor n}}^{-1}\right )\tensor \bar{\Lambda}^B_{v^n} \right )\sqrt{\rho_{AB}^{\tensor n}} \right\|_1, \nonumber
\end{align}
where the inequality is obtained by using triangle and the next equality follows from the fact that $ \sum_{z^n} P^n_{Z|U,V}(z^n|u^n,v^n) =1  $ for all $ u^n \in \mathcal{U}^n $ and $ v^n \in \mathcal{V}^n $ and using the definition of $ A_{u^n}^{(\mu_1)} $.
By applying expectation of $ J_2  $ over  the Alice's codebook, we get
\begin{align}
 \EE{\left[J_2 \right]} & \leq \frac{1}{(1+\eta)}\!\!\sum_{\substack{u^n \in \TDeltaN(U)}}\!\!\!\lambdauA\sum_{v^n} \left\|\sqrt{\rho_{AB}^{\tensor n}}  \left ( \left ( \cfrac{1}{\lambdauA}\bar{\Lambda}^A_{u^n}-\sqrt{\rho_{A}^{\tensor n}}^{-1}\LambdauA\sqrt{\rho_{A}^{\tensor n}}^{-1}\right )\tensor \bar{\Lambda}^B_{v^n} \right )\sqrt{\rho_{AB}^{\tensor n}} \right\|_1, \nonumber
\end{align}
where we have used the fact that $ \EE{[\gamma_{u^n}^{(\mu_1)}]} = \frac{\lambdauA}{(1+\eta)} $.
To simplify the above equation, we employ Lemma \ref{lem:Separate} from Section \ref{sec:traceDistance_POVM} that completely discards the effect of Bob's measurement. Since $ \sum_{v^n}\bar{\Lambda}^B_{v^n} = I$, from Lemma \ref{lem:Separate} we have for every $ u^n \in \TDeltaN(A) $,
\begin{align}
	& \sum_{v^n}\left\|\sqrt{\rho_{AB}^{\tensor n}}  \left ( \left ( \cfrac{1}{\lambdauA}\bar{\Lambda}^A_{u^n}-\sqrt{\rho_{A}^{\tensor n}}^{-1}\LambdauA\sqrt{\rho_{A}^{\tensor n}}^{-1}\right )\tensor \bar{\Lambda}^B_{v^n} \right )\sqrt{\rho_{AB}^{\tensor n}} \right\|_1 \nonumber \\ 
	& \hspace{50pt} = \left\|\sqrt{\rho_{A}^{\tensor n}}  \left ( \cfrac{1}{\lambdauA}\bar{\Lambda}^A_{u^n}-\sqrt{\rho_{A}^{\tensor n}}^{-1}\LambdauA\sqrt{\rho_{A}^{\tensor n}}^{-1} \right )\sqrt{\rho_{A}^{\tensor n}} \right\|_1.\nonumber
\end{align}
This simplifies $ \EE{\left[J_2 \right]} $ as 
\begin{align}
\EE{\left[J_2 \right]}  & \leq  \frac{1}{(1+\eta)}\!\!\!\sum_{\substack{u^n \in \TDeltaN(U) }}\!\!\lambdauA\left\|\sqrt{\rho_{A}^{\tensor n}}  \left ( \cfrac{1}{\lambdauA}\bar{\Lambda}^A_{u^n}-\sqrt{\rho_{A}^{\tensor n}}^{-1}\LambdauA\sqrt{\rho_{A}^{\tensor n}}^{-1} \right )\sqrt{\rho_{A}^{\tensor n}} \right\|_1 \nonumber \\
& =  \frac{1}{(1+\eta)}\!\!\!\sum_{\substack{u^n \in \TDeltaN(U) }}\!\!\lambdauA\left\| \left (\rhohatuA -\LambdauA \right ) \right\|_1 \leq \cfrac{(1-\varepsilon)}{(1+\eta)} (2\sqrt{\varepsilon'_A} + 2\sqrt{\varepsilon''_A}) = \epsilon_{\scriptscriptstyle J_2}\nonumber,
\end{align}
where the last inequality is obtained by the repeated usage of the average gentle measurement lemma by setting $ \epsilon_{\scriptscriptstyle J_2} = \frac{(1-\varepsilon)}{(1+\eta)} (2\sqrt{\varepsilon'_A} + 2\sqrt{\varepsilon''_A}) $ with  $ \epsilon_{\scriptscriptstyle J_2} \searrow 0 $ as $ \delta \searrow 0 $ and $ \varepsilon'_A = \varepsilon + 2\sqrt{\varepsilon} $ and $ \varepsilon''_A = 2\varepsilon + 2\sqrt{\varepsilon} $ ( see (35) in \cite{wilde_e} for details). 
Since $Q_1 \leq J \leq J_1 + J_2$, hence $J$, and consequently $ Q_1$, can be made arbitrarily small for sufficiently large n, if $\tilde{R}_1+C_1 > I(U;RZV)_{\sigma_3}$. Now we move on to bounding $ Q_2 $.

\noindent{\bf Step 4: Analyzing the effect of Bob's approximating measurement}\\
Step 3 ensured that the sub-system $RZV$ is close to a tensor product state in trace-norm. In this step, we approximate the state corresponding to the sub-system $RZ$ using the approximating POVM $M_B^{(n)}$, producing outputs on the alphabet $\mathcal{V}^n$. We proceed with the following proposition. 
\begin{proposition} \label{prop:Lemma for Q2}
There exist functions  $\epsilon_{Q_{2}}(\delta) $ and $\delta_{{Q}_{2}}(\delta)$, such that for  all sufficiently small $\delta$ and sufficiently large $n$, we have $\PP\left({Q}_2 >\epsilon_{{Q_2}}(\delta) \right)$ $\leq \delta_{{Q}_2}(\delta_3)$, if $ \tilde{R}_2+C_2 >  I(V;RZ)_{\sigma_3}$,  where $\sigma_3$ is the auxiliary state defined in the theorem and $\epsilon_{Q_2},\delta_{{Q}_2}  \searrow 0$ as $\delta \searrow 0$.
\end{proposition}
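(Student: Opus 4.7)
The plan is to mimic the Step 3 analysis of $Q_1$, now applied to Bob's approximating measurement. First, I would apply the block-diagonal triangle inequality on $u^n$ and Alice's common-randomness index $\mu_1$ to bound
\begin{align*}
Q_2 \leq \widetilde{J} \deq \frac{1}{N_1}\sum_{\mu_1=1}^{N_1}\sum_{z^n,u^n}\left\|\sum_{v^n}\sqrt{\rho_{AB}^{\tensor n}}\, A^{(\mu_1)}_{u^n}\tensor\left(\bar{\Lambda}^B_{v^n} - \frac{1}{N_2}\sum_{\mu_2=1}^{N_2} B^{(\mu_2)}_{v^n}\right)\sqrt{\rho_{AB}^{\tensor n}}\, P^n_{Z|U,V}(z^n|u^n,v^n)\right\|_1,
\end{align*}
and then split $\widetilde{J} \leq \widetilde{J}_1 + \widetilde{J}_2$ by inserting the hybrid operator obtained by replacing $B^{(\mu_2)}_{v^n}$ with $(\zeta^{(\mu_2)}_{v^n}/\lambda^B_{v^n})\bar{\Lambda}^B_{v^n}$, in direct analogy with the $J_1 + J_2$ decomposition.

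For $\widetilde{J}_2$, I would use Lemma \ref{lem:Separate} (noting that $\sum_{u^n} A^{(\mu_1)}_{u^n} \leq I$) to strip off the Alice-side block and reduce to Bob only, after which the average gentle measurement lemma yields $\EE[\widetilde{J}_2] \leq \epsilon_{\widetilde{J}_2}$ for some $\epsilon_{\widetilde{J}_2}\searrow 0$ as $\delta\searrow 0$. For $\widetilde{J}_1$, I would absorb the Alice-side operator $A^{(\mu_1)}_{u^n}$ by taking expectation over Alice's codebook (using $\EE[\gamma^{(\mu_1)}_{u^n}] = \lambda^A_{u^n}/(1+\eta)$), producing a tensor-product ensemble over Bob's outputs
\begin{align*}
\mathcal{T}'_{v} \deq \sum_{z, u} \frac{\lambda^{AB}_{u v}}{\lambda^B_{v}}\,\omega_{u v}\tensor P_{Z|U,V}(z|u, v)\ketbra{z},
\end{align*}
where $\omega_{uv}$ denotes the (reference) post-measurement state $\tr_{AB}\{(I_R\tensor \bar{\Lambda}^A_u\tensor \bar{\Lambda}^B_v)\Psi^{\rho_{AB}}_{RAB}\}/\lambda^{AB}_{uv}$. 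Since $\mathcal{T}'_{v^n}=\bigotimes_i \mathcal{T}'_{v_i}$ and a direct computation shows $\chi(\{\lambda^B_v, \mathcal{T}'_v\})=I(V;RZ)_{\sigma_3}$, Lemma \ref{lem:nf_SoftCovering} applied to Bob's codebook yields the constraint $\tilde{R}_2 + C_2 > I(V;RZ)_{\sigma_3}$, exactly as claimed.

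The main obstacle is that $A^{(\mu_1)}_{u^n}$ is itself random (depending on Alice's codebook), so the soft-covering argument on Bob's side must be performed jointly with the randomness from Alice's codebook. I would address this by first conditioning on the high-probability event from Lemma \ref{lem:M_XM_Y POVM} under which the approximating sub-POVMs $M_1^{(n,\mu_1)}$ are valid and individually $\epsilon$-faithful (requiring $\tilde{R}_1 > I(U;RB)_{\sigma_1}$), then computing expectations of $\widetilde{J}_1$ and $\widetilde{J}_2$ over Bob's codebook (Lemma \ref{lem:nf_SoftCovering} for the former and the average gentle-measurement lemma for the latter), and finally invoking Markov's inequality together with a union bound to lift these expected-value estimates to a with-high-probability bound on $Q_2$. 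A subtle point is that the marginalization over $u$ inside $\mathcal{T}'_v$ is precisely what produces the quantum mutual information $I(V;RZ)_{\sigma_3}$ rather than $I(V;RZU)_{\sigma_3}$, reflecting that Alice's side has already been absorbed into the ensemble construction.
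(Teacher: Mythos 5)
There is a genuine gap in your treatment of $\widetilde{J}_1$, and it is precisely the gap that motivates the paper's more elaborate four-term decomposition $Q_2\le Q_{21}+Q_{22}+Q_{23}+Q_{24}$.

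Your $\widetilde{J}_1$ still carries the random, codebook-dependent operator $A^{(\mu_1)}_{u^n}$ on Alice's side, which is \emph{not} a tensor-product operator (it involves the cut-off projectors $\hat\Pi^A\Pi_{\rho_A}\Pi^A_{u^n}$). To invoke Lemma~\ref{lem:nf_SoftCovering} on Bob's codebook you need the ensemble $\{\lambda^B_{v^n},\mathcal{T}'_{v^n}\}$ to be a deterministic, single-letter tensor-product family, and the state $\mathcal{T}'_{v^n}$ built from $A^{(\mu_1)}_{u^n}$ is neither. Your fix — ``absorb the Alice-side operator by taking expectation over Alice's codebook'' — does not go through: the trace norm is convex but not linear, so $\EE_{\mathcal{C}_1}$ cannot be pushed inside $\|\cdot\|_1$ to replace $A^{(\mu_1)}_{u^n}$ by its mean, and even if it could, $\EE_{\mathcal{C}_1}[A^{(\mu_1)}_{u^n}]\propto\sqrt{\rho_A^{\tensor n}}^{-1}\Lambda^A_{u^n}\sqrt{\rho_A^{\tensor n}}^{-1}$ is still non-tensor-product because of $\Lambda^A_{u^n}$. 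Conditioning on the $\tilde R_1>I(U;RB)_{\sigma_1}$ event does not remove this obstruction either: it fixes a particular Alice codebook, and the resulting $\mathcal{T}'_{v^n}$ remains $\mathcal{C}_1$-dependent and non-product.

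The paper avoids this by inserting the \emph{true} tensor-product operator $\bar\Lambda^A_{u^n}$ before inserting Bob's hybrid. Concretely, $Q_2$ is split so that the soft-covering term $Q_{22}$ has Alice's true POVM on the left factor — giving a genuine single-letter tensor-product ensemble over $v$ to which Lemma~\ref{lem:nf_SoftCovering} applies, yielding $\tilde R_2+C_2>I(V;RZ)_{\sigma_3}$. The Alice discrepancy $(\bar\Lambda^A_{u^n}\leftrightarrow A^{(\mu_1)}_{u^n})$ is then confined to $Q_{21}$ and $Q_{23}$, both paired with a Bob-side expression in which the randomness appears only through the \emph{scalar} $\zeta^{(\mu_2)}_{v^n}$; that scalar factors out of the trace norm, so $\EE_{\mathcal{C}_2}[\zeta^{(\mu_2)}_{v^n}]=\lambda^B_{v^n}/(1+\eta)$ can legitimately be taken, and these terms reduce to the already-controlled quantity $J$. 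Your $\widetilde{J}_2$ step (Lemma~\ref{lem:Separate} plus gentle measurement, essentially the paper's $Q_{24}$) is fine; but $\widetilde{J}_1$ lumps together $Q_{21}+Q_{22}+Q_{23}$ and then tries to do the soft cover and the Alice averaging in the wrong order, which is where the argument breaks.

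Your final remark — that marginalizing over $u$ inside $\mathcal{T}'_v$ is what produces $I(V;RZ)_{\sigma_3}$ rather than $I(V;RZU)_{\sigma_3}$ — is a correct and useful observation, and it does match the endpoint of the paper's $Q_{22}$ bound. You just need to reach that ensemble through the hybrid insertion with $\bar\Lambda^A_{u^n}$ rather than by averaging $A^{(\mu_1)}_{u^n}$ inside the norm.
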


\begin{proof}
 The proof is provided in Appendix \ref{appx:proof of Q2}.
\end{proof}

 \subsection{Rate Constraints}
 To sum-up, we showed that the trace distance inequality in \eqref{eq:nf_actual trace dist} holds for sufficiently large $n$ and with probability sufficiently close to 1, if the following bounds hold:
 \begin{subequations}\label{eq:nf_rate-region 1}
 	\begin{align}
 	\tilde{R}_1& \geq I(U; RB)_{\sigma_1},\\
 	\tilde{R}_2& \geq I(V; RA)_{\sigma_2},\\
	\tilde{R}_1 + C_1 & \geq I(U; RZV)_{\sigma_3},\\
	\tilde{R}_2 + C_2 & \geq I(V; RZ)_{\sigma_3},\\
 	(\tilde{R}_1-R_1)+(\tilde{R}_2-R_2) & < I(U;V)_{\sigma_3},\\
 	\tilde{R}_1\geq R_1\geq 0, \quad \tilde{R}_2&\geq R_2 \geq 0, \\ C_1 + C_2 &\leq C,\quad C\geq 0,
 	\end{align}
 \end{subequations}
 where $C\deq \frac{1}{n}\log_2 N$. 
Therefore, there exists a distributed protocol with parameters $(n, 2^{nR_1}, 2^{nR_2}, 2^{nC})$ such that its overall POVM $\hat{M}_{AB}$ is $\epsilon$-faithful to $M_{AB}^{\tensor n}$ with respect to $\rho_{AB}^{\tensor n}$. 
 Lastly,  we complete the proof of the theorem using the following lemma.
 \begin{lem}
 	Let $\mathcal{R}_1$ denote the set of all $(R_1,R_2,C)$ for which there exists $(\tilde{R}_1,\tilde{R}_2)$ such that the septuple $(R_1,R_2, C, \tilde{R}_1, \tilde{R}_2, \tilde{C}_1, \tilde{C}_2)$ satisfies the inequalities in \eqref{eq:nf_rate-region 1}. Let, $\mathcal{R}_2$ denote the set of all triples $(R_1, R_2, C)$ that satisfies  the inequalities in \eqref{eq:nf_dist POVm appx rates} given in the statement of the theorem. Then, $
 	\mathcal{R}_1=\mathcal{R}_2$.
 \end{lem}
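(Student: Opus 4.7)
The statement is a purely polyhedral claim, so the plan is to carry out Fourier--Motzkin (FM) elimination on the linear system \eqref{eq:nf_rate-region 1}, viewed as a polytope in the seven variables $(R_1,R_2,C,\tilde R_1,\tilde R_2,C_1,C_2)$, and show that its projection onto the $(R_1,R_2,C)$-subspace is exactly the region carved out by \eqref{eq:nf_dist POVm appx rates}. Because FM produces the projection exactly, both inclusions $\mathcal R_1\subseteq\mathcal R_2$ and $\mathcal R_2\subseteq\mathcal R_1$ follow simultaneously; there is no need to give a separate constructive argument for $\mathcal R_2\subseteq\mathcal R_1$.

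The plan is to eliminate in the order $C_1,C_2,\tilde R_1,\tilde R_2$. First, for $C_1$ I pair $C_1\geq 0$ and $C_1\geq I(U;RZV)-\tilde R_1$ against $C_1\leq C-C_2$; analogously for $C_2$. The $C_1,C_2$ elimination replaces the three inequalities involving common randomness by the three inequalities $\tilde R_1+C\geq I(U;RZV)$, $\tilde R_2+C\geq I(V;RZ)$, and $\tilde R_1+\tilde R_2+C\geq I(U;RZV)+I(V;RZ)$. I then eliminate $\tilde R_1$ by pairing its four lower bounds ($R_1$, $I(U;RB)$, $I(U;RZV)-C$, and $I(U;RZV)+I(V;RZ)-\tilde R_2-C$) against the single upper bound $R_1+R_2+I(U;V)-\tilde R_2$ coming from the binning slack. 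Finally, eliminate $\tilde R_2$ the same way by pairing its three lower bounds ($R_2$, $I(V;RA)$, $I(V;RZ)-C$) against the three upper bounds produced in the previous step.

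The key algebraic step is the identity
\begin{equation*}
I(U;RZV)_{\sigma_3}+I(V;RZ)_{\sigma_3}-I(U;V)_{\sigma_3}=I(UV;RZ)_{\sigma_3},
\end{equation*}
which I would verify by chain rule: $I(U;RZV)=I(U;V)+I(U;RZ\mid V)$ and $I(UV;RZ)=I(V;RZ)+I(U;RZ\mid V)$. Applying this identity collapses the sum-rate--common-randomness bound produced by FM (pairing the $\tilde R_2$-lower bound $I(V;RZ)-C$ against the $\tilde R_1$-upper bound derived from $\tilde R_1+\tilde R_2+C\geq I(U;RZV)+I(V;RZ)$) into exactly the stated constraint $R_1+R_2+C\geq I(UV;RZ)$. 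The remaining pairings reproduce constraints (1)--(5) of \eqref{eq:nf_dist POVm appx rates} almost mechanically.

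The main obstacle, and the only delicate part, is to argue that the other FM pairings do not produce extra constraints beyond the six listed ones. In particular, the enumeration naively yields two additional candidates of the form $R_1+R_2+C\geq I(V;RA)+I(U;RZV)-I(U;V)$ and $R_1+R_2+C\geq I(U;RB)+I(V;RZ)-I(U;V)$, together with a weaker version of the sum-rate constraint involving $2C$ that is immediately dominated by stated constraint~6 together with $C\geq 0$. I would discharge the two nontrivial candidates by taking the maximum of stated constraints 3 and 6 (using $C\geq 0$ and $R_i\geq 0$ to promote them to constraints on $R_1+R_2+C$): the first candidate is dominated by stated constraint 3 when $I(U;RZV)\leq I(U;RB)$ and by stated constraint 6 (via the identity above) when $I(V;RZ)\geq I(V;RA)$, and at least one of these always holds because $\sigma_3^{RVZ}$ is obtained from $\sigma_2^{RAV}$, and $\sigma_3^{RUZ}$ from $\sigma_1^{RUB}$, by data-processing maps induced by $\bar M_A$, $\bar M_B$, and $P_{Z\mid UV}$. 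A symmetric argument handles the other candidate. Once these redundancies are checked, the FM output coincides with \eqref{eq:nf_dist POVm appx rates} and the lemma follows.
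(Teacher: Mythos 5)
Your Fourier--Motzkin elimination is carried out correctly and makes explicit exactly what the paper's one-line proof glosses over: after eliminating $C_1,C_2$ (using the implicit $C_1,C_2\geq 0$) and then $\tilde R_1,\tilde R_2$, the projection onto $(R_1,R_2,C)$ consists of the six stated inequalities together with the three extras you name, of which $R_1+R_2+2C\geq I(UV;RZ)_{\sigma_3}$ is indeed absorbed by constraint~(6) and $C\geq 0$. The chain-rule identity $I(U;RZV)_{\sigma_3}+I(V;RZ)_{\sigma_3}-I(U;V)_{\sigma_3}=I(UV;RZ)_{\sigma_3}$ is correct and essential, and you correctly flag the redundancy of the two remaining extras as the delicate point.

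The gap is in your discharge of those two extras. The assertion that $I(U;RZV)_{\sigma_3}\leq I(U;RB)_{\sigma_1}$ or $I(V;RZ)_{\sigma_3}\geq I(V;RA)_{\sigma_2}$ always holds ``by data processing'' does not follow: the maps $\sigma_2^{RAV}\mapsto\sigma_3^{RVZ}$ and $\sigma_1^{RUB}\mapsto\sigma_3^{RUZ}$ act jointly on $(A,V)$ and $(U,B)$ (one first measures a subsystem, then applies $P_{Z\mid UV}$, which reads both classical registers), so the data-processing inequality does not compare the two sides of either disjunct. Concretely, take $P_{Z\mid UV}$ deterministic with $Z=U$, a legitimate separable decomposition with stochastic integration. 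Then $I(U;RZV)_{\sigma_3}=S(U)_{\sigma_3}$, which strictly exceeds $I(U;RB)_{\sigma_1}$ whenever the Holevo information of the canonical ensemble $\{\lambda^A_u,\hat\rho^A_u\}$ lies strictly below $S(U)$, while $I(V;RZ)_{\sigma_3}=I(V;RU)_{\sigma_3}\leq I(V;RA)_{\sigma_2}$ by the genuine data-processing bound for the local channel $A\to U$, with strict inequality when measuring $A$ loses information about $V$. Both strict inequalities can occur simultaneously, so the disjunction fails, and the extra constraint $R_1+R_2+C\geq I(V;RA)_{\sigma_2}+I(U;RZV)_{\sigma_3}-I(U;V)_{\sigma_3}$ is then strictly tighter than the pointwise maximum of constraints~(3)$+\{C\geq 0\}$ and~(6); I also do not see it following from any other nonnegative combination of (1)--(6) once $I(U;V)_{\sigma_3}>0$. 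Note the contrast with the deterministic Lemma (for Theorem~\ref{thm: dist POVM appx}), where the analogous extras really are redundant because there $d'=S(V)_{\sigma_3}\geq I(V;RA)_{\sigma_2}$ is simply the Holevo bound; it is precisely this inequality whose stochastic analogue $I(V;RZ)_{\sigma_3}\geq I(V;RA)_{\sigma_2}$ fails. So the redundancy you need is not established, and either an additional structural argument (which neither your proposal nor the paper supplies) is required, or the stated region must carry the two extra sum-rate constraints.
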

 \begin{proof}
 	This follows from Fourier-Motzkin elimination \cite{fourier-motzkin}.
 	\end{proof}

	\section{Conclusion}\label{sec:conclusion}
	We have developed a distributed measurement compression protocol where we introduced the technique of mutual covering and random binning of distributed measurements. Using these techniques, a set of communication rate-pairs and common randomness rate is characterized for faithful simulation of distributed measurements. We further developed an approach for a distributed quantum-to-classical rate-distortion theory, and provided single-letter inner and outer bounds. As a part of future work, we intend to improve the outer bound by providing a dimensionality bound on the auxiliary Hilbert space involved in the expression. Further, we also desire to improve the achievable rate region by using structured POVMs based on algebraic codes. 
	\newline
	\newline
	\noindent \textbf{{\large Acknowledgement: }} We thank Mark Wilde for his valuable inputs on techniques needed to prove Theorem \ref{thm:nf_faithfulSim} and for referring us to the additional work performed in \cite{berta2014identifying} and \cite{martens1990nonideal}. We are also grateful to Arun Padakandla for his inputs on the classical analogue of the current work \cite{atif2020source}, which was very helpful in developing the proof techniques here.

\appendices

\section{Proof of Lemmas}
\addtocontents{toc}{\protect\setcounter{tocdepth}{1}}
\subsection{Proof of Lemma \ref{lem:Separate}} \label{appx:proofLemmaSeparate}
 Consider the LHS of \eqref{eq:lemSeparate1}. We define an operator $\Lambda_{y_0}$ which completes the sub-POVM $\{\Lambda_y\}_{y \in \mathcal{Y}}$ as $\Lambda_{y_0} \deq I - \sum_{y \in \mathcal{Y}}\Lambda_y$. Further, let the set $\mathcal{Y}^+ \deq \mathcal{Y}\medcup\{y_0\}$. Since trace norm is invariant to transposition with respect to $\rho_{AB}$, we can write for any $ y \in \mathcal{Y}^+ $,
\begin{align}
\left\| \sqrt{\rho_{AB}}\left (\Gamma^A\tensor \Lambda_y^B\right )\sqrt{\rho_{AB}}\right \|_1
=\left\| \left [\sqrt{\rho_{AB}}\left (\Gamma^A\tensor \Lambda_y^B \right )\sqrt{\rho_{AB}}\right]^T\right \|_1
=\left\| \sqrt{\rho_{AB}}\left ({(\Gamma^A)}^T\tensor {(\Lambda_y^B)}^T\right )\sqrt{\rho_{AB}}\right \|_1. \nonumber
\end{align}
One can easily prove for any $\Gamma_A$ (not necessarily positive) that
\begin{align}
\left(\sqrt{\rho_{AB}}\left ({(\Gamma^A)}^T\tensor {(\Lambda_y^B)}^T\right )\sqrt{\rho_{AB}}\right)^R = \Tr_{AB}\left \{ \left (\text{id}\tensor\Gamma^A\tensor \Lambda_y^B \right )\Psi_{RAB} \right \}, \label{eq:traceTranspose}
\end{align}
where $ \Psi_{RAB}$ is the canonical purification of $\rho_{AB} $ defined as $ \Psi_{RAB} \deq \sum_{x,x'}\sqrt{\lambda_x \lambda_{x'}}\ketbra{x}_{AB}\tensor\ketbra{x'}_R$ for the spectral decomposition of $\rho_{AB}$ given as $\rho_{AB} = \sum_{x}\lambda_x\ketbra{x}_{AB}$.  
Now, using \eqref{eq:traceTranspose} we perform the following simplification
\begin{align}
\sum_{y \in \mathcal{Y}}\left\| \sqrt{\rho_{AB}}\left (\Gamma^A\tensor \Lambda_y^B\right )\sqrt{\rho_{AB}}\right \|_1 & \leq \sum_{y \in \mathcal{Y}^+}\left\| \sqrt{\rho_{AB}}\left (\Gamma^A\tensor \Lambda_y^B\right )\sqrt{\rho_{AB}}\right \|_1 \nonumber \\ & = \sum_{y \in \mathcal{Y}^+} \Big \|\Tr_{AB}\left \{ \left (\text{id}_R\tensor\Gamma^A\tensor \Lambda_y^B \right )\Psi_{RAB} \right \}\Big \|_1  \nonumber \\
& = \Big \|\sum_{y \in \mathcal{Y}^+}\Tr_{AB}\left \{ \left (\text{id}_{RB}\tensor\Gamma^A \right )\left (\text{id}_{RA}\tensor \Lambda_y^B \right )\Psi_{RAB} \tensor \ketbra{y}\right \}\Big \|_1 \nonumber \\
& = \left \|\Tr_{A}\Big \{ \left (\text{id}_{RY}\tensor\Gamma^A \right )\left(\sum_{y \in \mathcal{Y}^+}\ketbra{y}\tensor \Tr_B\left \{\left (\text{id}_{RA}\tensor \Lambda_y^B \right )\Psi_{RAB}\right \}\right)  \Big \}\right \|_1 \nonumber \\ 
& = \left \|\Tr_{A}\left \{ \left (\text{id}_{RY}\tensor\Gamma^A \right )\sigma_{RAY}\right \}\right \|_1 \nonumber \\
& = \left \|\Tr_{AZ}\left \{ \left (\text{id}_{RY}\tensor\Gamma^A \tensor \text{id}_Z \right )\Phi_{RAYZ}^{\sigma_{RAY}}\right \}\right \|_1, \label{eq:traceequality1}
\end{align} 
 where the second equality uses the triangle inequality for block diagonal operators,  the third equality first uses the property that $\tr_{XY}\{\} = \tr_{X}\{\tr_Y\{\}\}$, followed by the definition of partial trace and its linearity, the fourth equality uses $ \sigma_{RAY} $  defined as
\begin{align}
\sigma_{RAY} = \sum_{y \in \mathcal{Y}^+}\ketbra{y}\tensor \Tr_B\left \{\left (\text{id}_{RA}\tensor \Lambda_y^B \right )\Psi_{RAB}\right \}, \nonumber
\end{align}
and finally, the last one uses$ \Phi_{RAYZ}^{\sigma_{RAY}}$ defined as the  canonical purification of $\sigma_{RY}.$
Note that the above inequality becomes an equality when $\sum_{y\in \mathcal{Y}}\Lambda_y = I$.
Using similar sequence of arguments as used in \eqref{eq:traceTranspose}, we have 
\begin{align}
\left \|\Tr_{AZ}\left \{ \left (\text{id}_{RY}\tensor\Gamma^A \tensor \text{id}_Z \right )\Phi_{RAYZ}^{\sigma_{RAY}}\right \}\right \|_1 
= \left \|\sqrt{\tr_{RYZ}\left\{\Phi_{RAYZ}^{\sigma_{RAY}}\right\}}\Gamma^A \sqrt{\tr_{RYZ}\left\{\Phi_{RAYZ}^{\sigma_{RAY}}\right\}} \right \|_1 = \|\sqrt{\rho_A}\Lambda_x^A \sqrt{\rho_A} \|_1. \nonumber 
\end{align}
This completes the proof.
\subsection{Proof of Lemma \ref{lem:mutual covering}}\label{appx:proof of mutual covering}
Let the operators of $\hat{M}_X$ and $\hat{M}_Y$ be denoted by $\{\hat{\Lambda}^X_i\}_{i\in \mathcal{I}}$ and $\{\hat{\Lambda}^Y_j\}_{j\in \mathcal{J}}$, respectively, and let the operators of $M_{X}$ and $M_Y$ be denoted by $\{\Lambda_i^X\}$ and $\{\Lambda_j^Y\}$, respectively, for some finite sets $\mathcal{I}$ and $\mathcal{J}$. With this notation, we need to show the following inequality 
\begin{align*}
  G \deq \sum_{i,j} \Big\|\sqrt{\rho_{XY}} ({\Lambda}_{i}^X\tensor{\Lambda}_{j}^Y-\hat{\Lambda}_{i}^X\tensor\hat{\Lambda}_j^Y) \sqrt{\rho_{XY}}\Big\|_1 + \text{Tr}\bigg\{\Big(I-\sum_{i,j} \hat{\Lambda}_i^X\tensor \hat{\Lambda}^Y_j\Big) \rho_{XY} \bigg\} \leq 2\epsilon,  
\end{align*}
where $\Psi^\rho_{RXY}$ is a purification of $\rho_{XY}$. Next, by adding and subtracting appropriate terms, we get
\begin{align}
 G & \leq  \sum_{i,j} \Big\|\sqrt{\rho_{XY}} (\Lambda_{i}^X\tensor\Lambda_{j}^Y-\hat{\Lambda}_{i}^X\tensor\Lambda_j^Y) \sqrt{\rho_{XY}}\Big\|_1+   \Tr\bigg\{\Big(I-\sum_{i} \hat{\Lambda}_i^X\Big) \rho_{X} \bigg\} \nonumber\\ 
& \hspace{20pt}+ \sum_{i,j} \Big\|\sqrt{\rho_{XY}} (\hat{\Lambda}_{i}^X\tensor\Lambda_{j}^Y-\hat{\Lambda}_{i}^X\tensor\hat{\Lambda}_j^Y) \sqrt{\rho_{XY}}\Big\|_1 + \Tr\bigg\{\Big(I-\sum_{j} \hat{\Lambda}_j^Y\Big) \rho_{Y} \bigg\} \nonumber \\
&\hspace{20pt}+\text{Tr}\bigg\{\Big(I-\sum_{i,j} \hat{\Lambda}_i^X\tensor \hat{\Lambda}^Y_j\Big) \rho_{XY} \bigg\}-\text{Tr}\bigg\{\Big(I-\sum_{i} \hat{\Lambda}_i^X\Big) \rho_{X}\bigg\} - \Tr\bigg\{\Big(I-\sum_{j} \hat{\Lambda}_j^X\Big) \rho_{Y} \bigg\} \nonumber \\
& \leq \sum_{i} \Big\|\sqrt{\rho_{X}} (\Lambda_{i}^X-\hat{\Lambda}_{i}^X) \sqrt{\rho_{X}}\Big\|_1+   \Tr\bigg\{\Big(I-\sum_{i} \hat{\Lambda}_i^X\Big) \rho_{X} \bigg\} \nonumber\\ 
& \hspace{20pt}+ \sum_{j} \Big\|\sqrt{\rho_{Y}} (\Lambda_{j}^Y-\hat{\Lambda}_j^Y) \sqrt{\rho_{Y}}\Big\|_1 + \Tr\bigg\{\Big(I-\sum_{j} \hat{\Lambda}_j^Y\Big) \rho_{Y} \bigg\} \nonumber \\
&\hspace{20pt} +\text{Tr}\bigg\{\Big(I-\sum_{i,j} \hat{\Lambda}_i^X\tensor \hat{\Lambda}^Y_j\Big) \rho_{XY} \bigg\}-\text{Tr}\bigg\{\Big(I-\sum_{i} \hat{\Lambda}_i^X\Big) \rho_{X}\bigg\} - \Tr\bigg\{\Big(I-\sum_{j} \hat{\Lambda}_j^Y\Big) \rho_{Y} \bigg\} \nonumber \\
& \leq 2\epsilon + \text{Tr}\bigg\{\Big(\sum_{i} \hat{\Lambda}_i^X\tensor (I-\sum_{j}\hat{\Lambda}^Y_j)\Big) \rho_{XY} \bigg\} - \Tr\bigg\{\Big(I-\sum_{j} \hat{\Lambda}_j^Y\Big) \rho_{Y} \bigg\} \leq 2\epsilon, \nonumber
\end{align}
where the second inequality follows by applying Lemma \ref{lem:Separate} twice, the third inequality follows from the hypotheses of the lemma, and the final inequality uses the fact that $\hat{M}^X$ and $\hat{M}^Y$ are sub-POVMs. This completes the proof of the lemma.

\subsection{Proof of Lemma \ref{lem:nf_SoftCovering}} \label{lem:nf_SoftCovering_Proof}
\begin{proof}
Consider the trace norm expression given in \eqref{eq:nf_SoftCovering_Term}. This expression can be upper bounded using triangle inequality as 
\begin{align}
         & \left\|\sum_{w^n}\lambda_{w^n}\mathcal{T}_{w^n} - \cfrac{1}{2^{n(R+C)}}\cfrac{(1-\varepsilon)}{(1+\eta)}\sum_{l,\mu} \mathcal{T}_{W^{n,(\mu)}(l)}\right\|_1 \nonumber \\ & \leq \Bigg \|\sum_{w^n}\lambda_{w^n}\mathcal{T}_{w^n} - \cfrac{(1-\varepsilon)}{(1+\eta)}\!\!\!\sum_{\substack{w^n\in \\\TDeltaN(W)}}\hspace{-5pt}\tilde{P}_{W^n}(w^n)\mathcal{T}_{w^n}\Bigg \|_1 \hspace{-5pt}+  \cfrac{(1-\varepsilon)}{(1+\eta)}\Bigg\| \sum_{\substack{w^n\in \\\TDeltaN(W)}}\hspace{-10pt}\tilde{P}_{W^n}(w^n)\mathcal{T}_{w^n} - \cfrac{1}{2^{n(R+C)}}\sum_{l,\mu} \mathcal{T}_{W^{n,(\mu)}(l)}\Bigg \|_1. \label{eq:nf_SoftCovering_TermSplit}
    \end{align}
\noindent The first term in the right-hand side  is bounded from above as 
\begin{align}
     \Big\|\sum_{w^n}\lambda_{w^n}\mathcal{T}_{w^n}  - \cfrac{(1-\varepsilon)}{(1+\eta)}\sum_{w^n\in \TDeltaN(W)}\tilde{P}_{W^n}(w^n)\mathcal{T}_{w^n}\Big\|_1  & \leq \Big\|\hspace{-10pt}\sum_{w^n\in \TDeltaN(W)}\hspace{-10pt}\lambda_{w^n}\Bigg(1-\cfrac{1}{(1+\eta)}\Bigg)\mathcal{T}_{w^n} \Big\|_1 + \Big\| \hspace{-10pt}\sum_{w^n \notin \TDeltaN(W)}\hspace{-10pt}\lambda_{w^n}\mathcal{T}_{w^n}\Big\|_1 \nonumber\\
    & \leq \Bigg(\cfrac{\eta}{1+\eta}\Bigg)\sum_{w^n\in \TDeltaN(W)}\hspace{-10pt}\lambda_{w^n}\underbrace{\left \|\mathcal{T}_{w^n}\right \|_1}_{=1} + \sum_{w^n \notin \TDeltaN(W)}\hspace{-10pt}{\lambda_{w^n}}\underbrace{\left \|\mathcal{T}_{w^n}\right \|}_{=1}\nonumber \\ 
    & \leq \Bigg(\cfrac{\eta}{1+\eta}\Bigg) + {\varepsilon} \leq \eta + \varepsilon = \varepsilon_1, \label{eq:ensemble2}
\end{align}
where $ \varepsilon_1 $ can be made arbitrarily small for all sufficiently large $n$.
Now consider the second term in (\ref{eq:nf_SoftCovering_TermSplit}). Using covering lemma from \cite{Wilde_book}, this can be bounded as follows.
For $w^n \in \TDeltaN(W)$, let $\Pi$ and $\Pi_{w^n}$ denote the projectors onto the typical subspace of $\mathcal{T}^{\tensor n}$ and $\mathcal{T}_{w^n}$, respectively, where $ \mathcal{T} = \sum_{w^n}\lambda_{w^n}\mathcal{T}_{w^n}. $
From the definition of typical projectors, for any $ \epsilon_1 \in (0,1) $ we have for sufficiently large $ n $, the following inequalities satisfied for all $w^n \in \TDeltaN(W):$
\begin{align}
    \Tr{\Pi\mathcal{T}_{w^n}} &\geq 1-\epsilon_1, \nonumber \\
    \Tr{\Pi_{w^n}\mathcal{T}_{w^n}} &\geq 1-\epsilon_1, \nonumber \\
    \Tr{\Pi} &\leq D, \nonumber \\
    \Pi_{w^n}\mathcal{T}_{w^n}\Pi_{w^n} & \leq \frac{1}{d}\Pi_{w^n},
    \label{eq:coveringHyp}
\end{align}
where $D = 2^{n(S(\mathcal{T})+\delta_1)}$ and $ \displaystyle d = 2^{n\left[\left(\sum_{w}\lambda_w S(\mathcal{T}_{w})\right)+\delta_2\right]}$, and $ \delta_1 \searrow 0, \delta_2 \searrow 0 $ as $ \epsilon_{1} \searrow 0 $.
From the statement of the covering lemma, we know that for an ensemble $\{\tilde{P}_{W^n}(w^n), \mathcal{T}_{w^n}\}_{w^n \in \mathcal{W}^n}$, if there exists projectors $\Pi$ and $\Pi_{w^n}$ such that they satisfy the set of inequalities in (\ref{eq:coveringHyp}), then for any $\epsilon_2 > 0$,  sufficiently large $n$ and $n(R+C) > \log_2{\frac{D}{d}}$, the obfuscation error, defined as 
\begin{align*}
    \Bigg \|\sum_{w^n}\tilde{P}^n_W(w^n)\mathcal{T}_{w^n} - \cfrac{1}{2^{R+C}}\sum_{l,\mu}\mathcal{T}_{W^{n,(\mu)}(l)}\Bigg \|_1,
\end{align*}
can be made smaller than $\varepsilon_2$ with high probability. 
This gives us the the following rate constraints $R + C \geq S({\sum_{w}\lambda_w\mathcal{T}_{w}}) - \sum_{w}\lambda_{w} S(\mathcal{T}_{w}) + \delta' = \chi\left(\{\lambda_w\},\{\hat{\rho}_{w}\tensor\sigma_{w}\}\right) + \delta'$. Using this constraint and the bound from \eqref{eq:ensemble2}, the result follows. 
\end{proof}

\section{Proof of Propositions}
\addtocontents{toc}{\protect\setcounter{tocdepth}{1}}
\subsection{Proof of Proposition \ref{prop:Lemma for Simplification1}}\label{appx:proof for Simplification1}
The second term in the trace distance in $S_2$ can be expressed as
\begin{align*}
(\text{id}\tensor &\tilde{M}_{AB}^{(n)} ) (\Psi^\rho_{R^nA^nB^n}) =\frac{1}{N_1N_2}\sum_{\mu_1,\mu_2}\sum_{i,j} \Phi_{F^{(\mu)}(i,j)}\tensor \tr_{AB} \bigg \{ (\id\tensor \Gamma^{A, ( \mu_1)}_i\tensor \Gamma^{B, ( \mu_2)}_j ) \Psi^\rho_{R^nA^nB^n}   \bigg \}\\
&= \frac{1}{N_1N_2}\sum_{\mu_1,\mu_2}\sum_{i,j\geq 1} \sum_{(u^n, v^n)\in {\mathcal{B}}^{(\mu_1)}_1(i) \times {\mathcal{B}}^{(\mu_2)}_2(j) } \Phi_{e^{(\mu)}(u^n,v^n)} \tensor \tr_{AB} \bigg \{ (\id\tensor A^{(\mu_1)}_{u^n} \tensor B^{(\mu_2)}_{v^n}) \Psi^\rho_{R^nA^nB^n}    \bigg \}\\
&+\frac{1}{N_1N_2} \sum_{\mu_1,\mu_2} \sum_{\substack{j\geq 1}} \sum_{ v^n\in {\mathcal{B}}^{(\mu_2)}_2(j) }  \Phi_{(u_0^n,v_0^n)} \tensor \tr_{AB} \bigg \{ (\id\tensor (I-\sum_{u^n\in \TDeltan(U)} A^{(\mu_1)}_{u^n}) \tensor B^{(\mu_2)}_{v^n}) \Psi^\rho_{R^nA^nB^n}    \bigg \}\\
&+\frac{1}{N_1N_2} \sum_{\mu_1,\mu_2} \sum_{\substack{i\geq 1}} \sum_{ u^n\in {\mathcal{B}}^{(\mu_1)}_1(i) }  \Phi_{(u_0^n,v_0^n)} \tensor \tr_{AB} \bigg \{ (\id\tensor A^{(\mu_1)}_{u^n} \tensor (I-\sum_{v^n\in\TDeltan(V)} B^{(\mu_2)}_{v^n})) \Psi^\rho_{R^nA^nB^n}    \bigg \}\\\numberthis \label{eq:A_term}
&+\frac{1}{N_1N_2} \sum_{\mu_1,\mu_2}  \Phi_{(u_0^n,v_0^n)} \tensor \tr_{AB}\bigg \{ (\id\tensor (I-\sum_{u^n\in\TDeltan(U)} A^{(\mu_1)}_{u^n}) \tensor (I-\sum_{v^n\in\TDeltan(V)} B^{(\mu_2)}_{v^n})) \Psi^\rho_{R^nA^nB^n}   \bigg \}. 
\end{align*}
Similarly, for the first term in the trace distance in  $S_2$, we have
\begin{align*}
\frac{1}{N_1N_2} & \sum_{\mu_1,\mu_2} (\text{id}\tensor  [M_1^{( n, \mu_1)}]\tensor [M_2^{( n, \mu_2)}]) (\Psi^\rho_{R^nA^nB^n})\\
&=\frac{1}{N_1N_2} \sum_{\mu_1,\mu_2} \sum_{u^n \in \TDeltan(U)}\sum_{v^n\in \TDeltan(V) } \Phi_{(u^n,v^n)} \tensor \tr_{AB}\Big\{ (\id \tensor A_{u^n}^{(\mu_1)}\tensor B_{v^n}^{(\mu_2)})\Psi^\rho_{R^nA^nB^n} \Big\} \\
&+\frac{1}{N_1N_2} \sum_{\mu_1,\mu_2} \sum_{ v^n\in \TDeltan(V)}  \Phi_{(u_0^n,v_0^n)} \tensor \tr_{AB}\Big\{ (\id\tensor (I-\hspace{-10pt}\sum_{u^n\in \TDeltan(U)}\hspace{-10pt} A^{(\mu_1)}_{u^n}) \tensor B^{(\mu_2)}_{v^n}) \Psi^\rho_{R^nA^nB^n}   \Big\}\\
&+\frac{1}{N_1N_2} \sum_{\mu_1,\mu_2}  \sum_{ u^n\in \TDeltan(U)}  \Phi_{(u_0^n,v_0^n)} \tensor \tr_{AB}\Big\{ (\id\tensor A^{(\mu_1)}_{u^n} \tensor (I-\hspace{-10pt}\sum_{v^n\in\TDeltan(V)}\hspace{-10pt} B^{(\mu_2)}_{v^n})) \Psi^\rho_{R^nA^nB^n}   \Big\}\\ \numberthis \label{eq:B_term}
&+\frac{1}{N_1N_2} \sum_{\mu_1,\mu_2}  \Phi_{(u_0^n,v_0^n)} \tensor \tr_{AB}\Big\{ (\id\tensor (I-\hspace{-10pt}\sum_{u^n\in\TDeltan(U)}\hspace{-10pt} A^{(\mu_1)}_{u^n}) \tensor (I-\hspace{-10pt}\sum_{v^n\in\TDeltan(V)}\hspace{-10pt} B^{(\mu_2)}_{v^n})) \Psi^\rho_{R^nA^nB^n}   \Big\}.
 \end{align*}
By replacing the terms in $S_2$ using the corresponding expansions from \eqref{eq:A_term} and \eqref{eq:B_term}, we observe that the second, third and fourth terms on the right hand side of \eqref{eq:A_term} get canceled with the corresponding terms on the right hand side of  \eqref{eq:B_term}.
This simplifies $S_2$ as
\begin{align}\nonumber
     & \frac{1}{N_1N_2}\sum_{\mu_1,\mu_2} \sum_{u^n \in \TDeltan(U)} \sum_{v^n \in \TDeltan(V)}  \norm{    ( \Phi_{(u^n,v^n)}-  \Phi_{e^{(\mu)}(u^n,v^n)}) \tensor \tr_{AB}\Big\{ (\id \tensor A_{u^n}^{(\mu_1)}\tensor B_{v^n}^{(\mu_2)})\Psi^\rho_{R^nA^nB^n} \Big\}}_1\\ \nonumber
     &= \frac{1}{N_1N_2}\sum_{\mu_1,\mu_2} 
     \sum_{u^n \in \TDeltan(U)} \sum_{v^n \in \TDeltan(V)} 
     \big\| \Phi_{(u^n,v^n)}- 
     \Phi_{e^{(\mu)}(u^n,v^n)}\big\|_1\times  \norm{\tr_{AB}\Big\{ (\id \tensor A_{u^n}^{(\mu_1)}\tensor B_{v^n}^{(\mu_2)})\Psi^\rho_{R^nA^nB^n} \Big\} }_1\\
&=\frac{1}{N_1N_2}\sum_{\mu_1,\mu_2} 
\sum_{u^n \in \TDeltan(U)} \sum_{v^n \in \TDeltan(V)} 
 \big\| \Phi_{(u^n,v^n)}-  \Phi_{e^{(\mu)}(u^n,v^n)}\big\|_1\times  \tr \Big\{ (\id \tensor A_{u^n}^{(\mu_1)}\tensor B_{v^n}^{(\mu_2)})\Psi^\rho_{R^nA^nB^n} \Big\} \nonumber\\
&=  \frac{1}{N_1N_2}\sum_{\mu_1,\mu_2}  \sum_{u^n \in \TDeltan(U)} \sum_{v^n \in \TDeltan(V)}  \big\|\Phi_{u^n, v^n} -\Phi_{e^{(\mu)}(u^n,v^n)} \big\|_1 \gamma^{(\mu_1)}_{u^n}\zeta^{(\mu_2)}_{v^n}\Omega_{u^n,v^n}, \nonumber
\end{align} 
where the first two equalities are obtained by using the definition of trace norm and the last equality follows from the definition of $  A_{u^n}^{(\mu_1)}$  and $ B_{v^n}^{(\mu_2)} $  as in \eqref{eq:A_uB_v},
with $\Omega_{u^n,v^n}\deq \tr\Big\{ \sqrt{\rho^{\tensor n}_A\tensor \rho^{\tensor n}_B}^{-1}(\LambdauA\tensor\LambdavB)  \sqrt{\rho^{\tensor n}_A\tensor \rho^{\tensor n}_B}^{-1} \rho^{\tensor n}_{AB}\Big\}$. This completes the proof.

\subsection{Proof of Proposition \ref{prop:Lemma for S_3}}\label{appx:proof of S_3} 
From  Proposition \ref{prop:Lemma for Simplification1}, and using the definitions $\gamma^{(\mu_1)}_{u^n}$ and $\zeta^{(\mu_2)}_{v^n}$,  $S_2$ can be simplified as
\begin{align}
 S_2  = \frac{(1-\varepsilon)(1-\varepsilon')}{(1+\eta)^22^{n(\tilde{R}_1+\tilde{R}_2)}N_1N_2}\hspace{-3pt}\sum_{u^n,v^n}\sum_{\mu_1,\mu_2} \sum_{l,k}  \11\{ U^{n, (\mu_1)}(l)\!=\!u^n, V^{n, (\mu_2)}(k)\!=\!v^n\}  \Omega_{u^n,v^n} \|\Phi_{u^n, v^n} -\Phi_{e^{(\mu)}(u^n,v^n)}\|_1. \nonumber
\end{align} 
For any $(u^n,v^n)$, the 1-norm above can be bounded from above by the following quantity
  \begin{align*}
2\times \11\bigg\{\exists (\tilde{u}^n, \tilde{v}^n, i, j): &({u}^n, {v}^n) \in \mathcal{B}^{(\mu_1)}_1(i)\times \mathcal{B}^{(\mu_2)}_2(j),\\
& (\tilde{u}^n, \tilde{v}^n) \in \mathcal{C}^{(\mu)}\medcap \mathcal{T}_\delta^{(n)}(UV), (\tilde{u}^n, \tilde{v}^n) \in \mathcal{B}^{(\mu_1)}_1(i)\times \mathcal{B}^{(\mu_2)}_2(j), (\tilde{u}^n, \tilde{v}^n)\neq (u^n, v^n)\bigg\}.
\end{align*}
Denoting such an indicator function by $\11^{(\mu_1,\mu_2)}(u^n, v^n)$,  $S_2$ can be bounded from above as $S_2 \leq S_3$, where  
\begin{equation}\label{eq:binning prob bound}
S_3 \deq \frac{(1-\varepsilon)(1-\varepsilon')}{(1+\eta)^2 2^{n(\tilde{R}_1+\tilde{R}_2)}}\sum_{l,k}  \sum_{(u^n,v^n)}\Omega_{u^n,v^n}\frac{2}{N_1N_2}\sum_{\mu_1,\mu_2}\11^{(\mu_1,\mu_2)}(u^n, v^n)  \11\{ U^{n, (\mu_1)}(l)=u^n, V^{n, (\mu_2)}(k)=v^n\}. \nonumber
\end{equation}
Next, we use the Markov inequality to show that $S_3\leq \epsilon$ with probability sufficiently close to 1. We first show that the expectation of $S_3$ can be made arbitrary small by taking $n$ large enough. For that we take the expectation of the indicator functions with respect to random variables $U^n$ and $V^n$ which are independent of each other and distributed according to the pruned distribution, defined in \eqref{def:prunedDist}. This gives us, for $u^n \in \TDeltan(U)$ and 
$v^n \in \TDeltan(V)$, 
\begin{align}
    \EE\Big[\11^{(\mu_1,\mu_2)} & (u^n, v^n)   \11\left\{ U^{n, (\mu_1)}(l)=u^n, V^{n, (\mu_2)}(k)=v^n\right\}\Big]\nonumber\\
 &\leq  \hspace{-10pt}\sum_{\substack{(\tilde{u}^n, \tilde{v}^n)\in \mathcal{T}_\delta^{(n)}(UV)\nonumber\\ (\tilde{u}^n, \tilde{v}^n)\neq (u^n,v^n) }}\sum_{i,j} \sum_{(\tilde{l},\tilde{k})\neq (l,k)} \hspace{-10pt} \EE\left[\11\left\{ ({u}^n, {v}^n) \in \mathcal{B}^{(\mu_1)}_1(i)\times \mathcal{B}^{(\mu_2)}_2(j)\right\} \11\Big\{ (\tilde{u}^n, \tilde{v}^n) \in \mathcal{B}^{(\mu_1)}_1(i)\times \mathcal{B}^{(\mu_2)}_2(j)\Big\}\right.\nonumber\\
 &\left. \hspace{20pt }\times \11\left\{ U^{n, (\mu_1)}(l)=u^n, V^{n, (\mu_2)}(k)=v^n\right\}\11\left\{ U^{n, (\mu_1)}(\tilde{l})=\tilde{u}^n, V^{n, (\mu_2)}(\tilde{k})=\tilde{v}^n\right\}\right]  \nonumber\\
    &\leq  \frac{\lambdauA\lambdavB}{(1-\varepsilon)^2(1-\varepsilon')^2}
    2^{-n(I(U;V)-\delta_1)} \Big[2^{n(\tilde{R}_1-R_1)}2^{n(\tilde{R}_2-R_2)}\nonumber\\
    &+2^{n(\tilde{R}_1-R_1)}+2^{n(\tilde{R}_2-R_2)}+2^{-n(S(U)-\delta_1)} 2^{n\tilde{R}_1}2^{n(\tilde{R}_2-R_2)}+2^{-n(S(V)-\delta_1)} 2^{n\tilde{R}_2}2^{n(\tilde{R}_1-R_1)}\Big]\nonumber\\
    &\leq 5\frac{\lambdauA\lambdavB}{(1-\varepsilon)^2(1-\varepsilon')^2} 2^{-n(I(U;V)-2\delta_1)}2^{n(\tilde{R}_1-R_1)}2^{n(\tilde{R}_2-R_2)}, \label{eq:expecIndics}
\end{align}
\add{where 
$\delta_1\ssearrow 0$ as $\delta\ssearrow 0$}. {The first inequality follows from the union bound}. {The second inequality follows by evaluating the expectation of the indicator functions and the last inequality follows from the inequalities $\tilde{R}_1< S(U)$ and $\tilde{R}_2< S(V)$}. This implies
\begin{align*}
\EE[S_3]& \leq  10\frac{2^{-n(I(U;V)-2\delta_1)}2^{n(\tilde{R}_1-R_1)}2^{n(\tilde{R}_2-R_2)}}{(1+\eta)^2(1-\varepsilon)(1-\varepsilon')}\sum_{u^n \in \TDeltan(U)} \sum_{v^n \in \TDeltan(V)}  \Omega_{u^n,v^n}\lambdauA\lambdavB . 
\end{align*}

We proceed using the following lemma.
\begin{lem}\label{lem:omega lambda}
	For $ \lambdauA $ and $ \lambdavB $ as defined in \eqref{eq:dist_canonicalEnsemble} and $ \Omega_{u^n,v^n} $ defined above,
we have $$ \sum_{u^n \in \TDeltan(U)} \sum_{v^n \in \TDeltan(V)} \Omega_{u^n,v^n}\lambdauA\lambdavB \leq 2^{n\delta_{AB}},$$ for some $ \delta_{AB} \searrow 0 $ as $ \delta \searrow 0. $
\end{lem}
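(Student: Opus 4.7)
The plan is to reduce the bilinear sum to a single operator-norm bound on each side. Since $\sqrt{\rho^{\tensor n}_A\tensor \rho^{\tensor n}_B}^{-1}=\sqrt{\rho_A^{\tensor n}}^{-1}\tensor\sqrt{\rho_B^{\tensor n}}^{-1}$, setting
\[
X^A\deq\sum_{u^n\in\TDeltan(U)}\lambdauA\,\sqrt{\rho_A^{\tensor n}}^{-1}\LambdauA\sqrt{\rho_A^{\tensor n}}^{-1},
\]
and $X^B$ analogously, linearity of the trace yields
\begin{align*}
\sum_{u^n\in\TDeltan(U)}\sum_{v^n\in\TDeltan(V)}\lambdauA\lambdavB\,\Omega_{u^n,v^n}=\tr\left\{\left(X^A\tensor X^B\right)\rho_{AB}^{\tensor n}\right\}\leq \|X^A\|_\infty\|X^B\|_\infty,
\end{align*}
where the inequality uses that $X^A\tensor X^B$ is a positive operator and $\rho_{AB}^{\tensor n}$ is a state. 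It therefore suffices to show $\|X^A\|_\infty\leq 2^{n\delta'}$ with $\delta'\searrow 0$ as $\delta\searrow 0$, and symmetrically for $X^B$.

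For the operator bound on $X^A$, factor the outer $\PihatA$ out of the sum to write $\sum_{u^n\in\TDeltan(U)}\lambdauA\LambdauA=(1-\varepsilon)\PihatA\sigma^{A'}\PihatA$. Using that the conditional typical projector $\PiuA$ commutes with the product state $\rhohatuA$ and that $\sum_{u^n}\lambdauA\rhohatuA=\rho_A^{\tensor n}$, one obtains $\sigma^{A'}\leq\frac{1}{1-\varepsilon}\PiA\rho_A^{\tensor n}\PiA$; the typical-subspace upper bound $\rho_A^{\tensor n}\PiA\leq 2^{-n(S(\rho_A)-\delta)}\PiA$ then gives $\sigma^{A'}\leq\frac{2^{-n(S(\rho_A)-\delta)}}{1-\varepsilon}\PiA$. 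Since the range of $\PihatA$ lies in that of $\PiA$ (as $\sigma^{A'}$ is supported on the range of $\PiA$ by construction), this propagates to
\begin{align*}
\sum_{u^n\in\TDeltan(U)}\lambdauA\LambdauA\leq 2^{-n(S(\rho_A)-\delta)}\PihatA.
\end{align*}

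Conjugation by $\sqrt{\rho_A^{\tensor n}}^{-1}$ preserves this ordering; combined with $\PihatA\leq\PiA$ and the commutativity of $\PiA$ with $\rho_A^{\tensor n}$, one obtains $\sqrt{\rho_A^{\tensor n}}^{-1}\PihatA\sqrt{\rho_A^{\tensor n}}^{-1}\leq \PiA(\rho_A^{\tensor n})^{-1}\leq 2^{n(S(\rho_A)+\delta)}\PiA$, where the last step invokes the typical-subspace lower bound $\rho_A^{\tensor n}\PiA\geq 2^{-n(S(\rho_A)+\delta)}\PiA$. Putting the two exponents together yields $\|X^A\|_\infty\leq 2^{2n\delta}$ and analogously $\|X^B\|_\infty\leq 2^{2n\delta}$, so the lemma holds with $\delta_{AB}=4\delta\searrow 0$. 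The main obstacle is precisely this final sandwich: because $\PihatA$ does not commute with $\rho_A^{\tensor n}$, the entropy exponents cannot be canceled by eigenvalue manipulation on $\PihatA$ directly; the leverage is to lift the projector to $\PiA$ via $\PihatA\leq\PiA$, which in turn relies on carefully tracking the support of $\sigma^{A'}$ through the nested construction involving $\PiA$, $\PiuA$, and the pruning.
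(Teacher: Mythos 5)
Your proof is correct and follows essentially the same route as the paper's: both rewrite the bilinear sum as a trace of $X^A\tensor X^B$ against $\rho_{AB}^{\tensor n}$, bound each factor operator-wise by sandwiching $\sigma^{A'}$ between $\PiA$ via $\PiuA\rhohatuA\PiuA\leq\rhohatuA$, and invoke the upper/lower typical-subspace eigenvalue bounds on $\rho_A^{\tensor n}\PiA$. Your final sandwich (lifting $\PihatA$ to $\PiA$ before inverting $\rho_A^{\tensor n}$ so that the two commuting operators can be manipulated directly) is in fact a touch more careful than the paper's stated final bound $\leq 2^{2n\delta_A}\PihatA$, which should more precisely read $\leq 2^{2n\delta_A}\Pi_{\rho_A}$; either way the ensuing trace bound goes through unchanged.
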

\begin{proof}
\label{appx:proof of omega lambda}
Firstly, note that 
\begin{align}
\sum_{\substack{u^n\in\TDeltaN(U)\\ v^n\in\TDeltaN(V)}}\hspace{-10pt}\Omega_{u^n,v^n}\lambdauA\lambdavB = \text{Tr}\bigg\{\bigg [\sqrt{\rho_{A}^{\tensor n}}^{-1}\Big(\hspace{-15pt}\sum_{u^n\in\TDeltaN(U)}\hspace{-15pt}\lambdauA\LambdauA\Big)\sqrt{\rho_{A}^{\tensor n}}^{-1}\hspace{-10pt}\tensor\sqrt{\rho_{B}^{\tensor n}}^{-1}\Big(\hspace{-15pt}\sum_{v^n \in \TDeltaN(V)}\hspace{-10pt}\lambdavB\LambdavB\Big) \sqrt{\rho_{B}^{\tensor n}}^{-1} \bigg] \rho^{\tensor n}_{AB}\bigg\}. \label{eq:omegalambdaUlambdaV}
\end{align} 
Consider,
\begin{align}
\sum_{u^n\in\TDeltaN(U)}\lambdauA\LambdauA & = \PihatA \Pi_{\rho_A}\left (\sum_{u^n\in\TDeltaN(U)}\lambdauA\PiuA\rhohatuA\PiuA\right )\Pi_{\rho_A}\PihatA \nonumber \\
& \leq \PihatA \Pi_{\rho_A}\left (\sum_{u^n}\lambdauA\rhohatuA\right )\Pi_{\rho_A}\PihatA \nonumber \\ &  = \PihatA \Pi_{\rho_A}\rho_{A}^{\tensor n}\Pi_{\rho_A}\PihatA \leq 2^{-n(S(\rho_A)-\delta_{A})}\PihatA\Pi_{\rho_A}\PihatA  = 2^{-n(S(\rho_A)-\delta_{A})}\Pi_{\rho_A}\PihatA\Pi_{\rho_A}. \nonumber
\end{align}
where the first inequality is obtained by using $ \PiuA\rhohatuA\PiuA \leq  \rhohatuA$ for all $ u^n \in \TDeltaN(U) $ and then by adding terms belonging to $\mathcal{U}^n\backslash \TDeltaN(U)$ into the summation. 
The subsequent inequality and the equality, follows from the properties of a typical projector with $ \delta_{A} \searrow 0 $ as $ \delta \searrow 0 $, and the commutativity of $ \PihatA $ and $ \Pi_{\rho_A} $, respectively.
This implies, 
\begin{align}
\sqrt{\rho_{A}^{\tensor n}}^{-1}\left(\sum_{u^n\in \TDeltaN(U)}\lambdauA\LambdauA\right)\sqrt{\rho_{A}^{\tensor n}}^{-1} \leq  2^{-n(S(\rho_A)-\delta_{A})}\sqrt{\rho_{A}^{\tensor n}}^{-1}\Pi_{\rho_A}\PihatA\Pi_{\rho_A}\sqrt{\rho_{A}^{\tensor n}}^{-1} \leq 2^{2n\delta_{A}}\PihatA, \label{eq:ineqrhoA}
\end{align}
where the last inequality again appeals to  the fact that $ \Pi_{\rho_{A}} $ and $ \PihatA $ commute. 
Similarly, using the same arguments above for the operators acting on $ \mathcal{H}_B $, we have
\begin{align} 
\sqrt{\rho_{B}^{\tensor n}}^{-1}\left(\sum_{v^n\in\TDeltaN(V)}\lambdavB\LambdavB\right)\sqrt{\rho_{B}^{\tensor n}}^{-1} \leq 2^{2n\delta_{B}}\PihatB, \label{eq:ineqrhoB}
\end{align}
where $ \delta_{B} \searrow 0 $ as $ \delta \searrow 0 $.
Using \eqref{eq:ineqrhoA} and \eqref{eq:ineqrhoB} in \eqref{eq:omegalambdaUlambdaV}, gives 
\begin{align}
\sum_{u^n,v^n}\Omega_{u^n,v^n}\lambdauA\lambdavB  \leq 2^{2n(\delta_{A}+\delta_{B})}\Tr{\left (\PihatB\tensor\PihatB\right ) \rho^{\tensor n}_{AB}} \leq 2^{2n(\delta_{A}+\delta_{B})}\Tr{ \rho^{\tensor n}_{AB}} = 2^{2n(\delta_{A}+\delta_{B})}, \nonumber
\end{align}
substituting $ \delta_{AB} = 2(\delta_{A}+\delta_{B}) $ gives the result.

\end{proof}
As a result, given any  $\epsilon\in (0,1)$, the above expectation can be made less than $ \frac{10 \epsilon}{(1-\varepsilon)(1-\varepsilon')}$  for large enough $n$ provided that $(\tilde{R}_1-R_1)+(\tilde{R}_2-R_2) \leq I(U;V) - 2\delta_1 - \delta_{AB} - \delta$. From Markov-inequality this implies that $S_3\leq \sqrt{\epsilon}$ with probability at least $1-\frac{10\sqrt{\epsilon}}{(1-\varepsilon)(1-\varepsilon')}$.

\subsection{Proof of Proposition \ref{prop:Lemma for S_234}}\label{appx:proof of S_234} 
We bound $\widetilde{S}$ as $\widetilde{S}\leq \widetilde{S}_2 +\widetilde{S}_3 +\widetilde{S}_4$, where
\begin{align*}
    \widetilde{S}_2 \deq &\left\|\frac{1}{N_1N_2}\sum_{\mu_1,\mu_2}\sum_{i>0} \sqrt{\rho_{AB}^{\tensor n}}\left (\Gamma^{A, ( \mu_1)}_i\tensor \Gamma^{B, (\mu_2)}_0\right )\sqrt{\rho_{AB}^{\tensor n}} P^n_{Z|U,V}(z^n|u^{n}_{0},v^{n}_{0})\right\|_1, \nonumber \\
\widetilde{S}_3 \deq  &\left\|\frac{1}{N_1N_2}\sum_{\mu_1,\mu_2}\sum_{j>0} \sqrt{\rho_{AB}^{\tensor n}}\left (\Gamma^{A, ( \mu_1)}_0\tensor \Gamma^{B, (\mu_2)}_j\right )\sqrt{\rho_{AB}^{\tensor n}} P^n_{Z|U,V}(z^n|u^{n}_{0},v^{n}_{0})\right\|_1, \nonumber \\
  \widetilde{S}_4 \deq  &\left\|\frac{1}{N_1N_2}\sum_{\mu_1,\mu_2} \sqrt{\rho_{AB}^{\tensor n}}\left (\Gamma^{A, ( \mu_1)}_0\tensor \Gamma^{B, (\mu_2)}_0\right )\sqrt{\rho_{AB}^{\tensor n}} P^n_{Z|U,V}(z^n|u^{n}_{0},v^{n}_{0})\right\|_1. \nonumber  
\end{align*}
\noindent{\bf Analysis of $\widetilde{ S}_2 $:} We have 
\begin{align}
\widetilde{S}_{2} & \leq  \frac{1}{N_1N_2}\sum_{\mu_1,\mu_2}\sum_{z^n}P^n_{Z|U,V}(z^n|u_0^n,v_0^n)\left\| \sqrt{\rho_{AB}^{\tensor n}}\left (\sum_{i>0}\Gamma^{A, ( \mu_1)}_i\tensor \Gamma^{B, (\mu_2)}_0\right )\sqrt{\rho_{AB}^{\tensor n}} \right\|_1 \nonumber \\
& \leq  \frac{1}{N_1N_2}\sum_{\mu_1,\mu_2}\sum_{u^n}\left\| \sqrt{\rho_{AB}^{\tensor n}}\left (A_{u^n}^{(\mu_1)}\tensor \Gamma^{B, (\mu_2)}_0\right )\sqrt{\rho_{AB}^{\tensor n}} \right\|_1 \nonumber \\
& \stackrel{w.h.p}{\leq}  \frac{1}{N_1N_2}\sum_{\mu_1,\mu_2}\left\| \sqrt{\rho_{B}^{\tensor n}} \Gamma^{B, (\mu_2)}_0\sqrt{\rho_{B}^{\tensor n}} \right\|_1 \nonumber \\
& = \frac{1}{N_2}\sum_{\mu_2}\left\| \sum_{v^n}\lambdavB\rhohatvB - \sum_{v^n}\sqrt{\rho_{B}^{\tensor n}} B^{(\mu_2)}_{v^n}\sqrt{\rho_{B}^{\tensor n}} \right\|_1 \nonumber \\
& \leq \frac{1}{N_2}\sum_{\mu_2}\left\| \sum_{v^n}\lambdavB\rhohatvB - \cfrac{(1-\varepsilon')}{(1+\eta)}\cfrac{1}{2^{n\tilde{R}_2}}\sum_{k=1}^{2^{n\tilde{R}_2}}\hat{\rho}^{B}_{V^{n,(\mu_2)}_{k}} \right\|_1 +\underbrace{ \frac{1}{N_2}\sum_{\mu_2}\sum_{v^n}\zeta_{v^n}^{(\mu_2)}\left\| \rhohatvB -  \LambdavB \right\|_1}_{\widetilde{S}_{22}}, \label{eq:S2simplification}
\end{align}
where the first inequality uses triangle inequality, the second uses the fact that $ \sum_{i>0}\Gamma^{A, ( \mu_1)}_i = \sum_{u^n}A_{u^n}^{(\mu_1)} $. The next inequality follows by using Lemma \ref{lem:Separate} where we use the result that with high probability 
(letting $E_1$ denote this event) we have $ \sum_{u^n}A_{u^n}^{(\mu_1)} \leq I $, given that $ \tilde{R}_1 \geq I(U;RB)_{\sigma_{2}}. $
Finally, the last inequality follows again from triangle inequality. 


\noindent Regarding the first term in \eqref{eq:S2simplification} 
using Lemma \ref{lem:nf_SoftCovering} we claim that for all sufficiently large $ n $, the term can be made arbitrarily small with high probability
(letting $E_2$ denote this event), given the rate $ \tilde{R}_2 $ satisfies $ \tilde{R}_2 \geq I(V;RA)_{\sigma_{2}} $ where $ \sigma_{2} $ is as defined in the statement of the theorem. Note that the  requirements we obtain on $ \tilde{R}_1$ and $ \tilde{R}_2$ here were already imposed earlier
in Section \ref{subsec:stochastic_dist_POVM}.
And as for the second term we use the gentle measurement lemma (as in \eqref{eq:gentleMeasurementVn}) and bound its expected value as
\begin{align}
\EE\left[\frac{1}{N_2}\sum_{\mu_2}\sum_{v^n}\zeta_{v^n}^{(\mu_2)}\left\| \rhohatvB -  \LambdavB \right\|_1\right]  = \sum_{v^n \in \TDeltaN(V)}\cfrac{\lambdavB}{(1+\eta)}\left\| \rhohatvB -  \LambdavB \right\|_1 \leq \epsilon_{\scriptscriptstyle \widetilde{S}_{2}},\nonumber
\end{align}
where the inequality is based on the repeated usage of the average gentle measurement lemma by setting $ \epsilon_{\scriptscriptstyle\widetilde{S}_{2}} = \frac{(1-\varepsilon')}{(1+\eta)} (2\sqrt{\varepsilon'_B} + 2\sqrt{\varepsilon''_B}) $ with  $ \epsilon_{\scriptscriptstyle \widetilde{S}_{2}} \searrow 0 $ as $ \delta \searrow 0 $ and $ \varepsilon'_B = \varepsilon' + 2\sqrt{\varepsilon'} $ and $ \varepsilon''_B = 2\varepsilon' + 2\sqrt{\varepsilon'} $ (see (35) in \cite{wilde_e} for more details ).
Now, by using Markov inequality $\PP(E_3) \leq \sqrt{\epsilon_{\widetilde{S}_2}}$, where 
$E_3 \deq \{\widetilde{S}_{22} \geq \sqrt{\epsilon_{\widetilde{S}_2}}\}$. Hence, using union bound on the three events $E_1, E_2$ and $E_3$, $\widetilde{S}_2$ can be made arbitrarily small, for sufficiently large $n$, with high probability.

\noindent{\bf Analysis of $\widetilde{S}_3 $:} Due to the symmetry in $\widetilde{S}_2 $ and $ \widetilde{S}_3 $, the analysis of $ \widetilde{S}_{3} $ follows very similar arguments as that of $\widetilde{S}_2 $ and hence we skip it.

\noindent{\bf Analysis of $ \widetilde{S}_4$:} We have 
\begin{align}
\widetilde{S}_{4} & \leq  \frac{1}{N_1N_2}\sum_{\mu_1,\mu_2}\sum_{z^n}P^n_{Z|U,V}(z^n|u^{n}_{0},v^{n}_{0})\left\| \sqrt{\rho_{AB}^{\tensor n}}\left (\Gamma^{A, ( \mu_1)}_0\tensor \Gamma^{B, (\mu_2)}_0\right )\sqrt{\rho_{AB}^{\tensor n}} \right\|_1  \nonumber \\
& \leq  \frac{1}{N_1N_2}\sum_{\mu_1,\mu_2}\left\| \sqrt{\rho_{AB}^{\tensor n}}\left (\Gamma^{A, ( \mu_1)}_0\tensor I\right )\sqrt{\rho_{AB}^{\tensor n}} \right\|_1  +  \frac{1}{N_1N_2}\sum_{\mu_1,\mu_2}\sum_{v^n}\left\| \sqrt{\rho_{AB}^{\tensor n}}\left (\Gamma^{A, ( \mu_1)}_0\tensor B_{v^n}^{(\mu_{2})}\right )\sqrt{\rho_{AB}^{\tensor n}} \right\|_1, \label{eq:S4inequalities}
\end{align}
where the inequalities above are obtained by a straight forward substitution and use of triangle inequality.
With the above constraints on $ \tilde{R}_1 $ and $ \tilde{R}_2 $, we have $ 0 \le \Gamma^{A, ( \mu_1)}_0 \leq I $  and  $ 0 \leq \Gamma^{B, (\mu_2)}_0 \leq I $. This simplifies the first term in \eqref{eq:S4inequalities} as 
\begin{align}
\frac{1}{N_1N_2}\sum_{\mu_1,\mu_2}\left\| \sqrt{\rho_{AB}^{\tensor n}}\left (\Gamma^{A, ( \mu_1)}_0\tensor I\right )\sqrt{\rho_{AB}^{\tensor n}} \right\|_1  = \frac{1}{N_1}\sum_{\mu_1}\left\| \sqrt{\rho_{A}^{\tensor n}}\left (\Gamma^{A, ( \mu_1)}_0\right )\sqrt{\rho_{A}^{\tensor n}} \right\|_1. \nonumber
\end{align}
 Similarly, the second term in \eqref{eq:S4inequalities} simplifies using Lemma \ref{lem:Separate} as
 \begin{align}
 \frac{1}{N_1N_2}\sum_{\mu_1,\mu_2}\sum_{v^n}\left\| \sqrt{\rho_{AB}^{\tensor n}}\left (\Gamma^{A, ( \mu_1)}_0\tensor B_{v^n}^{(\mu_{2})}\right )\sqrt{\rho_{AB}^{\tensor n}} \right\|_1 \leq \frac{1}{N_1}\sum_{\mu_1}\left\| \sqrt{\rho_{A}^{\tensor n}}\left (\Gamma^{A, ( \mu_1)}_0\right )\sqrt{\rho_{A}^{\tensor n}} \right\|_1.\nonumber
 \end{align}
 Using these simplifications, we have
 \begin{align}
 \widetilde{S}_4 & \leq \frac{2}{N_1}\sum_{\mu_1}\left\| \sqrt{\rho_{A}^{\tensor n}}\left (\Gamma^{A, ( \mu_1)}_0\right )\sqrt{\rho_{A}^{\tensor n}} \right\|_1. \nonumber
 \end{align}
 The above expression is similar to the one obtained in the simplification of $\widetilde{S}_2 $ and hence we can bound $\widetilde{S}_4 $ using the same constraints as $ \widetilde{S}_2 $, for sufficiently large $ n $.
 
 \subsection{Proof of Proposition \ref{prop:Lemma for S_12} }\label{appx:proof of S_12}
Recalling $ S_{2} $, we have 
\begin{align}
S_{2}& \leq \frac{1}{N_1N_2} \sum_{\mu_1,\mu_2}\sum_{z^n}\sum_{u^n,v^n}\left|P^n_{Z|U,V}(z^n|u^n,v^n)- P^n_{Z|U,V}\left (z^n|e^{(\mu)}(u^n,v^n)\right )\right| \left \|   \sqrt{\rho_{AB}^{\tensor n}}\left (A_{u^n}^{(\mu_1)} \tensor B_{v^n}^{(\mu_2)}\right ) \sqrt{\rho_{AB}^{\tensor n}} \right \|_1 \nonumber \\
& \leq \frac{1}{N_1N_2} \sum_{\mu_1,\mu_2}\sum_{u^n,v^n}\sum_{z^n}\left|P^n_{Z|U,V}(z^n|u^n,v^n)- P^n_{Z|U,V}\left (z^n|e^{(\mu)}(u^n,v^n)\right )\right| \gamma_{u^n}^{(\mu_1)}\zeta_{v^n}^{(\mu_2)}\Omega_{u^n,v^n} \nonumber \\
& \leq \cfrac{(1-\varepsilon)(1-\varepsilon')}{(1+\eta)^2}\frac{2}{N_1N_2} 2^{-n(\tilde{R}_1+\tilde{R}_2)} \sum_{\mu_1,\mu_2}\sum_{l,k} \sum_{u^n,v^n}\Omega_{u^n,v^n} \mathbbm{1}^{(\mu_1,\mu_2)}(u^n,v^n)\IndiU1\IndiV1, \nonumber
\end{align}
where $ \Omega_{u^n,v^n} $ and $ \mathbbm{1}^{(\mu_1,\mu_2)}(u^n,v^n)$ are defined as
\begin{align*}
    \Omega_{u^n,v^n}& \deq \tr\Big\{ \sqrt{\rho^{\tensor n}_A\tensor \rho^{\tensor n}_B}^{-1}(\LambdauA\tensor\LambdavB)  \sqrt{\rho^{\tensor n}_A\tensor \rho^{\tensor n}_B}^{-1} \rho^{\tensor n}_{AB}\Big\},\\
     \mathbbm{1}^{(\mu_1,\mu_2)}(u^n,v^n) & \deq \11\bigg\{\exists (\tilde{u}^n, \tilde{v}^n, i, j): ({u}^n, {v}^n) \in \mathcal{B}^{(\mu_1)}_1(i)\times \mathcal{B}^{(\mu_2)}_2(j),\\
& \hspace{20pt} (\tilde{u}^n, \tilde{v}^n) \in \mathcal{C}^{(\mu)}\medcap \mathcal{T}_\delta^{(n)}(UV), (\tilde{u}^n, \tilde{v}^n) \in \mathcal{B}^{(\mu_1)}_1(i)\times \mathcal{B}^{(\mu_2)}_2(j), (\tilde{u}^n, \tilde{v}^n)\neq (u^n, v^n)\bigg\}.
\end{align*}
We know from the simplification in \eqref{eq:expecIndics} that  
\begin{align}
\EE\left [\mathbbm{1}^{(\mu_1,\mu_2)}(u^n,v^n) \IndiU1\IndiV1 \right ] \nonumber \leq \frac{5~\lambdauA\lambdavB}{(1-\varepsilon)^2(1-\varepsilon')^2} 2^{-n(I(U;V)-2\delta_1)}2^{n(\tilde{R}_1-R_1)}2^{n(\tilde{R}_2-R_2)}. \nonumber
\end{align}
Substituting this in the expression for $ S_{12} $ gives
\begin{align}
\EE{[S_{2}]} & \leq 10 \cfrac{2^{-n(I(U;V)-2\delta_1)}2^{n(\tilde{R}_1-R_1)}2^{n(\tilde{R}_2-R_2)}}{(1+\eta)^2(1-\varepsilon)^2(1-\varepsilon')^2} \sum_{u^n,v^n}\Omega_{u^n,v^n}\lambdauA\lambdavB \nonumber \\
& \leq 10 \cfrac{2^{-n(I(U;V)-2\delta_1-\delta_{AB})}2^{n(\tilde{R}_1-R_1)}2^{n(\tilde{R}_2-R_2)}}{(1+\eta)^2(1-\varepsilon)^2(1-\varepsilon')^2}, \nonumber 
\end{align}
where the second inequality above uses Lemma \ref{lem:omega lambda}. Therefore, if $ \tilde{R}_{1} + \tilde{R}_{2} - R_1 -R_2 \leq I(U;V)_{\sigma_{3}} - 2\delta_1 -\delta_{AB} -\delta $, then we have
$ \EE{[S_{2}]} \leq 10\frac{ 2^{-n\delta}}{(1+\eta)^2(1-\varepsilon)(1-\varepsilon')} $. The proposition follows from Markov Inequality.

\subsection{Proof of Proposition \ref{prop:Lemma for Q2}}\label{appx:proof of Q2} 
We start by adding and subtracting the following terms in $ Q_2 $ 
\begin{align}
(i) &\sum_{u^n,v^n}\sqrt{\rho_{AB}^{\tensor n}}  \left(\bar{\Lambda}^A_{u^n}\tensor\bar{\Lambda}^B_{v^n} \right )\sqrt{\rho_{AB}^{\tensor n}} P^n_{Z|U,V}(z^n|u^n,v^n) \nonumber \\
(ii)&\sum_{u^n,v^n}\frac{1}{N_2}\sum_{\mu_2=1}^{N_2}\sqrt{\rho_{AB}^{\tensor n}}  \left( \bar{\Lambda}^A_{u^n} \tensor \cfrac{\zeta^{(\mu_2)}_{v^n}}{\lambdavB}\bar{\Lambda}^B_{v^n} \right )\sqrt{\rho_{AB}^{\tensor n}} P^n_{Z|U,V}(z^n|u^n,v^n) \nonumber \\
(iii)&\sum_{u^n,v^n}  \frac{1}{N_1N_2} \sum_{\mu_1,\mu_2}\sqrt{\rho_{AB}^{\tensor n}}\left(  A_{u^n}^{(\mu_1)} \tensor \cfrac{\zeta^{(\mu_2)}_{v^n}}{\lambdavB}\bar{\Lambda}^B_{v^n} \right )\sqrt{\rho_{AB}^{\tensor n}} P^n_{Z|U,V}(z^n|u^n,v^n). \nonumber
\end{align}
This gives us $ Q_2 \leq Q_{21} + Q_{22} + Q_{23} + Q_{24} $, where
\begin{align}
Q_{21} & \deq \sum_{z^n} \left\|\sum_{u^n,v^n}\sqrt{\rho_{AB}^{\tensor n}}  \left ( \left (\frac{1}{N_1} \sum_{\mu_1=1}^{N_1}A_{u^n}^{(\mu_1)}\right ) \tensor \bar{\Lambda}^B_{v^n} -\bar{\Lambda}^A_{u^n}\tensor\bar{\Lambda}^B_{v^n} \right )\sqrt{\rho_{AB}^{\tensor n}} P^n_{Z|U,V}(z^n|u^n,v^n)\right\|_1, \nonumber\\
Q_{22} & \deq \sum_{z^n} \left\| \sum_{u^n,v^n}\sqrt{\rho_{AB}^{\tensor n}}  \left (\bar{\Lambda}^A_{u^n}\tensor\bar{\Lambda}^B_{v^n}  - \bar{\Lambda}^A_{u^n} \tensor\left (\frac{1}{N_2}\sum_{\mu_2=1}^{N_2} \cfrac{\zeta^{(\mu_2)}_{v^n}}{\lambdavB}\bar{\Lambda}^B_{v^n}\right )\right )\sqrt{\rho_{AB}^{\tensor n}} P^n_{Z|U,V}(z^n|u^n,v^n)\right\|_1, \nonumber	\\
Q_{23} & \deq \sum_{z^n} \left\| \sum_{u^n,v^n}\!\sqrt{\rho_{AB}^{\tensor n}}  \left (\bar{\Lambda}^A_{u^n} \tensor\left (\frac{1}{N_2}\!\sum_{\mu_2=1}^{N_2}\! \cfrac{\zeta^{(\mu_2)}_{v^n}}{\lambdavB}\bar{\Lambda}^B_{v^n}\right ) - \frac{1}{N_1N_2} \sum_{\mu_1,\mu_2}\!\!A_{u^n}^{(\mu_1)} \tensor \cfrac{\zeta^{(\mu_2)}_{v^n}}{\lambdavB}\bar{\Lambda}^B_{v^n} \!\right)\sqrt{\rho_{AB}^{\tensor n}} P^n_{Z|U,V}(z^n|u^n,v^n)\right\|_1, \nonumber\\
Q_{24} & \deq \sum_{z^n} \left\| \sum_{u^n,v^n}\frac{1}{N_1N_2} \sum_{\mu_1,\mu_2}\sqrt{\rho_{AB}^{\tensor n}}  \left( A_{u^n}^{(\mu_1)} \tensor \cfrac{\zeta^{(\mu_2)}_{v^n}}{\lambdavB}\bar{\Lambda}^B_{v^n} - A_{u^n}^{(\mu_1)} \tensor B_{v^n}^{(\mu_2)} \right)\sqrt{\rho_{AB}^{\tensor n}} P^n_{Z|U,V}(z^n|u^n,v^n)\right\|_1. \nonumber
\end{align}

We start by analyzing $ Q_{21} $. Note that $ Q_{21} $ is exactly same as $ Q_{1} $ and hence using the same rate constraints as $ Q_{1} $, this term can be bounded. Next, consider $ Q_{22} $. Substitution of $ \zeta^{(\mu_2)}_{v^n} $ gives
\begin{align}
Q_{22} & = \left\| \sum_{u^n,v^n,z^n}\lambdaAB\rhohatuAvB \tensor P^n_{Z|U,V}(z^n|u^n,v^n)\ketbra{z^n} \right .\nonumber \\ 
& \hspace{20pt} \left . - \cfrac{(1-\varepsilon')}{(1+\eta)}\cfrac{1}{2^{n(\tilde{R}_2 + C_2)}}\sum_{\mu_2,k} \sum_{u^n,v^n,z^n}\IndiV1\cfrac{\lambdaAB}{\lambdavB} \rhohatuAvB \tensor P^n_{Z|U,V}(z^n|u^n,v^n)\ketbra{z^n}\right\|_1, \nonumber
\end{align}
where the equality uses the triangle inequality for block operators.
From here on, we use Lemma \ref{lem:nf_SoftCovering} to bound $Q_{22}$.
\begin{align}
 \mathcal{T}_{v^n} = \sum_{u^n,z^n}\cfrac{\lambdaAB}{\lambdavB} \rhohatuAvB \tensor P^n_{Z|U,V}(z^n|u^n,v^n)\ketbra{z^n}, \nonumber
\end{align}
Note that $ \mathcal{T}_{v^n} $ can be written in tensor product form as
$ \mathcal{T}_{v^n} = \bigtensor_{i=1}^{n}\mathcal{T}_{v_{i}} $. This simplifies $ Q_{22} $ as
\begin{align}
Q_{22} & =  \left\| \sum_{v^n}\lambdavB\mathcal{T}_{v^n}  - \cfrac{(1-\varepsilon')}{(1+\eta)}\cfrac{1}{2^{n(\tilde{R}_2 + C_2)}}\sum_{\mu_2,k} \mathcal{T}_{V^{n,(\mu_{2})}(k)}\right\|_1.\nonumber
\end{align}
Lemma \ref{lem:nf_SoftCovering} gives us, for any given $ \epsilon_{\scriptscriptstyle Q_{22}}, \delta_{\scriptscriptstyle Q_{22}} \in (0,1)$, if  
\begin{align}
\tilde{R}_2+C_2 > S\left (\sum_{v \in \mathcal{V}}\lambda_v^B\mathcal{T}_{v}\right ) - \sum_{v \in \mathcal{V}}\lambda_v^B S(\mathcal{T}_{v}) = I(RZ;V)_{\sigma_4},
\end{align}
then $ \PP(Q_{22} \geq \epsilon_{\scriptscriptstyle Q_{22}} ) \leq \delta_{\scriptscriptstyle Q_{22}} $ for sufficiently large $n$.

\noindent Now, we move on to consider $ Q_{23} $. Taking expectation with respect to the codebook $\mathcal{C}^{(\mu)} = (\mathcal{C}_1^{(\mu_1)}, \mathcal{C}_2^{(\mu_2)})$  gives
%
\begin{align}
\EE\left[Q_{23}\right ] &\leq \EE_{\mathcal{C}}\left[\sum_{z^n, v^n}\frac{1}{N_2}\sum_{\mu_2=1}^{N_2} \cfrac{\zeta^{(\mu_2)}_{v^n}}{\lambdavB} \left\| \sum_{u^n}\sqrt{\rho_{AB}^{\tensor n}} \left (\bar{\Lambda}^A_{u^n} \tensor\bar{\Lambda}^B_{v^n}\right)\sqrt{\rho_{AB}^{\tensor n}}P^n_{Z|U,V}(z^n|u^n,v^n)   \right .\right . \nonumber \\ 
& \hspace{100pt} \left.\left . - \sum_{u^n}\sqrt{\rho_{AB}^{\tensor n}} \left(\frac{1}{N_1} \sum_{\mu_1}A_{u^n}^{(\mu_1)} \tensor \bar{\Lambda}^B_{v^n} \right)\sqrt{\rho_{AB}^{\tensor n}} P^n_{Z|U,V}(z^n|u^n,v^n)\right\|_1\right] \nonumber\\
&= \EE_{\mathcal{C}_1}\left[\sum_{z^n, v^n}\frac{1}{N_2}\sum_{\mu_2=1}^{N_2} \cfrac{\EE_{\mathbb{C}_2}\left[\zeta^{(\mu_2)}_{v^n}\right ]}{\lambdavB} \left\| \sum_{u^n}\sqrt{\rho_{AB}^{\tensor n}} \left (\bar{\Lambda}^A_{u^n} \tensor\bar{\Lambda}^B_{v^n}\right)\sqrt{\rho_{AB}^{\tensor n}}P^n_{Z|U,V}(z^n|u^n,v^n)   \right . \right .\nonumber \\ 
& \hspace{100pt} \left .\left. - \sum_{u^n}\sqrt{\rho_{AB}^{\tensor n}} \left(\frac{1}{N_1} \sum_{\mu_1}A_{u^n}^{(\mu_1)} \tensor \bar{\Lambda}^B_{v^n} \right)\sqrt{\rho_{AB}^{\tensor n}} P^n_{Z|U,V}(z^n|u^n,v^n)\right\|_1\right] \nonumber\\
&= \EE_{\mathcal{C}_1}\left[\sum_{z^n, v^n}\cfrac{1}{(1+\eta)} \left\| \sum_{u^n}\sqrt{\rho_{AB}^{\tensor n}} \left (\bar{\Lambda}^A_{u^n} \tensor\bar{\Lambda}^B_{v^n}\right)\sqrt{\rho_{AB}^{\tensor n}}P^n_{Z|U,V}(z^n|u^n,v^n)   \right . \right .\nonumber \\ 
& \hspace{100pt} \left .\left. - \sum_{u^n}\sqrt{\rho_{AB}^{\tensor n}} \left(\frac{1}{N_1} \sum_{\mu_1}A_{u^n}^{(\mu_1)} \tensor \bar{\Lambda}^B_{v^n} \right)\sqrt{\rho_{AB}^{\tensor n}} P^n_{Z|U,V}(z^n|u^n,v^n)\right\|_1\right] \nonumber\\
&= \EE\left[\cfrac{J}{(1+\eta)}\right], \nonumber
\end{align}
where the inequality above is obtained by using the triangle inequality, and the first equality follows as $\mathcal{C}_1^{(\mu_1)}$ and $\mathcal{C}_2^{(\mu_2)}$ are generated independently. 
The last equality follows from the definition of $ J $ as in \eqref{def:J}. Hence, we use the result obtained in bounding $ \EE[J]. $

Finally, we consider $ Q_{24} $.
\begin{align}
Q_{24} & \leq \sum_{u^n,v^n}\sum_{z^n} P^n_{Z|U,V}(z^n|u^n,v^n)\left\| \frac{1}{N_1N_2} \sum_{\mu_1,\mu_2}\sqrt{\rho_{AB}^{\tensor n}}  \left( A_{u^n}^{(\mu_1)} \tensor \cfrac{\zeta^{(\mu_2)}_{v^n}}{\lambdavB}\bar{\Lambda}^B_{v^n}\right )\sqrt{\rho_{AB}^{\tensor n}} \right .\nonumber \\
& \hspace{70pt} - \left .\frac{1}{N_1N_2} \sum_{\mu_1,\mu_2}\sqrt{\rho_{AB}^{\tensor n}}\left( A_{u^n}^{(\mu_1)} \tensor \zeta^{(\mu_2)}_{v^n} \left (\sqrt{\rho_{B}}^{-1}\LambdavB\sqrt{\rho_{B}}^{-1}\right ) \right)\sqrt{\rho_{AB}^{\tensor n}} \right\|_1 \nonumber \\
& \leq \frac{1}{N_2} \sum_{\mu_2}\sum_{u^n,v^n}\zeta^{(\mu_2)}_{v^n}\left\| \sqrt{\rho_{AB}^{\tensor n}}  \left( \frac{1}{N_1} \sum_{\mu_1}A_{u^n}^{(\mu_1)} \tensor \cfrac{1}{\lambdavB}\bar{\Lambda}^B_{v^n}\right )\sqrt{\rho_{AB}^{\tensor n}} \right .\nonumber \\
& \hspace{70pt} - \left .\sqrt{\rho_{AB}^{\tensor n}}\left( \frac{1}{N_1} \sum_{\mu_1}A_{u^n}^{(\mu_1)} \tensor  \left (\sqrt{\rho_{B}}^{-1}\LambdavB\sqrt{\rho_{B}}^{-1}\right ) \right)\sqrt{\rho_{AB}^{\tensor n}} \right\|_1, \nonumber
\end{align}
where the inequalities above are obtained by substituting in the definition of $ B_{v^n}^{(\mu_2)} $  and using multiple triangle inequalities.
Taking expectation of $ Q_{24} $ with respect to the second codebook generation, we get
\begin{align}
\EE_{\mathcal{C}_2}{\left[Q_{24}\right]} & 
\leq \sum_{u^n}\sum_{\substack{v^n\in \TDeltaN(V)}}\cfrac{\lambdavB}{(1+\eta)}\Bigg\| \sqrt{\rho_{AB}^{\tensor n}}  \Big( \frac{1}{N_1} \sum_{\mu_1}A_{u^n}^{(\mu_1)} \tensor \cfrac{1}{\lambdavB}\bar{\Lambda}^B_{v^n}\Big)\sqrt{\rho_{AB}^{\tensor n}} \nonumber \\
& \hspace{100pt} -  \sqrt{\rho_{AB}^{\tensor n}}\Big( \frac{1}{N_1} \sum_{\mu_1}A_{u^n}^{(\mu_1)} \tensor  \left (\sqrt{\rho_{B}}^{-1}\LambdavB\sqrt{\rho_{B}}^{-1}\Big ) \right)\sqrt{\rho_{AB}^{\tensor n}} \Bigg\|_1 \nonumber \\
& \stackrel{w.h.p}{\leq}  \sum_{v^n\in \TDeltaN(V)}\cfrac{\lambdavB}{(1+\eta)}  \left \| \sqrt{\rho_{B}^{\tensor n}}  \left (\cfrac{1}{\lambdavB}\bar{\Lambda}^B_{v^n} - \sqrt{\rho_{B}}^{-1}\LambdavB\sqrt{\rho_{B}}^{-1} \right)\sqrt{\rho_{B}^{\tensor n}} \right \|_1 \nonumber \\
& = \sum_{v^n\in \TDeltaN(V)}\cfrac{\lambdavB}{(1+\eta)}  \left \|\rhohatvB - \LambdavB \right \|_1\leq \cfrac{(1-\varepsilon')}{(1+\eta)} (2\sqrt{\varepsilon'_B} + 2\sqrt{\varepsilon''_B}) = \epsilon_{\scriptscriptstyle Q_{24}}, \label{eq:gentleMeasurementVn}
\end{align}
where the second inequality above follows by using Lemma \ref{lem:Separate} and the fact that $ \sum_{ u^n}\frac{1}{N_1} \sum_{\mu_1}A_{u^n}^{(\mu_1)} \leq I,$ with high probability, 
and the last inequality uses the result based on the average gentle measurement lemma by setting $ \epsilon_{\scriptscriptstyle Q_{24}} = \frac{(1-\varepsilon')}{(1+\eta)} (2\sqrt{\varepsilon'_B} + 2\sqrt{\varepsilon''_B}) $ with  $ \epsilon_{\scriptscriptstyle Q_{24}} \searrow 0 $ as $ \delta \searrow 0 $ and $ \varepsilon'_B = \varepsilon' + 2\sqrt{\varepsilon'} $ and $ \varepsilon''_B = 2\varepsilon' + 2\sqrt{\varepsilon'} $ (see (35) in \cite{wilde_e} for more details ). This completes the proof for $ Q_{24} $ and hence for all the terms corresponding to $ Q_{2} $.

\bibliographystyle{IEEEtran}

\bibliography{IEEEabrv,references}
\end{document}